
\documentclass[acmsmall,screen]{acmart} 
\settopmatter{printacmref=false}

\acmJournal{PACMPL}

\setcopyright{none}

\bibliographystyle{ACM-Reference-Format}
\citestyle{acmauthoryear}   
\usepackage{thm-restate}
\usepackage{listings}


\usepackage{mathpartir}
\usepackage{stmaryrd}
\usepackage[]{todonotes}

\usepackage{xspace}


\theoremstyle{acmplain}
\newtheorem{postulate}{Postulate}[section]

\usepackage{booktabs}   
\usepackage{subcaption} 

\newcommand{\Ga}{\Gamma}

\newcommand{\mypara}[1]{ {\noindent {\bf #1.}} }

\newcommand{\smallkind}{\mathcal{K}}

\newcommand{\btnote}[1]{{\color{red} #1}}

\newcommand\cproprestrict[1]{\overline{#1}}

\newcommand\stateconf[2]{\langle #1;  #2 \rangle}
\newcommand\stateconfH[1]{\stateconf{H}{#1}}
\newcommand{\red}{\longrightarrow}

\newcommand\sep{\mathrel{|}}

\newcommand\bnfor{\mathrel|}

\newcommand{\llb}{\llbracket}
\newcommand{\rrb}{\rrbracket}

\newcommand\kwd[1]{\mathbf{#1}}

\newcommand{\m}[1]{\mathsf{#1}}

\newcommand\typevarset{\ensuremath{\mathcal{X}}}
\newcommand\varset{\ensuremath{\mathcal{V}}}
\newcommand\nameset{\ensuremath{\mathcal N}}

\newcommand\kind[1]{\ensuremath{\mathsf{#1}}}
\newcommand\kref[2]{\{#1 \sep #2\}}

\newcommand\kname{\kind{Nm}}
\newcommand\krecord{\kind{Rec}}
\newcommand\kcollection{\kind{Col}}
\newcommand\kreference{\kind{Ref}}
\newcommand\kfunction{\kind{Fun}}
\let\kfun=\kfunction
\newcommand\ktype{\kind{Type}}
\newcommand\kpolyfun{\kind{Gen}}

\newcommand\kforall[1]{\Pi #1.\,}
\newcommand\tabs[1]{\mathsf{\lambda} #1.\,}

\newcommand{\nonemp}[1]{\neg\m{empty}(#1)}

\newcommand\basetype[1]{\ensuremath{\mathsf{#1}}}
\newcommand\casearrow{\Rightarrow}
\newcommand\caseor{\;|\;}
\newcommand{\colcase}[3]{\kwd{case}\,#1\,\kwd{of}\,(\emptycol{}
  \Rightarrow #2\caseor #3)}

\newcommand\treccons[3]{\trecord{#1 : #2}@ #3}
\newcommand\trecnil{\trecord{}}
\newcommand\trecord[1]{\boldsymbol{\langle}#1\boldsymbol{\rangle}}

\newcommand\tbool{\basetype{Bool}}

\newcommand\tfun{\mathrel{\rightarrow}}

\newcommand\tcol[1]{{#1}^{\star}}

\newcommand\tref[1]{\kwd{ref}\, #1}

\newcommand\tfix[4]{\boldsymbol{\mu} #1 : (\kforall{ #4 {:} #2 } #3). \tabs{#4 {::} #2}}

\newcommand\tkindof[5]{\kwd{if}\, #1 :: #3\, \kwd{as}\, #2 \casearrow #4 \, \kwd{else}\, #5 }
\newcommand\tkindofP[4]{\tkindof{#1}{t}{#2}{#3}{#4}}

\newcommand{\lb}[1]{\kwd{lab}(#1)}

\newcommand{\one}{\mathbf{1}}
\newcommand{\unitelem}{\diamond}

\newcommand{\elim}[2]{\mathit{elim}_{#1}(#2)}

\newcommand{\hdlabel}[1]{\kwd{headLabel}(#1)}
\newcommand{\hdtype}[1]{\kwd{headType}(#1)}
\newcommand{\tl}[1]{\kwd{tail}(#1)}
\newcommand{\refof}[1]{\kwd{refOf}(#1)}
\newcommand{\inst}[1]{\kwd{tmap}(#1)}
\newcommand{\colof}[1]{\kwd{colOf}(#1)}
\newcommand{\codom}[1]{\kwd{img}(#1)}
\newcommand{\dom}[1]{\kwd{dom}(#1)}
%
%

\newcommand\typeabs[2]{\Lambda #1. #2 }
\newcommand{\typeapp}[2]{#1[#2]}

\newcommand\record[1]{\langle #1\rangle}

\newcommand\reccons[3]{\record{#1 = #2}@ #3 }

\newcommand{\vrec}[3]{\boldsymbol{\mu} #1 {:} #2 . #3}

\newcommand\mkref[1]{\kwd{ref}\, #1}
\newcommand\deref[1]{{!#1}}

\newcommand\ite[3]{\kwd{if}\,  #1\, \kwd{then}\, #2\, \kwd{else}\, #3}

\newcommand\const[1]{\mathit{#1}}
\newcommand\true{\const{true}}
\newcommand\false{\const{false}}

\newcommand\emptycol[1]{\varepsilon}

\newcommand\abs[1]{\lambda #1.\,}

\newcommand{\tmhdlabel}[1]{\kwd{recHeadLabel}(#1)}
\newcommand{\tmhdterm}[1]{\kwd{recHeadTerm}(#1)}
\newcommand{\tmtl}[1]{\kwd{recTail}(#1)}

\newcommand{\tmcolhd}[1]{\kwd{colHead}(#1)}
\newcommand{\tmcoltl}[1]{\kwd{colTail}(#1)}

\newcommand\subst[2]{\{{#1}/{#2}\}}


\begin{document}

\title[]{Refinement Kinds}         
\subtitle{Type-safe Programming with Practical Type-level Computation}



\author{Lu\'{i}s Caires}
\orcid{0000-0002-3215-6734}
\affiliation{
  \department{Departamento de Inform\'{a}tica}
  \institution{NOVA-LINCS, FCT-NOVA, Universidade Nova de Lisboa}
  \country{Portugal}}
\email{lcaires@fct.unl.pt}

\author{Bernardo Toninho}
\orcid{0000-0002-0746-7514}
\affiliation{
  \department{Departamento de Inform\'{a}tica}
  \institution{NOVA-LINCS, FCT-NOVA, Universidade Nova de Lisboa}
  \country{Portugal}}
\email{btoninho@fct.unl.pt}


\begin{abstract}
  This work introduces the novel concept of {\em kind refinement},
  which we develop in the context of an explicitly
  polymorphic ML-like language with type-level computation.
  Just as type
  refinements embed rich specifications by means of comprehension
  principles expressed by predicates over values in the type domain,
  kind refinements provide rich {\em kind} specifications by means of
  predicates over {\em types} in the kind domain.
  By leveraging our powerful refinement kind discipline, types in our
  language are not just used to statically classify program
  expressions and values, but also conveniently manipulated as
  tree-like data structures, with their kinds refined by logical
  constraints on such structures.  Remarkably, the resulting typing
  and kinding disciplines allow for powerful forms of type reflection,
  ad-hoc polymorphism and type meta-programming, which are
  often found in modern software development, but not typically expressible in a
  type-safe manner in general purpose languages.
  We validate our approach both formally and pragmatically
  by establishing the standard meta-theoretical results of type safety and
  via a prototype implementation of a kind checker, type checker and
  interpreter for our language.

\end{abstract}



\begin{CCSXML}
<ccs2012>
<concept>
<concept_id>10003752.10003790.10011740</concept_id>
<concept_desc>Theory of computation~Type theory</concept_desc>
<concept_significance>500</concept_significance>
</concept>
<concept>
<concept_id>10011007.10011006.10011008.10011009.10011012</concept_id>
<concept_desc>Software and its engineering~Functional languages</concept_desc>
<concept_significance>300</concept_significance>
</concept>
<concept>
<concept_id>10011007.10011006.10011050.10011017</concept_id>
<concept_desc>Software and its engineering~Domain specific languages</concept_desc>
<concept_significance>300</concept_significance>
</concept>
</ccs2012>
\end{CCSXML}

\ccsdesc[500]{Theory of computation~Type theory}
\ccsdesc[300]{Software and its engineering~Functional languages}
\ccsdesc[300]{Software and its engineering~Domain specific languages}

\keywords{Refinement Kinds, Typed Meta-Programming, Type-level Computation, Type Theory}  

\maketitle

\section{Introduction}\label{sect:intro}

Current software development practices increasingly rely on many forms of
automation, often based on tools that generate code from various types
of specifications, leveraging the various reflection and
meta-programming facilities that modern programming languages provide. 
A simple example would be a function that given 
any record type would produce a factory of mutable instances of the given record type.
As a more involved and useful example
consider a code generator that given as input an XML database
schema, produces all the code needed to create and manipulate a database instance of such schema
with some appropriate database connector.   

Automated
code generation, domain specific languages, and meta-programming are
increasingly becoming productivity drivers for the software industry,
while also making programming more accessible to non-experts,
and, more generally, increasing the level of abstraction expressible in languages
and tools for program construction.
Meta-programming is better
supported by so-called dynamic languages and related frameworks, such
as Ruby and Ruby on Rails, JavaScript and Node.js, but is also
present in static languages such as Java, Scala, Go and F\#, that
provide support for reflection and other facilities,
allowing both code and types to be manipulated as data by
programs.

Unfortunately, meta-programming constructs and idioms aggressively
challenge the safety guarantees of static typing, which becomes
especially problematic given that meta-programs are notoriously hard to
test for correctness.
This challenge is then the key motivation for our paper, which
introduces for the first time the concept of {\em refinement kinds}
and illustrates how the associated discipline cleanly supports static
type checking of type-level reflection, parametric and ad-hoc
polymorphism, which can all be combined to implement interesting
meta-programming idioms.

Refinement kinds are a natural transposition of the well-known concept
of refinement types (of values)
\cite{DBLP:journals/toplas/BengtsonBFGM11,DBLP:conf/pldi/RondonKJ08,
  DBLP:conf/esop/VazouRJ13} to the realm of kinds (of types).  Several
systems of refinement types have been proposed in the literature,
generally motivated as a pragmatic compromise between usability and
the expressiveness of full-fledged dependent types, which require
proof objects to be explicitly constructed by programmers.
Our work aims to show that the arguably natural notion of introducing
refinements in the kind structure allows us to cleanly support
sophisticated statically typed meta-programming concepts, which we
illustrate in the context of a higher-order polymorphic
$\lambda$-calculus with imperative constructs, chosen as a convenient
representative for languages with higher-order store. Moreover, by
leveraging the stratification between types and kinds, our design
shows that arguably advanced type-level features can be integrated
into a general purpose language without the need to fundamentally
alter the language's type system and its associated rules.


Just as refinement types support expressive type specifications by
comprehension principles expressed by {\em predicates over values} in
the type domains (typically implemented by SMT decidable Floyd-Hoare
assertions \cite{DBLP:journals/tse/RushbyOS98}), refinement kinds
support rich and flexible kind specifications by means of
comprehension principles expressed by {\em predicates over types} in
the kind domains.  They also naturally give rise to a notion of
subkinding by entailment in the refinement logic.  For example, we
introduce a least upper bound kind for each kind, from
which more concrete kinds and types may be defined by refinement,
adding an unusual degree of plasticity to subkinding.

Crucially, types in our language may be reflectively manipulated as
first-class (abstract-syntax) labelled trees (cf. XML data), both
statically and at runtime. Moreover, the deduction of relevant
structural properties of such tree representations of types is
amenable to rather efficient implementation, unlike properties on the
typical value domains (e.g., integers, arrays) manipulated by
mainstream languages, and easier to automate using off-the-shelf SMT
solvers (e.g. \cite{DBLP:conf/tacas/MouraB08,DBLP:conf/cav/BarrettCDHJKRT11}).
Remarkably, even if types in our system can essentially be manipulated
by type-level functions and operators as abstract-syntax trees, our
system statically ensures the sound inhabitation of the outcomes of
type-level computations by the associated program-level terms,
enforcing type safety. This allows our language to express challenging
reflection idioms in a type-safe way, that we have no clear
perspective on how to cleanly and effectively embed in extant
type theories in a fully automated way.

To make the design of our framework more concrete, we briefly detail
our treatment of record types.  Usually, a record type is represented
by a tuple of label-and-type pairs, subject to the constraint that all
the labels must be pairwise distinct
(e.g. see~\cite{DBLP:conf/popl/HarperP91}). In order to support more
effective manipulation of record types by type-level functions, record
types in our theory are represented by values of a list-like data
structure: the record type constructors are the type of empty records
$\trecnil$ and the ``cons'' cell $\treccons{L}{T}{R}$, which
constructs the record type obtained by adding a field declaration
$\trecord{\mathit{L}:T}$ to the record type $R$.

The record type destructors are functions $\hdlabel{R}$, $\hdtype{R}$
and $\tl{R}$, which apply to any non-empty record type $R$. As will be
shown later, the more usual record field projection operator $r.L$
and record type field projection operator $T.L$ are
definable in our language using suitable meta-programs. In our system,
record labels (cf. names) are type and term-level first-class values
of kind $\kname$.  Record types also have their own kind, dubbed
$\krecord$. As we will see, our theory provides a range of {\em basic} kinds
that specialize the kind of all types $\ktype$ via subkinding,
which can be further specialized via kind refinement.

For example, we may define the record type
$\basetype{Person}\triangleq
\treccons{name}{\basetype{String}}{\treccons{age}{\basetype{Int}}{\trecnil}}$,
which we conveniently abbreviate by
$\trecord{\mathit{name}:\basetype{String};\mathit{age}:\basetype{Int}}$.
We
then have that $\hdlabel{\basetype{Person}}=\mathit{name}$,
$\hdtype{\basetype{Person}}=\basetype{String}$ and
$\tl{\basetype{Person}}=\treccons{\mathit{age}}{\basetype{Int}}{\trecnil}$.
The kinding
of the $\treccons{L}{T}{R}$ type constructor may be clarified in the
following type-level function $\m{addFieldType}$:
\[
\begin{array}{ll}
    \m{addFieldType} :: \kforall {l{::}\kname}\kforall {t{::}\ktype} \kforall{r{::}\kref{s{::}\krecord}{l\not\in \lb{s}}}
  \krecord\\
  \m{addFieldType} \triangleq \lambda l{::}\kname.\lambda t{::}\ktype.\lambda
  r{::}\kref{s{::}\krecord}{l\not\in \lb{s}}.\treccons{l}{t}{r}\\
\end{array}
\]
The $\m{addFieldType}$ {\em type-level} function takes a label $l$, a
type $t$ and any record type $r$ that does not contain label $l$, and
returns the expected extended record type of kind $\krecord$. Notice
that the {\em kind} of all record types that do not contain label $l$ is
represented by the refinement kind
$\kref{s{::}\krecord}{l\not\in \lb{s}}$.

A refinement kind in our system is noted
$\kref{t{::}\smallkind}{\varphi(t)}$, where $\smallkind$ is a bas
kind, and the logical formula $\varphi(t)$ expresses a constraint on
the type $t$ that inhabits $\smallkind$.  As in 
refinement type systems
\cite{DBLP:journals/toplas/BengtsonBFGM11,DBLP:conf/icfp/SwamyCFSBY11,DBLP:conf/icfp/VazouSJVJ14},
our underlying logic of
refinements includes a (decidable) theory for the various finite
tree-like data types used to schematically represent type
specifications, as is the case of our record-types-as-lists,
function-types-as-pairs (i.e.~a pair of a domain and an image type),
and so on. The kind refinement rule is thus expressed by
  \[
  \inferrule*[right=(kref)]
    {\Ga \models \varphi\{T/t\} \quad \Ga \vdash T :: \smallkind }
    {\Ga \vdash T :: \kref{t{::}\smallkind}{\varphi}}
  \]
where $\Gamma\models\varphi$ denotes entailment in the refinement
logic. Basic formulas of our refinement logic include propositional
logic, equality, and some useful predicates and functions on types,
including the primitive type constructors and destructors, such as
$\lb{R}$ (record label set), $L\in S$ (label set membership), $S\#S'$
(label set apartness), $R@S$ (concatenation), $\dom{F}$ (function
domain selector). Interestingly,
given the presence of equality in refinements, it is
always possible to define for any type $T$ of kind $\smallkind$ a
precise singleton kind of the form
$\kref{t::\smallkind}{t= T}$. As another simple
example, consider the kind $\m{Auto}$ of automorphisms, defined as
$\kref{t::\kfun}{\dom{t} = \codom{t}}$.

A use of the type-level function $\m{addFieldType}$ given above is,
for instance, the definition of the following {\em term-level}
polymorphic record extension function
\[
\begin{array}{ll}
  \m{addField} : \forall {l{::}\kname}.\forall {t{::}\ktype}.
  \forall{r{::}\kref{s{::}\krecord}{l\not\in \lb{s}}}.t \tfun  r\tfun
  \m{addFieldType} \;l\; t\; r \\
   \m{addField} \triangleq \Lambda l{::}\kname.\Lambda t{::}\ktype.\Lambda
  r{::}\kref{s{::}\krecord}{l\not\in \lb{s}}. \lambda x{:}t.\lambda y{:}r.\reccons{l}{x}{y}\\
\end{array}
\]
The $\m{addField}$ function takes a label $l$, a type $t$,
a record type $r$ that does not contain label $l$, and values
of types $t$ and $r$, respectively, returning a record of type
$\m{addFieldType} \;l\; t\; r$.

The type-level and term-level functions $\m{addFieldType}$ and
$\m{addField}$ respectively illustrate some of the key insights of our
type theory, namely the use of types and their refined kinds as
specifications that can be manipulated as tree-like structures by
programs in a fully type-safe way. For instance, the following judgment,
expressing the correspondence between the term-level computation
$\m{addField}\;l\;t\;\;r\;x\;y$ and the type-level computation
$\m{addFieldType} \;l\; t\; r $,
is derivable:
\[
  l{:}\kname, t{:}\ktype,
r{:}\kref{s{::}\krecord}{l\not\in \lb{s}}, x{:}t,y{:}r
\vdash \m{addField}\;l\;t\;r\;x\;y : \m{addFieldType} \;l\; t\; r 
\]
An instance of this judgement yields:
\[
\vdash
\m{addField}\;\mathit{name}\;\basetype{String}\;\record{age :
  \basetype{Int}}\;\mbox{``jack''}\;\record{age = 20} : \m{addFieldType}
\;\mathit{name}\; \basetype{String}\; \record{age : \basetype{Int}}
\]
Noting that
$\record{\mathit{age} : \basetype{Int}}::
\kref{s{::}\krecord}{\mathit{name} \not\in \lb{s}} $ is derivable since
$\mathit{name} \not\in \lb{\record{\mathit{age} :
    \basetype{Int}}}$ is provable in the refinement logic, we
have the following term and type-level evaluations:
$$
\begin{array}{ll}
  (\m{addField}\;\mathit{name}\;\basetype{String}\;\record{age : \basetype{Int}}\;\mbox{``jack''}\;\record{age = 20}) \to^* \record{\mathit{name}=\mbox{``jack''};\mathit{age} = 20}\\
  (\m{addFieldType} \;\mathit{name}\; \basetype{String}\; \record{age : \basetype{Int}})
  \equiv \record{\mathit{name}:\basetype{String};\mathit{age}: \basetype{Int}}
\end{array}
$$
Using the available refinement principles, our system can also derive the
following more precise kinding for the type $\m{addFieldType} \;l\; t\; r $:
\[
 l{:}\kname, t{:}\ktype,
  r{:}\kref{s{::}\krecord}{l\not\in \lb{s}} \vdash \m{addFieldType} \;l\; t\; r :: 
  \kref{ s{::}\krecord}{s = \treccons{l}{t}{r} }
\]

\mypara{Contributions}
We summarise the main contributions of this work:
\begin{itemize}
  \item We illustrate the concept of refinement kinds,
showing how it supports the flexible and clean definition of
statically typed meta-programs through several examples
(Section~\ref{sec:introexamples}).
\item We technically develop our refinement kind system
(Section~\ref{sec:system}), targeting a
polymorphic $\lambda$-calculus (Section~\ref{sec:pl}) with records,
references, collections and supporting type-level computation over
types of all kinds, thus capturing the essence of an ML-like language.
\item We establish the key meta-theoretical result (Section~\ref{sec:metatheory})
of type safety through 
type unicity, type preservation and progress
(Theorems~\ref{thm:unicity},~\ref{thm:preserv} and~\ref{thm:progress}, respectively).

\item We report on our implementation of a prototype kind and
type-checker for our theory (Section~\ref{sec:eqreasoning}), which
validates the examples of our paper 
and the overall feasibility of our approach.
%
%

\item We give a detailed overview of related work
(Section~\ref{sec:rw}), and offer some concluding remarks and
discussion of future work
(Section~\ref{sec:conc}).

\end{itemize}

Appendices~\ref{app:rules},~\ref{app:opsemrules}
  and~\ref{app:proofs} list omitted definitions of the type theory,
  its semantics and proof outlines, respectively.


\section{Programming with Refinement Kinds}
\label{sec:introexamples}

Before delving into the technical intricacies of our theory in
Section~\ref{sec:system} and beyond, we illustrate the various
features and expressiveness of our theory through a series of examples
that showcase how our language supports
challenging (from a static typing perspective)
meta-programming idioms.

\mypara{Generating Mutable Records}
We begin with a simple higher-order meta-program that computes a
``generator'' for mutable records from a specification of its
representation type, expressed as an arbitrary record type.  Consider
the following definition of the (recursive) function $\m{genConstr}$:
$$
\begin{array}{ll}
  \m{genConstr} \triangleq 
  \Lambda S{::}\kref{r{::}\krecord}{\neg\m{empty}(r)}.
 \Lambda V{::}\kref{v{::}\krecord}{\lb{v} \# \lb{S}}.\lambda v{:}V.\\
 \hspace{0.5cm}  \lambda x{:}\hdtype{S}.
   \ite{\nonemp{\tl{S}}}{\\
  \hspace{1.5cm}\m{genConstr}\;\tl{S}\;\treccons{\hdlabel{S}}{\tref{\hdtype{S}}}{V}\;
   \reccons{\hdlabel{S}}{\mkref{x}}{v}\\\hspace{2.9cm}}{

 \reccons{\hdlabel{S}}{\mkref{x}}{v}}\\
 
\end{array}
$$
Given a non-empty record type $S$, function
$\m{genConstr}$ returns a constructor {\em function} for a mutable
record whose fields are specified by
$S$.  We use a pragmatic notation to express recursive definitions
(coinciding with that of our implementation),
which in our formal core language is represented by an explicit
structural recursion construct.  Parameters $V$ and
$v$ are accumulating parameters that track intermediate types, values
and a disjointness invariant on those types during computation (for
simplicity, we generate the record fields in reverse order). 

Intuitively, and recovering the record type $\basetype{Person}$
from above, 
$\m{genConstr}\;\basetype{Person}\;\trecord{}\;\record{}$ evaluates to
a value equivalent to $\lambda x{:}\basetype{String}.
\lambda y{:}\basetype{Int}.
\record{\mathit{age} = \mkref{y};\mathit{name} = \mkref{x}}$.

Notice that function
$\m{genConstr}$ accepts any non-empty record type
$S$, and proceeds by recursion on the structure of 
$S$, as a list of label-type pairs. The parameter
$S$ holds the types of the fields still pending for addition to the
final record type, parameter
$V$ holds the types of the fields already added to the final record
type, and
$v$ holds the already built mutable record value.  To properly call
$\m{genConstr}$, we ``initialize'' $V$ with
$\trecord{}$ (i.e.~the empty record {\em type}), and
$v$ to $\record{}$.
Moreover, the refined kind of
$V$ specifies the label apartness constraint needed to type check the
recursive call of
$\m{genConstr}$, in particular, given
$\lb{V}\#\lb{S}$, we can automatically deduce that $\hdlabel{S}\not\in
\lb{V}$, needed to kind check
$\treccons{\hdlabel{S}}{\tref{\hdtype{S}}}{V}$; and
$\lb{\treccons{\hdlabel{S}}{\tref{\hdtype{S}}}{V}}\#\lb{\tl{S}}$,
required to kind and type check the recursive call. In our language,
$\m{genConstr}$ can be typed as follows:
$$
\m{genConstr}:\forall S{::}\kref{r{::}\krecord}{\nonemp{r}}.
\forall V{::}\kref{v{::}\krecord}{\lb{v}\#\lb{S}}.
V \rightarrow (\m{GType}\,S\,V) 
$$
where $\m{GType}$ is the (recursive) type-level function such that
$$
\begin{array}{ll}
\m{GType} :: \kforall {S{::}\kref{r{::}\krecord}{\nonemp{r}}}\kforall {V{::}\kref{v{::}\krecord}{\lb{v}\#\lb{S}}}
\kfun\\
  \m{GType}\triangleq \\
\  \hspace{0.5cm}\lambda {S{::}\kref{r{::}\krecord}{\nonemp{r}}}.\\
\  \hspace{0.5cm}\lambda {V{::}\kref{v{::}\krecord}{\lb{v}\#\lb{S}}}. \\
\  \hspace{0.5cm}\hdtype{S}\tfun\ite{\nonemp{\tl{S}}}{ \\
  \hspace{3.5cm}\m{GType}\;\tl{S}\;\treccons{\hdlabel{S}}{\tref{\hdtype{S}}}{V}}
  {\\
  \hspace{4cm} \treccons{\hdlabel{S}}{\tref{\hdtype{S}}}{V}}\\
\\
\end{array}
$$
We can see that, in general, the type-level application
$\m{GType}\;\trecord{L_1:T_1; ...; L_n:T_n}\;\trecord{}$ computes 
the type
$T_1 \tfun ... \tfun T_n \tfun \trecord{L_n:\tref{T_n}; ...;
  L_1:\tref{T_1}}$.  In particular, we have
\[
\m{genConstr}\;\basetype{Person}\;\trecord{}\;\record{}
:\basetype{String}\tfun \basetype{Int}\tfun \trecord{\mathit{age} = \tref{\basetype{Int}};\mathit{name} = \tref{\basetype{String}}} 
\]

\mypara{From Record Types to XML Tables}
As a second example, we develop a generic function
$\m{MkTable}$ that generates and formats an XML table for any record
type, inspired by the example in Section 2.2 of
\cite{DBLP:conf/pldi/Chlipala10}, but where refinement kinds allow for
extreme simplicity. We start by
introducing an auxiliary type-level
$\m{Map}$ function, that computes the record type obtained from a
record type $R$ by applying a type transformation
$G$ (of higher-order kind) to the type of each field of $R$.
\[
  \begin{array}{ll}
    \m{Map} :: \kforall {G{::}(\kforall {X::\ktype}\ktype)}
\kforall {R{::}\krecord{}}
\kref{r::\krecord{}}{\lb{r} = \lb{R}}\\
\m{Map}\triangleq \lambda {G{::}(\kforall {X::\ktype}\ktype)}.\lambda {R{::}\krecord}.\\
   \hspace{1.0cm}\ite{\nonemp{R}}{\treccons{\hdlabel{R}}{G\;\hdtype{R}}{(\m{Map}\;G\;\tl{R}})}{\trecord{}}\\

\end{array}
\]

The logical constraint $\lb{r} = \lb{R}$ expresses that the result of
$\m{Map}\; G \; R$ has exactly the same labels as record type
$R$. This implies that $\hdlabel{R}\not\in\lb{\m{Map}\;G\;\tl{R}}$ in
the recursive call, thus allowing the ``cons'' to be well-kinded.  We
now define:
\[
\begin{array}{ll}
\m{XForm}&:: \kforall{t::\ktype} \ktype{}\\
  \m{XForm}&\triangleq \lambda {t{::}\ktype}.\trecord{\mathit{tag}:\basetype{String};
\mathit{toStr}:t\tfun\basetype{String}}\\

  \\
\m{MkTableType}&::\Pi {r{::}\krecord{}}.\kref{r::\krecord{}}{\lb{r} = \lb{R}}\\  
\m{MkTableType}&\triangleq \lambda {r{::}\krecord{}}.\m{Map}\;\m{XForm}\;r\\

  \\
  \m{MkTable}&:\forall {R{::}\krecord{}}.(\m{MkTableType}\;R) \tfun R \tfun \basetype{String}\\
  \m{MkTable}&\triangleq \Lambda {R{::}\krecord}.
               \lambda M{:}\m{MkTableType}\;R.\lambda r{:}R.\\
& \hspace{0.3cm} \ite{\nonemp{R}}{\\
& \hspace{0.6cm}\mathtt{``{<}tr{>}{<}th{>}"}+M.\tmhdlabel{M}.\mathit{tag}+\mathtt{``{<}/th{>}{<}td{>}"}+\\
& \hspace{0.6cm}M.\tmhdlabel{M}.\mathit{toStr}\;r.\tmhdlabel{M} + \mathtt{``{<}/td{>}{<}/tr{>}"}+\\
& \hspace{0.6cm}\m{MkTable}\;\tl{R}\;\tmtl{M}\;\tmtl{r}
\hspace{0.3cm} }
{\mathtt{``"}}

\end{array}
\]
It is instructive to discuss why and how this code is well-typed,
witnessing the expressiveness of refinement kinds, despite their
conceptual simplicity (which can be judged by the arguably
parsimonious nature of the definitions above).  Let us first consider
the expression $M.\tmhdlabel{M}.\mathit{tag}$. Notice that, by
declaration, $R{::}\krecord$ and $r{:}R$.  However, the expression
under consideration is to be typed under the assumption that
$\nonemp{R}$, which is added to the current set of refinement
assumptions while typing the $\kwd{then}$ branch. Using
$TT$ for the type of $M$, since
$\m{MkTableType}\;R ::\kref{r{::}\krecord{}}{\lb{r} = \lb{R}}$, by
refinement we have that $\lb{TT}=\lb{R}$ and thus
$\nonemp{TT}$, allowing $\tmhdlabel{M}$ to be defined.
Since $M:\m{MkTableType}\;R$ we have
\[
\begin{array}{ll}
(\m{MkTableType}\;R) \equiv
(\m{Map}\;\m{XForm}\;R) \equiv 
\treccons{\hdlabel{R}}{\m{XForm}\;\hdtype{R}}{(\m{Map}\;G\;\tl{R})}
\end{array}
\]
We thus derive $\hdlabel{TT}\equiv\hdlabel{R}$. 
Then 
\[
\begin{array}{ll}
 \!\!\!\hdtype{\m{MkTableType}\;R} \!\equiv\! 
  \m{XForm}\;\hdtype{R} \!\equiv\! \trecord{\mathit{tag}:\basetype{String}; \mathit{toStr}:\hdtype{R}\tfun\basetype{String}}
\end{array}
\]
Hence $M.\hdlabel{M}.\mathit{tag} : \basetype{String}$.
By a similar reasoning, we conclude $r.\tmhdlabel{M} : \hdtype{R}$. In
Section~\ref{sec:system} we show how refinements and equalities
derived therein are integrated into typing and kinding. Moreover, in
Section~\ref{sec:eqreasoning} we detail how refinements can be
represented and discharged via SMT solvers in order to make fully precise
the reasoning sketched above.

\mypara{Generating Getters and Setters}
As a final introductory example, we develop a generic function
$\m{MkMut}$ that generates a getter/setter wrapper for any mutable
record (i.e. a record where all its fields are of reference type).
We first define the auxiliary type-level $\m{MutableRec}$ function,
that returns the mutable record type obtained from a record type $R$
in terms of $\m{Map}$:
\[
\begin{array}{ll}
\m{MutableRec}&:: \kforall {R::\krecord}\kref{r::\krecord}{\lb{r} = \lb{R}}\\

  \m{MutableRec}&\triangleq \m{Map}\;(\lambda {r{::}\ktype}.\tref{r})\\
\end{array}
\]

We then define the auxiliary type-level $\m{SetGet}$ function, that
returns the record type that exposes the getter/setter interface
generated from record type $R$:
$$
\begin{array}{ll}
\m{SetGetRec}&:: \kforall {R::\krecord}\kref{r::\krecord}{\lb{r} = set{++}\lb{R}\cup get{++}\lb{R}}\\

  \m{SetGetRec}&\triangleq \lambda {R{::}\krecord{}}.\\
& \hspace{0.3cm} \ite{\nonemp{R}}{\\ 
& \hspace{0.6cm}\treccons{get{++}\hdlabel{R}}{\one\tfun\hdtype{R}}{}\\
& \hspace{0.6cm}\treccons{set{++}\hdlabel{R}}{\hdtype{R}\tfun \one}{}\\
& \hspace{0.6cm}\m{SetGetRec}\;\tl{R}\\
& \hspace{0.2cm}}{\trecord{}}\\
\end{array}
$$
Here, $n{++}m$ denotes the name obtained by appending $n$ to $m$, and
$n{++}S$ denotes the {\em label set} obtained from $S$ by prefixing
every label in S with name $n$.  The function $\m{SetGet}$ is well
kinded since the refinement kind constraints imply that the resulting
getter/setter interface type is well formed (i.e. all labels
distinct).  We can finally depict the type and code of the $\m{MkMut}$
function:
\[
\begin{array}{ll}
\m{MkMut}&:: \forall {R::\krecord}.\m{MutableRec}\;R \tfun \m{SetGetRec}\;R\\
  \m{MkMut}&\triangleq \Lambda {R{::}\krecord{}}.\\
& \hspace{0.3cm} \lambda {r{:}\m{MutableRec}\;R}.\\
& \hspace{0.3cm} \ite{\nonemp{R}}{\\ 
& \hspace{0.6cm}\reccons{get{++}\hdlabel{R}}{\lambda x{:}\one. !(r.\tmhdlabel{R})}{}\\
& \hspace{0.6cm}\reccons{set{++}\hdlabel{R}}{\lambda x{:}\hdtype{R}.r.\tmhdlabel{R} := x}{}\\
& \hspace{0.6cm}\m{MkMut}\;\tl{R}\;\tmtl{r}\\
& \hspace{0.2cm}}{\record{}}\\

\end{array}
\]
For example, assuming $r:\m{MutableRec}\;\basetype{Person}$ we have that 
$\m{MkMut}\;\basetype{Person}\;r$ computes a record equivalent to:
\[
  \begin{array}{ll}
&\record{
getname=\lambda x{:}\one.!(r.name);\\
&setname=\lambda x{:}\basetype{String}.r.name := x ;\\
&getage=\lambda x{:}\one.!(r.age);\\
&setage=\lambda x{:}\basetype{Int}.r.age := x}
\end{array}
\]
where $(\m{MkMut}\;\basetype{Person}\;r): \m{SetGetRec}\;\basetype{Person}$.





\section{A Type Theory with Kind Refinements}
\label{sec:system}

Having given an informal overview of the various features and
expressiveness of our theory, we now formally develop  our theory of
refinement kinds, targeting 
an ML-like functional language with a higher-order store and the
appropriate reference types, collections (i.e. lists) and records.
The typing and kinding systems rely on type-level functions (from
types to types) and a novel form of \emph{subkinding} and \emph{kind
  refinements}.  We first address our particular form of (sub)kinding,
types and the type-level operations enabled by this fine-grained view of
kinds, addressing kind refinements and their interaction with types and
type-level functions in Section~\ref{sec:krefs}.

Given that kinds are classifiers for types, we introduce
a separate kind for each of the key type constructs
of the language. Thus, we have a kind for records, $\krecord$, which
classifies record types; a kind $\kcollection$, for collection types;
a kind $\kfunction$, for function types; a kind $\kreference$, for
reference types; a kind $\kpolyfun_K$ for polymorphic function types
(whose type parameter is of kind $K$); and, a kind $\kname$ for
labels in record types (and records). All of these are specialisations
(i.e. subkinds) of the kind $\ktype$. We write
$\smallkind$ for any such kind.  
The language of {\em types} (a type-level $\lambda$-calculus) provides
constructors for the types described above, but crucially
also introduces type {\em destructors} that allow us to inspect the
structure of types of a given kind and, in combination with type-level
functions and structural type-recursion, enable a form of typed
meta-programming. Indeed, our type language is essentially one of
(inductive) structures and their various constructors and
destructors (and basic data types such as $\tbool$ and $\one$). The syntax of types and kinds is given in
Figure~\ref{fig:ktsyntax}.

\begin{figure}[t]
\[
   \begin{array}{llcll}
      \mbox{Kinds} & K,K' & ::= & \smallkind  \mid
                                  \kref{t{::}\smallkind}{\varphi} \mid \Pi t{:}K.K' & \mbox{Refinement Kinds} \\
                  &\smallkind & ::= & \krecord \mid \kcollection \mid \kfunction \mid
                                      \kreference \mid \kname & \mbox{Basic Kinds}\\
                   & & \mid & \ktype \mid \kpolyfun_K \\[1em]

      \mbox{Types} & T,S,R & ::= & t \mid \lambda t{::}K.T \mid T\,S & \mbox{Type-level Functions}\\
      
                 & & \mid &   \tfix{F}{K}{K'}{t}T    & \mbox{Structural Recursion}\\
                   & & \mid & \forall t{::}K.T & \mbox{Polymorphism}\\
               
                   & & \mid & L \mid \trecnil \mid \treccons{L}{T}{S} & \mbox{Record Type constructors}\\
                   & & \mid & \hdlabel{T} \mid \hdtype{T} & \mbox{Record Type destructors}\\
& & \mid &  \tl{T} \\
                   & & \mid & \tcol{T} \mid \colof{T} & \mbox{Collection Types} \\
                 & & \mid & \tref{T} \mid \refof{T} & \mbox{Reference Types} \\
                 & & \mid & T \rightarrow S \mid  \dom{T} \mid \codom{T} & \mbox{Function Types} \\
                   & & \mid & \tkindofP{T}{\smallkind}{S}{U} & \mbox{Kind Case}\\
                   & & \mid & \ite{\varphi}{T}{S} & \mbox{Property Test}\\ 

                   & & \mid & \tbool \mid \one \mid \dots  & \mbox{Basic Data Types}\\[1em]

\mbox{Extended Types} & \mathcal{T},\mathcal{S} & ::= & T \mid \lb{T}
                                                        \mid \mathcal{T} {++}\,  \mathcal{S}  & 
                                                                          \\[1em]
     
      \mbox{Refinements} & \varphi,\psi & ::= &
                   \varphi \supset \psi \mid
                              \varphi \wedge \psi \mid \dots &
                                                               \mbox{Propositional Logic}\\
          & & \mid &   \m{empty}(\mathcal{T} ) & \mbox{Empty Record Test}\\
                  & & \mid & \mathcal{T} = \mathcal{S} &
                                                         \mbox{Equality}\\
          & & \mid & \mathcal{T} \in \mathcal{S} & \mbox{Label Set
                                                   Membership}\\
        & & \mid & \mathcal{T} \# \,\mathcal{S} & \mbox{Label Set Apartness}
                                                           \end{array}
  \]
\caption{Syntax of Kinds, Types and Refinements\label{fig:ktsyntax}}
\end{figure}

\mypara{Record Types}
Our notion of record type, as introduced in Section~\ref{sec:introexamples},
is essentially a type-level list of pairs of
labels and types which maintains the invariant that all labels in a
record must be distinct. We thus have the type of empty records
$\trecord{}$, and the constructor $\treccons{L}{T}{R}$, which given a
record type $R$ that does not contain the label $L$, generates a
record type that is an extension of $R$ with the label $L$ associated
with type $T$.  Record types are associated with three destructors:
$\hdlabel{T}$, which projects the label of the head of the record $T$
(when seen as a list); $\hdtype{T}$ which produces the type at the
head of the record $T$; and $\tl{T}$ which produces the tail of the
record $T$ (i.e. drops its first label and type pair). As we will see
(Example~\ref{ex:typelabelsel}), since our type-level
$\lambda$-calculus allows for structural recursion, we can {\em
  define} a suitable record projection type construct in terms of
these lower-level primitives.

\mypara{Function Types and Polymorphism}
Functions between terms of type $T$ and $S$ are typed by the usual
$T\rightarrow S$. Given a function type $T$, we can inspect its domain
and image via the destructors $\dom{T}$ and $\codom{T}$, respectively.

Polymorphic function types are represented by $\forall t{::}K.T$ (with
$t$ bound in $T$, as usual). Note that the kind annotation for the
type variable $t$ allows us to express not only general parametric
polymorphic functions (by specifying the kind as $\ktype$) but also
a form of sub-kind polymorphism, since we can restrict the kind
of $t$ to a specific kind such as $\kreference$ or
$\kfunction$, or to a refined kind.
For instance, we
can specify the type
$\forall t{::}\kfunction.t \rightarrow \dom{t} \rightarrow \codom{t}$
of functions that, given a \emph{function type} $t$, a function of
such a type and a value in its domain produce a value in its image
(i.e.  the type of function application).
%

\mypara{{\bf Collections and References}}
The type of collections of elements of type $T$ is written as
$\tcol{T}$, with the associated type destructor $\colof{T}$, which
projects out the type of the collection elements. Similarly, reference
types $\tref{T}$ are bundled with a destructor $\refof{T}$ which
determines the type of the referenced elements.

\mypara{{\bf Kind Test}}
Just as many programming languages have a type case construct \cite{DBLP:journals/toplas/AbadiCPP91} that
allows for the runtime testing of the type of a given expression, our
$\lambda$-calculus of types has a \emph{kind case} construct,
$\tkindofP{T}{\smallkind}{S}{U}$, which checks the kind of type $T$
against kind $\smallkind$, computing to type $S$ if the kinds match
and to $U$ otherwise. Coupled with a term-level analogue, this
enables {\em ad-hoc polymorphism}, allowing us to
express non-parametric polymorphic functions.

\subsection{Type-level Functions and Refinements}
\label{sec:krefs}


The language of types that we have introduced up to this point
essentially consists of tree-like structures with their
various constructors and destructors. As we have mentioned, our type
language is actually a $\lambda$-calculus for the manipulation of such
structures and so includes functions from types to types,
$\lambda t{::}K.T$, and their respective application, written
$T\,S$. We also include a type-level structural recursion operator
$\tfix{F}{K}{K'}{t}T$, which allows us to define recursive type
functions from kind $K$ to $K'$. While written as a fixpoint operator,
we syntactically enforce that recursive calls must always take
structurally smaller arguments to ensure well-foundedness.

Type-level functions are {\em dependently kinded}, with kind
$\Pi t{:}K.K'$ (i.e. the kind of the image type in a type
$\lambda$-abstraction can refer to its {\em type} argument), where the
dependencies manifest themselves in {\em kind refinements}.  Just as
the concept of type refinements allow for rich type specifications
through the integration of predicates over values of a given type in
the type structure, our notion of kind refinements integrate
predicates over {\em types} in the kind structure, enabling the
kinding system to specify and enforce logical constraints on the
structure of types.

A kind refinement, written
$\kref{t{::}\smallkind}{\varphi}$, where $\smallkind$ is a {\em basic}
kind, and $\varphi$ is a logical formula (with $t$ bound in
$\varphi$), characterises types $T$ of kind $\smallkind$ such that the
property $\varphi$ holds of $T$ (i.e.~$\varphi\{T/t\}$ is true). The
language of properties $\varphi$ can refer to the syntax of types,
extended with a refinement-level notion of label set of a (record)
type, $\lb{T}$, and a notion of label set concatenation,
$\mathcal{T} {++}\, \mathcal{S}$, where $\mathcal{T}$ is such an {\em
  extended} type. Refinements $\varphi,\psi$ consist of propositional
logic formulae, (logical) equality, $\mathcal{T} = \mathcal{S}$, an
empty record predicate $\m{empty}(\mathcal{T})$, and basic label set
predicates and such as label inclusion ($\mathcal{T} \in \mathcal{S}$)
and set apartness ($\mathcal{T} \# \,\mathcal{S}$). The intended
target logic is a typed first-order logic with uninterpreted
functions, finite sets, inductive datatypes and equality
\cite{DBLP:conf/cav/BarrettCDHJKRT11}. While such theories are in
general undecidable, the state-of-the-art in SMT solving
\cite{DBLP:journals/lmcs/BansalBRT18,DBLP:conf/cade/ReynoldsTGKDB13}
procedures can be applied to effectively cover the automated reasoning
needed in our work.

Such an extension already provides a significant boost
in expressiveness: By using logical equality in the refinement
formula we can immediately represent singleton kinds such as
$\kref{t{::}\kfunction}{\codom{t} = \tbool}$, the kind
of function types whose image is of $\tbool$ type. Moreover, by combining
kind refinements and type-level functions, we can express non-trivial
type transformations in a fully typed (or kinded) way.
For instance consider the following:
\[
  \begin{array}{c}
  \m{dropField}\triangleq \lambda l {::}\kname.
  \tfix{F}{\kref{r{::}\krecord}{l\in \lb{r}}}{\kref{r{::}\krecord}{l\not\in \lb{r}}}{t}\\
  \ite{\hdlabel{t} = l }
    {\tl{t}}{\treccons{\hdlabel{t}}{\hdtype{t}}{(F\,(\tl{t}))}}
  \end{array}
\]
The function $\m{dropField}$ above takes label $l$ and a record type
with a field labelled by $l$ and removes the corresponding field and type pair
from the record type (recall that $\lb{r}$ denotes the refinement-level
set of labels of $r$). Such a function
combines structural recursion (where $\tl{t}$ is correctly deemed as
structurally smaller than $t$) with our type-level refinement test,
$\ite{\varphi}{T}{S}$. We note that the well-kindedness of such a
function relies on the ability to derive that, when the
record label $\hdlabel{t}$ is not $l$, since we know that $l$ must be
in $t$, $\tl{t}$ is a record type containing $l$. This kind of
reasoning is easily decided using SMT-based techniques
\cite{DBLP:conf/cav/BarrettCDHJKRT11}. 

\subsection{Kinding and Type Equality}
\label{sec:kindtypeeq}

Having introduced the key components of our kind and type language, we
now detail the kinding and type equality rules of our theory, making precise
the various intuitions of previous sections.

The kinding judgment is written $\Ga \vdash T :: K$, denoting that
type $T$ has kind $K$ under the assumptions in the context
$\Ga$. Contexts contain assumptions of the form $t{:}K$, $x{:}T$ and
$\varphi$ -- $t$ stands for a type of kind $K$, $x$ stands for a term
of type $T$ and refinement $\varphi$ is assumed to hold, respectively.
Kinding relies on a context well-formedness judgment, written
$\Ga \vdash$, a kind well-formedness judgment $\Ga \vdash K$,
subkinding judgment $\Ga \vdash K \leq K'$ and the refinement
well-formedness and entailment judgments, $\Ga \vdash \varphi$ and
$\Ga \models \varphi$. Context well-formedness
simply checks that all types, kinds and refinements in $\Ga$ are
well-formed. Kind well-formedness is defined in the standard way,
relying on refinement well-formedness (see
Appendix~\ref{app:kandting}), which requires that formulae and types
in refinements be well-formed.  Subkinding codifies the informal
reasoning from the start of this section, specifying that all
basic kinds are a specialization of $\ktype$; and captures equality of
kinds. Kind equality, written $\Ga \vdash K \equiv K'$, identifies
definitionally equal kinds, which due to the presence of kind
refinements requires reasoning about logically equivalent
refinements. We define
equality between $K$ and $K'$ by requiring $K\leq K'$ and $K' \leq K$.



  We now introduce the key kinding rules for the various types in our
  theory and their associated definitional equality rules. The type
  equality judgment is written $\Ga \models T \equiv S :: K$, denoting
  that $T$ and $S$ are equal types of kind $K$.

  \mypara{{\bf Refinements and Type Properties}}
A kind refinement is introduced by the rule {\sc (kref)} below. Given a type $T$ of kind $\smallkind$ and a {\em valid} property
$\varphi$ of $T$, we are justified in stating that $T$ is of kind
$\kref{t{::}\smallkind}{\varphi}$. 
  \[
    \inferrule*[right=(kref)]
    {\Ga \models \varphi\{T/t\} \quad \Ga \vdash T :: \smallkind }
    {\Ga \vdash T :: \kref{t{::}\smallkind}{\varphi}}
    \quad
    \inferrule*[right=(entails)]
  {\Ga \vdash \varphi \quad \m{Valid}(\llb \Ga \rrb \Rightarrow \llb \varphi\rrb)}
  {\Ga \models \varphi}
  \]
 Rule {\sc (entails)} specifies that a refinement formula is satisfiable if it is well-formed (i.e.,
   a syntactically well-formed boolean expression which may include
   equalities on terms of basic kind) and if the representation
of the context $\Ga$ and the refinement $\varphi$ as an implicational formula is
SMT-valid. The context and refinement representation is discussed in
Section~\ref{sec:eqreasoning}.
 
Crucially, 
since we rely on an underlying logic with inductive types (which
includes constructor and destructor equality reasoning), refinements
can specify the shape of the refined types.
For instance,
the expected $\beta$-equivalence reasoning for records allows us
to derive
$\treccons{\ell}{\tbool \rightarrow \tbool}{\trecnil{}} ::
\kref{t{::}\krecord} {\hdtype{t} = \tbool \rightarrow \tbool}$.
In general, we provide an equality elimination
rule for refinements {\sc (r-eqelim)}, internalizing such equalities
in definitional equality of our theory:

\vspace{-0.25cm}
\[
  \inferrule*[right=(r-eqelim)]
   {\Ga \vdash T :: \kref{t{::}\smallkind}{t = S} \quad \Ga \vdash S ::
    \smallkind}
    {\Ga \vdash T \equiv S :: \smallkind}
  \]

 These principles become particularly interesting when reasoning from
 refinements that appear in type variables. For instance, the type
 $\forall t{::}\kref{f{:}\kfunction}{\dom{f} = \tbool \wedge \codom{f}
   = \tbool } . t\rightarrow \tbool $ can be used to type the term
 $\Lambda t{::}\kref{f{:}\kfunction}{\dom{f} = \tbool \wedge \codom{f}
   = \tbool }. \lambda f{:}t. (f\,\true)$, where $\Lambda$ is the
 binder for polymorphic functions, as usual. Crucially, typing (and
 kinding) exploits not only the fact that we know that the type
 variable $t$ stands for a function type, but also that the domain and
 image are the type $\tbool$, which then warrants the application of
 $f$ to a boolean in order to produce a boolean, despite the basic
 kinding information only specifying that $f$ is of function
 kind. This style of reasoning, which is not readily available even in
 powerful type theories such as that of Coq \cite{Coq:manual}, is akin
 to that of a setting with singleton
 kinds~\cite{DBLP:journals/tocl/StoneH06}.

  As we have shown in Section~\ref{sec:introexamples}, properties can
  also be tested in types through a conditional construct
  $\ite{\varphi}{T}{S}$. Provided that the property $\varphi$ is
  well-formed, if $T$ is of kind $K$ assuming $\varphi$ and $S$ of
  kind $K$ assuming $\neg\varphi$, then the conditional test is
  well-kinded, as specified by the rule \textsc{(k-ite)}.  The
  equality principles for the property test rely on validity of the
  specified property (with a degenerate case where both
  branches are equal types).
  We note that $\Ga , \varphi \vdash T :: K$ can effectively be
  represented as $\Ga , x : \kref{\_}{ \varphi} \vdash T :: K$ where $x$ is
  fresh. This representation encodes $\varphi$ in the context through
  a ``dummy'' refinement that simply asserts the property.

  {\small\[
  \begin{array}{c}
  \inferrule*[right=(k-ite)]
    {\Ga \vdash \varphi \quad \Ga , \varphi \vdash T :: K 
    \quad \Ga , \neg\varphi \vdash S :: K}
  {\Ga \vdash \ite{\varphi}{T}{S} :: K    }\quad
   \inferrule*[right=(eq-iteT)]
    {\Ga \models \varphi \quad
    \Ga ,\varphi \vdash T_1 :: K
    \quad \Ga ,\neg\varphi \vdash T_2 :: K}
    {\Ga \models \ite{\varphi}{T_1}{T_2} \equiv T_1 :: K}
\\[1em]
\inferrule*[right=(eq-iteE)]
    {\Ga \models \neg\varphi  \quad
    \Ga ,\varphi \vdash T_1 :: K
    \quad \Ga ,\neg\varphi \vdash T_2 :: K}
    {\Ga \models \ite{\varphi}{T_1}{T_2} \equiv T_2 :: K}
    \quad
\inferrule*[right=(eq-iteEq)]
    {\Ga \vdash \varphi \quad \Ga ,\varphi \vdash T :: K
    \quad \Ga ,\neg\varphi \vdash T :: K }
    {\Ga \models \ite{\varphi}{T}{T} \equiv T :: K}
    \end{array}
  \]}


\mypara{{\bf Type Functions and Function Types}}
The rules that govern kinding and equality of type-level functions consist of
the standard rules plus the extensionality principles of
\cite{DBLP:journals/tocl/StoneH06}
 (to streamline the
presentation, we omit the congruence rules for equality):

{\small\[
  \begin{array}{c}
   \inferrule*[right=(k-fun)]
    {\Ga \vdash K \quad \Ga , t{:}K \vdash T :: K'}
    {\Ga \vdash \lambda t {::}K.T :: \Pi t{:}K.K'}
    \quad
   \inferrule*[right=(k-app)]
    {\Ga \vdash T :: \Pi t{:}K.K' \quad
     \Ga \vdash S :: K}
   {\Ga \vdash T\, S :: K'\{S/t\}}
  \end{array}
\]
\[
\begin{array}{c}
   \inferrule*[right=(k-ext)]
    {\begin{array}{c}
       \Ga \vdash T :: \Pi t {:}K_1.K_3 \\
       \Ga , t {:}K_1 \vdash
       T\,t :: K_2 \quad x \not\in fv(T)
       \end{array}}
    {\Ga \vdash T :: \Pi t {:}K_1.K_2}\quad
    \inferrule*[right=(eq-funext)]
{\begin{array}{c}
       \Ga \vdash S :: \Pi t {:}K_1.K_3 \quad \Ga \vdash T :: \Pi
       t{:}K_1.K_4 \\
       \Ga , t{:} K_1 \vdash S\,t \equiv T\,t :: K_2
       \end{array}}
    {\Ga \vdash S \equiv T :: \Pi t {:}K_1.K_2}\\[1em]
    \inferrule*[right=(eq-funapp)]
  {\Ga , t{:}K \vdash T :: K'
   \quad \Ga \vdash S :: K}
  {\Ga \models (\lambda t{::}K.T)\,S \equiv T\{S/t\} :: K'\{S/t\} }
    \end{array}
  \]}
Rules {\sc (k-ext)} and {\sc (eq-funext)} allow for basic
extensionality principles on type-level functions. The former states
that an $\eta$-like typing rule, where a type $T$ that is a type-level
function from $K_1$ to $K_3$ can be seen as a type-level function from
$K_1$ to $K_2$ if $T$ applied to a fresh variable of type $K_1$ can
derive a type of kind $K_2$. Rule {\sc (eq-funext)} is the analogous
rule for type equality. We note that such rules, although they allow
us to equate types such as $\lambda t{::}\kref{s{:}\ktype}{t=\tbool
  \rightarrow \tbool }.t$ and $\lambda t{::}\kref{s{:}\ktype}{t=\tbool
  \rightarrow \tbool }.\tbool \rightarrow \tbool$, they do not disturb the decidability
of kinding or equality \cite{DBLP:journals/tocl/StoneH06}.

  Structural recursive functions, defined via a fixpoint construct,
  are defined by:
  \[
    \begin{array}{c}
   \inferrule*[right=(k-fix)]
    {\Ga , F {:} \Pi t{:}K.K' , t{:}K \vdash T :: K'\quad
    \mathsf{structural}(T,F,t)}
    {\Ga \vdash \tfix{F}{K}{K'}{t}T :: \Pi t{:}K.K' }\\[1em]
   \inferrule[(eq-fixunf)]
    {\Ga ,t{:}K_1 \vdash K_2 \quad
    \Ga , F{:}\Pi t{:}K_1.K_2,t{:}K_1 \vdash T :: K_2\quad
    \Ga \vdash S :: K_1 \quad \m{structural}(T,F,t)}
    {\Ga \models (\tfix{F}{K_1}{K_2}{t}T)\,S \equiv
      T\{S/t\}\{(\tfix{F}{K_1}{K_2}{t}T)/F\} :: K_2\{S/t\}}
   \end{array}
 \]
 The predicate $\m{structural}(T,F,t)$ enforces that calls of $F$ in
 $T$ must take arguments that are structurally smaller than $t$
 (i.e.~the arguments must be syntactically equal to $t$ applied to a
 destructor). More precisely, the predicate  $\m{structural}(T,F,t)$ holds iff
 all occurrences of $F$ in $T$ are applied to terms smaller than $t$, where
 the notion of size is given by $\elim{}{t} < t$, where $\elim{}{t}$
 stands for an appropriate destructor applied to $t$
 (e.g.,~ if $t$ is of kind $\kfunction$ then $\dom{t} < t$).
 The equality rule allows for the appropriate unfolding
 of the recursion to take place. Naturally, the implementation of this
 rule follows the standard lazy unfolding approach to recursive
 definitions.

 Polymorphic function types are
 assigned kind $\kpolyfun_K$:
\[\begin{array}{c}
 \inferrule*[right=(k-$\forall$)]
    {\Ga \vdash K \quad \Ga , t{:}K \vdash T :: \smallkind}
    {\Ga \vdash \forall t {::} K. T :: \kpolyfun_K}
    \end{array}
\]
Our manipulation of function types as essentially a pair of types (a
domain type and an image type) gives rise to the following kinding and
equalities:
\[
  \begin{array}{c}
 \inferrule*[right=(k-fun)]
    {\Ga \vdash T :: \smallkind \quad \Ga \vdash S :: \smallkind'}
    {\Ga \vdash T \rightarrow S :: \kfunction}
    \quad
    \inferrule*[right=(k-dom)]
    {\Ga \vdash T :: \kfunction}
    {\Ga \vdash \dom{T} ::\ktype}
    \quad
    \inferrule*[right=(k-codom)]
    {\Ga \vdash T :: \kfunction}
    {\Ga \vdash \codom{T} ::\ktype}
    \\[1em]
  \inferrule*[right=(eq-dom)]
  {\Ga \vdash T :: \smallkind \quad \Ga \vdash S :: \smallkind'}
  {\Ga \models \dom{T\rightarrow S} \equiv T :: \ktype}
  \quad
  \inferrule*[right=(eq-img)]
  {\Ga \vdash T :: \smallkind \quad \Ga \vdash S :: \smallkind'}
    {\Ga \models \codom{T\rightarrow S} \equiv S :: \ktype}
    \end{array}
  \]
  
\mypara{{\bf Records and Labels}}
The kinding rules that govern record type constructors and field labels
are:
\[
\begin{array}{c}
  \inferrule[(k-recnil)]
    {\Ga \vdash}
    {\Ga \vdash \trecnil :: \krecord}
    \quad
    \inferrule[(k-reccons)]
    {\Ga \vdash L :: \kname \quad
    \Ga \vdash T :: \smallkind \quad \Ga \vdash S :: \kref{t : \krecord}{L \not\in \lb{t} }}
    {\Ga \vdash \treccons{L}{T}{S} ::
      \krecord }
    \quad
    \inferrule[(k-label)]
    {\Ga \vdash  \ell \in \nameset}
  {\Ga \vdash \ell :: \kname }\\[1.5em]
    \inferrule[(k-hdt)]
    {\Ga \vdash T :: \kref{t{::}\krecord}{\neg\m{empty}(t)}}
    {\Ga \vdash \hdtype{T} :: \ktype }
    \quad
    \inferrule[(k-hdl)]
    {\Ga \vdash T :: \kref{t{::}\krecord}{\neg\m{empty}(t)}}
    {\Ga \vdash \hdlabel{T} :: \kname}
    \quad
    \inferrule[(k-tail)]
    {\Ga \vdash T :: \kref{t{::}\krecord}{\neg\m{empty}(t)}}
    {\Ga \vdash \tl{T} :: \krecord}
  \end{array}
\]

The rule for non-empty records requires that the tail $S$ of
the record type must {\em not} contain the field label $L$. The rules
for the various destructors require that the record be non-empty,
projecting out the appropriate data. The
equality principles for the three destructors are fairly
straightforward, projecting out the appropriate record type component,
provided the record is well-kinded.

{\small\[
\begin{array}{c}
 \inferrule[(eq-headlabel)]
  {\begin{array}{c}\Ga \vdash L :: \kname \quad
  \Ga \vdash T :: \smallkind \\
     \Ga \vdash S :: \kref{t : \krecord}{L\not\in \lb{t}}
     \end{array}}
  {\Ga \models \hdlabel{\treccons{L}{T}{S}} \equiv L :: \kname}
  \quad
  \inferrule[(eq-headtype)]
  {\begin{array}{c}\Ga \vdash L :: \kname \quad
  \Ga \vdash T ::\smallkind \\
     \Ga \vdash S :: \kref{t : \krecord}{L\not\in \lb{t}}
     \end{array}}
  {\Ga \models \hdtype{\treccons{L}{T}{S}} \equiv T :: \ktype}
\\[1.5em]
  \inferrule[(eq-tail)]
  {\Ga \vdash L :: \kname \quad
  \Ga \vdash T :: \smallkind \quad
  \Ga \vdash S :: \kref{t : \krecord}{L\not\in \lb{t}}}
  {\Ga \models \tl{\treccons{L}{T}{S}} \equiv S :: \krecord}
 \end{array}
\]}

\mypara{{\bf Collections and Reference Types}}
At the level of kinding, there is virtually no difference between a
collection and a reference type. They both denote a structure
that ``wraps'' a single type (the type of the collection elements for
the former and the type of the referenced values in the latter). Thus,
the respective destructor simply unwraps the underlying type.
\[
\begin{array}{c}
\inferrule[(k-col)]
    {\Ga \vdash T :: \smallkind}
    {\Ga \vdash \tcol{T} :: \kcollection}
    \quad
    \inferrule[(k-ref)]
    {\Ga \vdash T :: \kcollection}
  {\Ga \vdash \refof{T} :: \smallkind}
  \quad
   \inferrule[(eq-col)]
  {\Ga \vdash T :: \smallkind}
  {\Ga \models \colof{\tcol{T}} \equiv T :: \ktype }
  \quad
  \inferrule[(eq-ref)]{\Ga \vdash T :: \smallkind}
  {\Ga \models \refof{\tref{T}} \equiv T :: \ktype}
\end{array}
\]

\mypara{{\bf Conversion and Subkinding}}
As we have informally described earlier, our theory of kinds is
predicated on the idea that we can distinguish between the different
specialized types at the kind level. For instance, 
the kind of record types $\krecord$ is a specialisation
of $\ktype$, the kind of all types,
and similarly for the other type-level base constructs of
the theory. We formalise this via a subkinding relation,
which also internalises kind equality, and the corresponding
subsumption rule:
\[
  \begin{array}{c}
  \inferrule*[right=(K-sub)]
  {\Ga \vdash T :: K \quad
   \Ga \vdash K \leq K'}
 {\Ga \vdash T :: K'}
 \quad
  \inferrule*[right=(sub-eq)]
  {\Ga \vdash K \leq K' \quad \Ga \vdash K' \leq K }
  {\Ga \vdash K \equiv K'}
  \quad
  \inferrule*[right=(sub-type)]
  {\Ga \vdash
    }
  {\Ga \vdash \smallkind \leq \ktype}
    \\[1.5em]
    \inferrule*[right=(sub-refkind)]
  {\Ga \vdash \smallkind \quad \Ga , t{:}\smallkind \vdash \varphi }
    {\Ga \vdash \kref{t{::}\smallkind }{\varphi} \leq \smallkind}
    \quad
    \inferrule*[right=(sub-ref)]
    {\Ga \vdash \smallkind \leq \smallkind'
    \quad \Ga , t{:}\smallkind' \models \varphi \Rightarrow \varphi' }
    {\Ga \vdash \kref{t{::}\smallkind}{\varphi} \leq \kref{t{::}\smallkind'}{\varphi'} }
\end{array} 
\]
Rule (\textsc{sub-refkind}) specifies that a refined kind is always a
subkind of its unrefined variant. Rule (\textsc{sub-ref}) allows for
subkinding between refined kinds, by requiring that the basic kind respects
subkinding and that the refinement of the more precise kind implies
that of the more general one.

\mypara{{\bf Kind Case and Bottom}}
The kind case type-level mechanism is kinded in a natural way (rule
({\sc k-kcase})), accounting for the case where the kind of type $T$
matches the specified kind $\smallkind'$ with type $S$ and with type
$U$ otherwise.
\[
  \inferrule*[right=(k-kcase)]
    {\begin{array}{c}
    \Ga \vdash \smallkind \quad
    \Ga \vdash T :: \smallkind''
    \quad
    \Ga , t {:}\smallkind  \vdash S :: K'
    \quad
       \Ga \vdash U :: K'
     \end{array}}
   {\Ga \vdash \tkindofP{T}{\smallkind}{S}{U} :: K'}
   \quad
    \inferrule*[right=(k-bot)]
    {\Ga \models \bot \quad \Ga \vdash K}
    {\Ga \vdash \bot :: K  }
 \]
 Our treatment of $\bot$ allows for $\bot$ to be of {\em any}
 (well-formed) kind, provided one can conclude $\bot$ is valid. The
 associated equality principles implement the kind case by testing the
 specified kind against the derivable kind of type $T$. When $\bot$ is
 provable from $\Ga$ then we can derive any equality via rule
 {\sc (eq-bot)}.

\[
\begin{array}{c}
  \inferrule*[right=(eq-kcaseT)]
  {\begin{array}{c}
     \Ga \vdash T :: \smallkind \quad
  \Ga ,t{:}\smallkind\vdash S :: K'\quad
     \Ga \vdash U :: K'
   \end{array}}
  {\Ga \models  \tkindofP{T}{\smallkind}{S}{U} \equiv S\{T/t\} :: K'}
  
  \quad
      \inferrule*[right=(eq-bot)]
  {\Ga \models \bot \quad \Ga \vdash T :: \smallkind}
  {\Ga \models \bot \equiv T :: \smallkind}

\\[1.5em]

 \inferrule*[right=(eq-kcaseF)]
  {\begin{array}{c}
     \Ga \vdash T :: \smallkind_0 \quad \Ga \vdash \smallkind_0\not\equiv \smallkind\quad
  \Ga ,t{:}\smallkind\vdash S :: K'\quad
     \Ga \vdash U :: K'
   \end{array}}
  {\Ga \models  \tkindofP{T}{K}{S}{U} \equiv U :: K'}
\end{array}
\]

\begin{example}[Representing Record Field Selection in types and values]
  \label{ex:typelabelsel}

  With the development presented up to this point we can 
  implement the more usual record selection operator $T.L$, where
  $T$ is a record type and $L$ is a field label of $T$.
  We represent such a construct as a type-level function that given some
  $L :: \kname$ produces a recursive type-function that essentially iterates
  over a type record of kind $\kref{r{::}\krecord}{\ell \in \lb{r}}$:
  \[
  \begin{array}{c}  
  \lambda L {::} \kname.\tfix{F}{\kref{r{::}\krecord}{L \in \lb{r}}}{\ktype}{t}\\
  \ite{\hdlabel{t} = L}{\hdtype{t}}
    {F(\tl{t})} :: \Pi L : \kname.\Pi t : \kref{r{::}\krecord}{L \in
    \lb{r}} . \ktype
  \end{array}
\]

The function iteratively tests the label at the head of the record
against $L$, producing the type at the head of the record on a match
and recurring otherwise.  It is instructive to consider
the kinding for the property test construct (let $\Ga_0$ be
$L {:}\kname, F{:}\Pi t{:}\kref{r{::}\krecord}{L \in \lb{r}}.\ktype,
t{:}\kref{r{:}\krecord}{L \in \lb{r}}$):

\[
  \inferrule*[right=(k-ite)]
  {\inferrule[]{}{\Ga_0 \vdash \hdlabel{t} = L}
    \qquad
    \mathcal{D}
    \qquad
    \mathcal{E}
  }
  {\Ga_0 \vdash \ite{\hdlabel{t} = L}{\hdtype{t}}
    {F(\tl{t})} :: \ktype}
\]
where $\mathcal{D}$ is a derivation of
$\inferrule[]{}{\Ga_0,\hdlabel{t} = L \vdash \hdtype{t}
  :: \ktype}$ and $\mathcal{E}$ is a derivation of
$\inferrule[]{}{\Ga_0,\neg(\hdlabel{t} = L) \vdash
  F(\tl{t}) :: \ktype}$. To show that $\hdlabel{t} = L$ is
well-formed we must be able to derive 
$t :: \kref{r{::}\krecord}{\neg\m{empty}(r)}$ from
$t :: \kref{r{::}\krecord}{L \in \lb{r}}$, which is achieved via
subkinding, by appealing to entailment in our underlying theory
(see Section~\ref{sec:eqreasoning}).  Similarly, the derivation
$\mathcal{E}$ requires the ability to conclude that
$\tl{t} :: \kref{r{::}\krecord}{L \in \lb{r}}$, using the information that
$t :: \kref{r{::}\krecord}{L \in \lb{r}}$ and
$\neg(\hdlabel{t} = L)$, which is also a valid entailment.
\end{example}

{
  \begin{example}[Generic Pairing of Objects]
    \label{ex:genobjtyp}

The following example consists of an object (implemented as a record
of methods) combinator $\m{Pairer}$
which takes two object types $X$ and $Y$ and for every method that $X$
and $Y$ have in common, $\m{Pairer}\,X\,Y$ contains a method with the
same name and domain types, but where the return type is a pair of the
two original return types. This practical  example is inspired by an
example found in \citet{DBLP:conf/pldi/HuangS08}, which 
uses pattern-based nested reflection in the context of Java.

  We first define a $\m{Pair}$ type constructor as a type-level
  function that takes two types $X$ and $Y$ as argument and produces a two-element
  record, where the label $\m{fst}$ denotes the first element of
  the pair (of type $X$) and the label $\m{snd}$ denotes the second
  element of the pair (of type $Y$):
  \[
    \begin{array}{ll}
\m{Pair} &  :: \Pi X {::} \ktype . \Pi Y {::} \ktype.\kref{ r ::
      \krecord}{\varphi}\\
      \m{Pair} & \triangleq \lambda X {::} \ktype. \lambda Y {::}
                 \ktype. \treccons{\m{fst}}{X}{\treccons{\m{snd}}{Y}{\trecnil}}
    \end{array}
  \]
\[\varphi \triangleq \hdlabel{r} = \m{fst} \,\wedge\, \hdtype{r} = X
\,\wedge\, \hdlabel{\tl{r}} = \m{snd} \,\wedge\, \hdtype{\tl{r}} = Y\]

For the sake of conciseness, we make use of a predicate $\m{isObj}$ on
record types which holds if a record is a record of functions
(i.e. object methods). For simplicity we assume that methods are
functions of exactly one argument.

We now define a $\m{Pairer}$ type-level function which takes two
object types $X$ and $Y$
to produce a new object type which contains the same methods of $X$
and $Y$, but where the methods that $X$ and $Y$ have in common
(i.e.~methods with the same name and same argument types) have as
result type the pairing of the two original return types.

\[
  \begin{array}{ll}
    \m{Pairer} & :: \Pi X :: \kref{r :: \krecord}{\m{isObj}(r)} .
                           \Pi Y :: \kref{r :: \krecord}{\m{isObj}(r)}
                 . \kref{r :: \krecord}{\m{isObj}(r)}\\
    \m{Pairer} & \triangleq \lambda X :: \kref{r ::
                 \krecord}{\m{isObj}(r)} .
                 \lambda Y :: \kref{r :: \krecord}{\m{isObj}(r)} .\\
               & \quad
                 \ite{\neg \m{empty}(X) \,\wedge\, \neg\m{empty}(Y)}
                 {\\ &\quad\quad
                       (\ite{\hdlabel{X} \in \lb{Y} \,\wedge\,
                       \dom{Y.\hdlabel{X} } = \dom{\hdtype{X} }}
                       {\\ &\quad\quad \treccons{\hdlabel{X}}{\dom{\hdtype{X}}
                             \rightarrow \\
             &
               \qquad\qquad\qquad\qquad\quad\m{Pair}\, (\codom{\hdtype{X}})\,
               (\codom{Y.\hdlabel{X}}) }
                             {\m{Helper}}\\ & \qquad }
                { \treccons{\hdlabel{X}}{\hdtype{X}}{\m{Helper})} \\
               & \quad } }
               {\\ & \qquad \ite{\neg\m{empty}(X)}{X}{Y}   }\\

    \m{Helper} & = \ite{(\hdlabel{Y} \in \lb{X} \,\wedge\,
                       \dom{X.\hdlabel{Y} } = \dom{\hdtype{Y}}
                 \,\wedge\,\\
               & \quad \hdlabel{X} \neq \hdlabel{Y}) }
                 {\\ & \qquad \quad
                       \treccons
                       {\hdlabel{Y}}
                       {\dom{\hdtype{Y}} \rightarrow\\ &\qquad\qquad
                       \m{Pair}\, (\codom{X.\hdlabel{Y}}) \, (\codom{\hdtype{Y}}) }
                       {\\ & \qquad\qquad
                             \m{dropField}(\hdlabel{Y},
            \m{Pairer}\,(\tl{X})\,(\m{dropField}(\hdlabel{X},\tl{Y})))\\
               & \quad 
    } }
            {
               \,  \m{Pairer}\,(\tl{X})\,(\m{dropField}(\hdlabel{X},Y))                 }

   \end{array}
 \]
 The $\m{Pairer}$ function above proceeds recursively over the records
 $X$ and $Y$. When $Y$ is empty, the function returns $X$ since there
 is nothing left to pair, and similarly for when $X$ is empty. When
 neither $X$ or $Y$ are empty, we test whether the head label of $X$
 is in the label set of $Y$ with a matching domain type, if not, then
 there is no pairing to be done with the method at the head of $X$ and
 the resulting record copies the method signature from $X$.  If the
 conditional holds, then we produce a function type with the
 appropriate domain and where the image is the pairing of the two
 image types. In both cases (to ease with the formating) the tail of
 the record is defined by a $\m{Helper}$ definition.

 The $\m{Helper}$ definition tests whether the head label of $Y$ is in
 $X$ with matching domain types, but is not the first label of $X$
 (which is handled in the previous test). If the condition holds, then
 we must include the head method of $Y$ with the appropriately paired
 image type. The recursive call to $\m{Pairer}$ makes use of the $\m{dropField}$
type-level function, which removes a record entry, to ensure that the
head label of $X$ is removed from the tail of $Y$ and that the head
label of $Y$ is removed from the result of the recursive call. When
the condition does not hold we simply recurse on the tail of $X$ and
on $Y$ with the method labelled by $\hdlabel{X}$ removed.
\end{example}}

\section{A Programming Language with Kind Refinements}
\label{sec:pl}

\begin{figure}
  \[
    \begin{array}{llcll}
  \mbox{Terms} & M,N & ::= & x \mid \lambda x{:}T.M \mid M\,N & \mbox{Functions}\\
               & & \mid & \typeabs{t{::}K}M \mid \typeapp{M}{T} &  \mbox{Type Abstraction and Application}\\
                 & & \mid & \record{} \mid \reccons{\ell}{M}{N} \mid \tmtl{M}  \\
               & & \mid & \tmhdlabel{M} \mid \tmhdterm{M}& \mbox{Records} \\
                 & & \mid & \unitelem & \mbox{Unit Element}\\
                 & & \mid & \ite{M}{N_1}{N_2}\\
                 & & \mid & \true \mid \false & \mbox{Booleans}\\
                 & & \mid & \ite{\varphi}{M}{N} & \mbox{Property Test}\\
                 & & \mid & \tkindofP{T}{K}{M}{N} & \mbox{Kind Case}\\
                 & & \mid & \emptycol{T} \mid M :: N \\
                 & & \mid & \colcase{M}{N_1}{x {::} xs \Rightarrow N_2} & \mbox{Collections}\\
                 & & \mid & \mkref{M} \mid \deref{M} \mid M := N \mid l & \mbox{References}\\
                 & & \mid & \vrec{F}{T}{M} & \mbox{Recursion}
    \end{array}
\]
\caption{Syntax of Terms\label{fig:synterms}}
\end{figure}

Having covered the key details of kinding and type equality, we
introduce the syntax and typing for
our programming language
{\em per se}, {capturing the essence of} an ML-style functional language with a higher-order
store, 
the syntax of which is given in Figure~\ref{fig:synterms}.  Most
constructs are standard.

We highlight the treatment of records, mirroring that of record {\em types},
as heterogeneous lists of (pairs of) field labels and terms
equipped with the appropriate destructors. Collections are built from
the empty collection $\emptycol{}$ and the concatenation of an element
$M$ with a collection $N$, $M :: N$, with the usual case analysis
$\colcase{M}{N_1}{x {::} xs \Rightarrow N_2}$ that reduces to $N_1$
when $M$ evaluates to the empty collection and to $N_2$ otherwise, where $x$
is instantiated with the head of the collection and $xs$ with its tail.
We allow for recursive terms via a
fixpoint construct $\vrec{F}{T}{M}$,
noting that since there are no type dependencies,
non-termination in the term language does not affect the overall
soundness of the development. We also mirror the type-level property
test and kind case constructs in the term language as
$\ite{\varphi}{M}{N}$ and $\tkindofP{T}{K}{M}{N}$, respectively.  As
we have initially stated, our language has general higher-order
references, represented with the constructs $\mkref{M}$, $\deref{M}$
and $M := N$, which create a reference to $M$, dereference a reference
$M$ and assign $N$ to the reference $M$, respectively.  As usual in
languages with a store, we use $l$ to stand for the runtime values of
memory locations.

\begin{figure}
{\small
\[
  \begin{array}{c}
    \inferrule[(var)]
    {(x{:}T)\in\Ga \quad \Ga ; S \vdash \quad \Ga \vdash }
    {\Ga \vdash_S x : T}
    \quad
    \inferrule[($\one$I)]{ \Ga \vdash}{ \Ga \vdash \unitelem : \one}
    \quad
    \inferrule[($\rightarrow$I)]
    {\Ga \vdash_S T :: \ktype \quad \Ga , x{:}T\vdash_S M : U}
    {\Ga \vdash_S \lambda x{:}T.M : T \rightarrow U}\\[1.5em]
    \inferrule[($\rightarrow$E)]
    {\Ga \vdash_S M : T \rightarrow S\quad \Ga \vdash_S N : T}
    {\Ga \vdash_S M\, N : S}
    \quad
    \inferrule[($\forall$I)]
    {\Ga \vdash K \quad \Ga , t{:}K \vdash_S M : T}
    {\Ga \vdash_S \typeabs{t{::}K}M : \forall t{::}K.T }
    \quad
    \inferrule[($\forall$E)]
    {\Ga \vdash_S M : \forall t :: K.S \quad \Ga \vdash T :: K }
    {\Ga \vdash_S \typeapp{M}{T} : S\{T/t\}}\\[1.5em]
    \inferrule[($\trecnil I_1$)]
    {\Ga \vdash \quad \Ga ; S \vdash }
    {\Ga \vdash_S \record{} : \trecnil }
    \quad
    \inferrule[($\trecnil I_2$)]
    {\Ga \vdash_S L :: \kname \quad  \Ga \vdash_S M : T_1
    \quad \Ga \vdash T_2 :: \kref{t {::} \krecord}{ L \not\in \lb{t}} \quad
    \Ga \vdash_S N : T_2}
    {\Ga \vdash_S \reccons{L}{M}{N} : \treccons{L}{T}{U}}\\[1.5em]
    \inferrule[(reclabel)]
    {\Ga \vdash_S M : \treccons{L}{T}{U}    }
    {\Ga \vdash_S \tmhdlabel{M} : L}\quad
    \inferrule[(recterm)]
    {\Ga \vdash_S M : \treccons{L}{T}{U}   }
    {\Ga \vdash_S \tmhdterm{M} : T}\quad

     \inferrule[(rectail)]
    {\Ga \vdash_S M : \treccons{L}{T}{U}   }
    {\Ga \vdash_S \tl{M} : U}
    \\[1.5em]

    \inferrule[(true)]
    {\Ga \vdash\quad \Ga ; S \vdash}
    {\Ga \vdash_S \true: \tbool}
    \quad
    \inferrule[(false)]
    {\Ga \vdash \quad \Ga ; S \vdash}
    {\Ga \vdash_S \false : \tbool}\quad
    
    \inferrule[(bool-ite)]
    {\Ga \vdash_S M : \tbool \quad
    \Ga \vdash_S N_1 : T \quad
    \Ga \vdash_S N_2 :T }
    {\Ga \vdash_S \ite{M}{N_1}{N_2} : T}\\[1.5em]

    \inferrule[(emp)]
    {\Ga \vdash T :: \ktype \quad \Ga ; S \vdash}
    {\Ga \vdash_S \emptycol{T} : \tcol{T}}
    \quad
    \inferrule[(cons)]
    { \Ga \vdash_S M : T \quad \Ga \vdash_S N :\tcol{T} }
       {\Ga \vdash_S M :: N : \tcol{T}}
    \quad

    \inferrule[(case)]
    {
    \Ga \vdash_S M : \tcol{T} \quad \Ga \vdash N_1 : S \quad
    \Ga ,x{:}T, xs{:}\tcol{T}\vdash N_2 : S
    }
    {\Ga \vdash_S \colcase{M}{N_1}{x {::} xs \Rightarrow N_2} : S }
    \\[1.5em]
 
    \inferrule[(loc)]
    {\Ga \vdash \quad \Ga;S\vdash \quad S(l) = T}{\Ga \vdash_S l : \tref{T}}
    \quad
    \inferrule[(ref)]
    {\Ga \vdash_S M : T}
    {\Ga \vdash_S \mkref{M} : \tref{T}}
    \quad
     \inferrule[(deref)]
    {\Ga \vdash_S M : \tref{T}}
       {\Ga \vdash_S \deref{M} : T}
    \quad
    \inferrule[(assign)]
    {\Ga \vdash_S M : \tref{T} \quad \Ga \vdash_S N : T}
    {\Ga \vdash_S M := N : \one }\\[1.5em]

    \inferrule[(prop-ite)]
    {\Ga \vdash \varphi \quad
    \Ga , \varphi \vdash_S M : T_1 \quad
    \Ga , \neg\varphi \vdash_S N : T_2}
    {\Ga \vdash_S \ite{\varphi}{M}{N} : \ite{\varphi}{T_1}{T_2}}
    \quad
    \inferrule[(kindcase)]
    {\Ga \vdash T :: \smallkind' \quad \Ga \vdash \smallkind \quad
     \Ga , t{:}\smallkind \vdash_S M : U \quad \Ga \vdash_S N : U}
    {\Ga \vdash_S  \tkindofP{T}{\smallkind}{M}{N} : U}\\[1.5em]
    \inferrule[(conv)]
    {\begin{array}{c}
       \Ga \vdash_S M : U \quad \Ga \models U \equiv T :: \ktype
     \end{array}}
    {\Ga \vdash_S M : T}
    \quad
    \inferrule[(fix)]
    {\Ga ,F: T \vdash_S M : T} 
    {\Ga \vdash_S \vrec{F}{T}{M} : T}
    
  \end{array}
\]}
\caption{Typing Rules\label{fig:typing}}
\end{figure}

The typing rules for the language are given in
Figure~\ref{fig:typing}. The typing judgment is written as
$\Ga\vdash_S M :T$, where $S$ is a location typing environment. We
write $\Ga ; S \vdash$ to state that $S$ is a valid mapping from
locations to well-kinded types, according to the typing context $\Ga$.
Notably, despite the fairly advanced type-level features, the typing
rules are virtually unchanged when compared to those of a language in
the ML family.

In fact, the advanced kinding and type equality features
manifest themselves in typing via the {\sc (conv)} 
conversion rule, {\sc (kindcase)} and the {\sc ($\trecnil I_2$)}
record formation rule -- this further reveals a potential strength of
our approach, since it allows for a clean integration of powerful
type-level reasoning and meta-programming without dramatically changing the
surface-level language.
%
%
%
For instance, the following term is well-typed:
\[\begin{array}{c}
  \vdash \Lambda s{:}\ktype.
  \Lambda t{:}\kref{f{::}\kfunction}{\dom{f} = s
    \wedge \codom{f} = \tbool}.\\
    \lambda x{:}t.\lambda y{:}s.(x\,y) :
    \forall s{:}\ktype.\forall t{::}\kref{f{::}\kfunction}{\dom{f} = s
    \wedge \codom{f} = \tbool }.t \rightarrow s
    \rightarrow \tbool
\end{array}
\]
Despite not knowing the exact form of the function type that is to be
instantiated for $t$, by refining its domain and image types we can
derive that $t = s \rightarrow \tbool$ and give a type to applications of
terms of type $t$ correctly.
Note that this is in contrast with what happens in dependent type
theories such as Agda \cite{norell:thesis} or that of Coq
\cite{Coq:manual}), where the leveraging of dependent types, explicit
equality proofs and equality elimination would be needed to provide an
``equivalently'' typed term. 
%
%

We also highlight the typing of the property test term construct,
\[
    \inferrule*[right=(prop-ite)]
    {\Ga \vdash \varphi \quad
    \Ga , \varphi \vdash_S M : T_1 \quad
    \Ga , \neg\varphi \vdash_S N : T_2}
    {\Ga \vdash_S \ite{\varphi}{M}{N} : \ite{\varphi}{T_1}{T_2}}
\]
which types the term $\ite{\varphi}{M}{N}$ with the {\em type}
$\ite{\varphi}{T_1}{T_2}$ and thus allows for a conditional branching
where the types of the branches differ.
Rule ({\sc kindcase}) mirrors
the equivalent rule for the type-level kind case, typing the term
$\tkindofP{T}{\smallkind}{M}{N}$ with the type $U$ of both $M$ and $N$
but testing the kind of type $T$ against $\smallkind$.
Such a construct enables us to define non-parametric polymorphic functions,
and introduce forms of ad-hoc polymorphism. 
For instance, we can derive the following:
\[
  \begin{array}{c}
    \Lambda s{::}\ktype.\lambda x{:}s.\tkindofP{s}{\kreference}
    {(\ite{\refof{t} = \m{Int} }{\deref{x}}{0})}
    {0} : \forall s{::}\ktype.s\rightarrow \m{Int}
    \end{array}
  \]
  The function above takes a type $s$, a term $x$ of that type and, if
  $s$ is of kind $\kreference$ such that $s$ is a reference type for
  integers (note the use of reflection using destructor $\refof{-}$ on type $s$), returns $\deref{x}$, otherwise simply returns $0$. The typing
  exploits the equality rule for the property test where both branches
  are the same type.

  Finally, the type conversion rule ({\sc conv})
  allows us to coerce between equal types, allowing
  for type-level computation to manifest itself in the typing of
  terms. 
  
  \begin{example}[Record Selection]
    \label{ex:termlabelsel}
  Using the record selection type of Example~\ref{ex:typelabelsel} we can
  construct a term-level analogue of record selection.  Given a label
  $L$ and a term $M$ of type $T$ of kind
  $ \kref{r{::}\krecord}{L \in \lb{r}}$, we define the record selection
  construct $M.L$ as (for conciseness, let $\mathcal{R} = \kref{r{::}\krecord}{L \in \lb{r}}$):
  \[
      \begin{array}{c}
    M.L \triangleq
    \Lambda L :: \kname.
    \vrec{F}{\forall t::\mathcal{R}. t\rightarrow (t.L)}
       \Lambda t :: \mathcal{R}.\lambda x{:}t.\\
        \qquad\ite{\hdlabel{t} = L }{\tmhdterm{x}}
    {F[\tl{t}](\tl{x})})[L][T]\,M
      \end{array}
    \]
    such that $M.L : T.L$.
    The typing requires crucial use of type conversion to allow for the unfolding of the recursive type function to take place (let $\Ga_0$ be $L:\kname,
    F{:}\forall t::\mathcal{R}. t\rightarrow (t.L) ,
      x{:}T$):
    \[
      \inferrule[(conv)]
      {\mathcal{D} \qquad
      \Ga_0 \models  (\ite{\hdlabel{T} = L}{\hdtype{T}}{\tl{T}.L})\equiv T.L  :: \ktype }
      {\Ga_0 \vdash \ite{\hdlabel{T} = L}{\tmhdterm{x}}
    {F[\tl{T}](\tl{x})} : T.L  }
    \]
    with $\mathcal{D}$ a derivation of
    \[
\begin{array}{l}
\Ga_0 \vdash \ite{(\hdlabel{T} = L)}{\tmhdterm{x}}
    {F[\tl{T}](\tl{x})} : T_0\\
\end{array}
\] 
where $T_0$ is $\ite{(\hdlabel{T} = L)}{\hdtype{T}}{\tl{T}.L}$,
requiring a similar appeal to logical entailment to that of
Example~\ref{ex:typelabelsel}. Specifically, in the $\kwd{then}$
branch we must show that
$\Ga_0,\hdlabel{T} = L\vdash \tmhdterm{x} :
\hdtype{T}$, which is derivable from $x{:}T$ and
$x : \treccons{\hdlabel{T}}{\hdtype{T}}{\tl{T}}$ --
the latter following from type conversion due to the refinement $L\in
T$ allowing us to establish $\neg\m{empty}(T)$ -- via
typing rule {\sc (recterm)}. 

The $\kwd{else}$ branch requires showing
that
$\Ga_0, \neg\hdlabel{T} = L \vdash F[\tl{T}](\tl{x}) :
\tl{T}.L$, which is derivable from
$F : \forall t::\mathcal{R}. t\rightarrow (t.L)$ and $x{:}T$ as
follows: $\tl{T} :: \mathcal{R}$ follows from
$\neg\hdlabel{T} = L$ and $T :: \mathcal{R}$ (see
Section~\ref{sec:eqreasoning}), thus
$F[\tl{T}] : \tl{T} \rightarrow \tl{T}.L$. Since $\tl{x} : \tl{T}$
from $x : T$ and
$x : \treccons{\hdlabel{T}}{\hdtype{T}}{\tl{T}}$ via rule
{\sc (rectail)}, we conclude using the application rule.
Thus, combining the type and term-level record projection constructs we have that the following is admissible:
\[
  \inferrule[]
  {\Ga \vdash L :: \kname \quad
    \Ga \vdash M : T \quad \Ga \vdash T :: \kref{r{::}\krecord}{L \in \lb{r}} }
  {\Ga \vdash M.L : T.L}
\]
   
\end{example}

{
\begin{example}[Generic Object Pairing]

We now produce the term-level implementation of
Example~\ref{ex:genobjtyp}, which takes two objects $x$ and $y$ of
types $X$ and $Y$ and produces a
new object of type $\m{Pairer}\,X\,Y$. We first define a
constructor for pairs, $\m{PCons}$:
\[
  \begin{array}{ll}
\m{PCons} & : \forall X :: \ktype . \forall Y :: \ktype. X \rightarrow
            Y \rightarrow \m{Pair}\,X\,Y\\
    \m{PCons} & \triangleq
                \Lambda X :: \ktype . \Lambda Y :: \ktype. \lambda
                x{:}X.\lambda y{:}Y. \record{\m{fst} = x , \m{snd} = y}
%
  \end{array}
\]

We now define the pair-object constructor, which makes use of
$\m{Pairer}$ in its typing and of term-level record projection and
record field removal  in its
definition: 
\[
  \begin{array}{ll}
    \m{ObjPair} & : \forall X :: \kref{r :: \krecord}{\m{isObj}(r)} .
                  \forall Y :: \kref{r :: \krecord}{\m{isObj}(r)} .
                  X \rightarrow Y \rightarrow \m{Pairer}\,X\,Y\\
    \m{ObjPair} & \triangleq  \Lambda X :: \kref{r :: \krecord}{\m{isObj}(r)} .
                                            \Lambda Y :: \kref{r :: \krecord}{\m{isObj}(r)} .
                                            \lambda x {:} X . \lambda y {:} Y.\\
& \quad    \ite{\neg \m{empty}(X) \,\wedge\, \neg\m{empty}(Y)}
                 {\\ &\quad\quad
                       (\ite{\hdlabel{X} \in \lb{Y} \,\wedge\,
                       \dom{Y.\hdlabel{X} } = \dom{\hdtype{X} }}
                       {\\ &\quad\quad \reccons{\tmhdlabel{x}}{\lambda z{:}\dom{\hdtype{X}}.
                              \\
             &
               \qquad\qquad\qquad\qquad\quad\m{PCons}\, (\tmhdterm{x}\,z)\,
               (y.\tmhdlabel{x}\,z) }
                             {\m{PHelper}}\\ & \qquad }
                { \reccons{\tmhdlabel{x}}{\tmhdterm{x}}{\m{PHelper})} \\
               & \quad } }
                 {\\ & \qquad \ite{\neg\m{empty}(X)}{x}{y}   }
  \end{array}
\]
\[\begin{array}{ll}

    \m{PHelper} & = \ite{(\hdlabel{Y} \in \lb{X} \,\wedge\,
                       \dom{X.\hdlabel{Y} } = \dom{\hdtype{Y}}
                 \,\wedge\,\\
               & \quad \hdlabel{X} \neq \hdlabel{Y}) }
                 {\\ & \qquad
                       \reccons
                       {\tmhdlabel{y}}
                       {\lambda z{:}\dom{\hdtype{Y}} .\\ &\qquad\quad
                       \m{PCons}\, (x.\tmhdlabel{y}\,z) \, (\tmhdterm{y}\,z) }
                       {\\ & \quad
                             \m{dropField}(\tmhdlabel{y},
            \m{ObjPair}\,(\tl{x})\,(\m{dropField}(\tmhdlabel{x},\tl{y})))\\
               & \quad 
    } }
            {
               \,  \m{ObjPair}\,(\tl{x})\,(\m{dropField}(\tmhdlabel{x},y))                 }
  \end{array}
\]

The structure of the code follows that of the $\m{Pairer}$
definition. The key point is the new method construction, where we
define a function that takes a value $z$ in the domain of the head type of one of the
records and pairs up the result of applying the corresponding methods
of $x$ and $y$ to $z$.

  \end{example}
  }


\section{Operational Semantics and
  Metatheory}\label{sec:metatheory}

We now formulate the operational semantics of our language and
develop the standard type safety results in terms of uniqueness
of types, type preservation and progress.

Since the programming language includes a higher-order store, we
formulate its semantics in a (small-step) store-based reduction
semantics. Recalling that the syntax of the language includes the
runtime representation of store locations $l$, we represent the store
($H,H'$) as a finite map from labels $l$ to values $v$. Given that
kinding and refinement information is needed at runtime for the
property and kind test constructs, we tacitly thread a typing
environment in the reduction semantics.

Moreover, since types in our language are themselves structured
objects with computational significance, we make use of a type
reduction relation, written $T \rightarrow T'$, defined as a
call-by-value reduction semantics on types when seen as a
$\lambda$-calculus.
It is convenient to define a notion of {\em type value},
denoted by $T_v, S_v$ and given by the following grammar:
\[
  \begin{array}{rcl}
    T_v,S_v & ::= & \lambda t{::}K.T \mid \forall t{::}K.T \mid
                    \ell \mid \trecnil \mid \treccons{\ell}{T_v}{S_v} \mid
                    \tcol{T_v} \mid \tref{T_v} \mid T_v \rightarrow S_v \mid
                    \bot \mid \tbool \mid \one \mid t
                   
  \end{array}
\]

We note that it follows from the literature {on $F_\omega$ and related
systems} that type reduction is strongly
normalizing {
\cite{Pierce:2002:TPL:509043,DBLP:conf/popl/StoneH00,norell:thesis,DBLP:conf/icalp/Gimenez98}}. 
The values of the {\em term} language are defined by the grammar:
\[
        \begin{array}{rcl}
          v, v' & ::= & \true \bnfor \false \bnfor \record{} \bnfor \reccons{\ell}{v}{v'} \bnfor \abs{x {:} T_v} M \bnfor \typeabs{t {::} K} M 
                        \bnfor v :: v' \bnfor \emptycol{T}
                        \bnfor l
        \end{array}
\]
Values consist of the booleans $\true$ and $\false$ (extensions to
other basic data types are straightforward as usual); the empty record
$\record{}$; the non-empty record that assigns fields to values,
$\reccons{\ell}{v}{v'}$; the empty collection, $\emptycol{}$, and the
non-empty collection of values, $v :: v'$; as well as type
and $\lambda$-abstraction. For convenience of notation we write
$\trecord{\ell_1:T_1,\dots ,\ell_n:T_n}$ for
$\trecord{\ell_1:T_1}@ \dots @\trecord{\ell_n:T_n}@\trecord{}$, and
similarly $\record{\ell_1 = M_1, \dots , \ell_n = M_n}$ for
$\record{\ell_1 = M_1}@ \cdots @\record{\ell_n = M_n}@\record{}$.

The operational semantics is defined in terms of the judgment
$\stateconf{H}{M} \red \stateconf{H'}{M'}$, indicating that term $M$
with store $H$ reduces to $M'$, resulting in the store $H'$.
For conciseness, we omit congruence rules such as:
\[
  \inferrule*[right=({R-RecConsL})]
    { \stateconf{H}{M} \red \stateconf{H'}{M'} }
    { \stateconf{H}{ \reccons{\ell }{M}{N} } \red
      \stateconf{H'}{\reccons{\ell }{M'}{N}} }
\]
where the record field labelled by $\ell$ is evaluated (and the
resulting modifications in store $H$ to $H'$ are propagated
accordingly). 
The reduction rules enforce a call-by-value,
left-to-right evaluation order and are listed in
Figure~\ref{fig:opsem} (note that we require types occurring in an active
position to be first reduced to a type value, following the call-by-value discipline).
We refer the reader to
Appendix~\ref{app:opsemrules} for the complete set of rules.

\begin{figure}[t]

{\small
  \[
  \begin{array}{c}
    \inferrule*[lab=({R-RecHdLabV})]
    { }
    {\stateconf{H}{\tmhdlabel{\reccons{\ell}{v}{v'}}} \red \stateconf{H}{\ell}  }
\quad
    \inferrule*[lab=({R-RecHdValV})]
    { }
    {\stateconf{H}{\tmhdterm{\reccons{\ell}{v}{v'}}} \red \stateconf{H}{v}  }
\\[1em]
    \inferrule*[lab=({R-RecTailV})]
    { }
    {\stateconf{H}{\tmtl{\reccons{\ell}{v}{v'}}} \red \stateconf{H}{v'}  }
    \quad
    \inferrule*[lab=({R-RefV})]{ l \not\in \m{dom}(H) }{ \stateconf{H}{ \mkref{v}} \red \stateconf{H[l \mapsto v]}{l} }
    \quad
    \inferrule*[lab=({R-DerefV})]{ H(l) = v }{ \stateconf{H}{\deref{l}} \red \stateconf{H}{v} }\\[1em]

    \inferrule*[lab=({R-AssignV})]{ \,}
    { \stateconf{H}{ l := v } \red \stateconf{H[l \mapsto v]}{\unitelem} }
    \quad
    \inferrule*[lab=({R-PropT})]{  \Gamma  \models \varphi }{ \stateconfH{\ite{ \varphi }{M}{N} }  \red \stateconfH{M}}\\[1em]
    \inferrule*[lab=({R-PropF})]{ \Gamma \models \neg \varphi }{ \stateconfH{ \ite{ \varphi }{M}{N} }  \red \stateconfH{N}}
    \quad
      \inferrule*[lab=({R-IfT})]{ }{ \stateconf{H}{ \ite{\true}{M}{N}} \red \stateconf{H}{M}}\\[1em]
    \inferrule*[lab=({R-IfF})]{ }{ \stateconf{H}{\ite{\false}{M}{N}} \red \stateconf{H}{N}}
    \quad
      \inferrule*[lab=({R-Fix})]{  }{ \stateconfH{\vrec{F}{T}{M}}
    \red \stateconf{H}{M\{\vrec{F}{T}{M}/{F}\}}}\\[1em]
    \inferrule*[lab=({R-TAppTRed})]
    {T \rightarrow T'}
    {\stateconf{H}{(\Lambda t{::}K.M)[T]} \red
    \stateconf{H}{(\Lambda t{::}K.M)[T']}}\\[1em]
    \inferrule*[lab=({R-TApp})]{   }{ \stateconfH{(\Lambda t {::} K. \, M) [T_v]}   \red  \stateconfH{M\{T_v/t\}} }
    \quad
    \inferrule*[lab=({R-AppV})]{ 
    }{ \stateconf{H}{(\lambda x: {T_v}. M) \, v} \red \stateconf{H}{M\{v/x\}} }\\[1em]
     \inferrule*[lab=({R-ColCaseEmp})]
    {\, }
    {\stateconf{H}{\colcase{\emptycol{}}{N_1}{x {::} xs \Rightarrow
    N_2}} \red
    \stateconf{H}{N_1}}
\\[1em]
       \inferrule*[lab=({R-ColCaseCons})]
    {\, }
    {\stateconf{H}{\colcase{v :: vs}{N_1}{x {::} xs \Rightarrow
    N_2}} \red
    \stateconf{H}{N_2\{v/x,vs/xs\}}}\\[1em]

    \inferrule*[lab=(R-KindTRed)]
    {T\red T'}
    {\stateconf{H}{\tkindofP{T}{\smallkind}{M}{N} }  \red
    \stateconf{H}{\tkindofP{T'}{\smallkind}{M}{N} }}\\[1em]
    
     \inferrule*[lab=({R-KindL})]{\Ga \vdash T_v :: \smallkind }{
      \stateconf{H}{ \tkindofP{T_v}{\smallkind}{M}{N} }
        \red
        \stateconf{H}{M\{T/t\}}
      } \quad
    \inferrule*[lab=({R-KindR})]{\Ga \vdash T_v :: K_0 \quad \Ga \vdash K_0 \not\equiv \smallkind }{
    \stateconf{H}{ \tkindofP{T_v}{\smallkind}{M}{N} }
        \red
        \stateconf{H}{N}
      } 
  \end{array}
\]
}

\caption{Operational Semantics (Excerpt)~\label{fig:opsem}}
\end{figure}

The three rules for the record destructors project the appropriate
record element as needed. Treatment of references is also standard,
with rule {\sc (R-RefV)} creating a new location $l$ in the store
which then stores value $v$; rule {\sc (R-DerefV)} querying the store
for the contents of location $l$; and rule for {\sc (R-AssignV)}
replacing the contents of location $l$ with $v$ and returning
$v$. Rules {\sc (R-PropT)} and {\sc (R-PropF)} are the only ones that
appeal to the entailment relation for refinements, making use of the
running environment $\Ga$ which is threaded through the reduction
rules straightforwardly. Similarly, rules {\sc (R-KindL)} and {\sc
  (R-KindR)} mimic the equality rules of the kind case construct,
testing the kind of type $T$ against $\mathcal{K}$.

\subsection{Metatheory}

We now develop the main metatheoretical results of type preservation,
progress and uniqueness of kinding and typing.  We begin by noting
that types and their kinding system are essentially as complex as a
type theory with singletons
\cite{DBLP:conf/popl/StoneH00,DBLP:journals/tocl/StoneH06}.  Theories
of singleton kinds essentially amount to $F_\omega$
\cite{GIRARD1986159} with kind dependencies and a fairly powerful but
decidable definitional equality. This is analogous to our development,
but where singletons are replaced by kind refinements and the
additional logical reasoning on said refinements, and the type
language includes additional primitives to manipulate types as data.
Notably, when we consider terms and their typing there is no
significant added complexity since our typing rules are
essentially those of an ML-style, quotiented by a more intricate
notion of type equality.

In the remainder of this section we write $\Ga \vdash \mathcal{J}$ to
stand for a typing, kinding, entailment or equality judgment as
appropriate.  
Since entailment is defined by appealing to SMT-validity, we require some
basic soundness assumptions on the entailment relation, which we list below.

\begin{postulate}[Assumed Properties of Entailment]
  ~\label{prop:refinementprops}
  
  \begin{description}
  \item[Substitution:] If $\Ga \vdash T :: K$ and $\Ga , t{:}K,\Ga' \models \varphi$ then
    $\Ga ,\Ga'\{T/k\} \models \varphi\{T/t\}$;
  \item[Weakening:] If $\Ga \models \varphi$ then $\Ga' \models \varphi$ where $\Ga \subseteq \Ga'$;
  \item[Functionality:] If $\Ga \models T \equiv S :: K$ and
    $\Ga ,t:K,\Ga' \vdash \varphi$ then $\Ga \models \varphi\{T/t\} \Leftrightarrow \varphi\{S/t\}$.
  \item[Soundness:] If $\m{Valid}(\llb\Ga\rrb \Rightarrow
    \llb\varphi\rrb)$, then $\llb\Ga\rrb
    \Rightarrow \llb\varphi\rrb$ is valid; If $\,\m{Valid}(\llb\Ga\rrb \Rightarrow
    \llb\varphi\rrb)$ answers negatively, then it is not the case that
    $\neg(\llb\Ga\rrb \Rightarrow \llb\varphi\rrb)$ is valid.
  \end{description}
\end{postulate}

The general structure of the development is as follows: we first
establish basic structural properties of substitution
(Lemma~\ref{lem:subst}) and weakening, which we can then use to show
that we can apply type and kind conversion inside contexts
(Lemma~\ref{lem:ctxtconv}), which then can be used to show a so-called
{\em validity} property for equality (Theorem~\ref{lem:valideq}),
stating that equality derivations only manipulate well-formed objects (from
which kind preservation -- Lemma~\ref{cor:kindpres} -- follows).

\begin{restatable}[Substitution]{lemma}{subst}
 \label{lem:subst} ~
 \begin{enumerate}
 \item[(a)] If $\Ga \vdash T :: K$ and
 $\Ga , t{:}K , \Ga' \vdash \mathcal{J}$ then $\Ga , \Ga'\{T/t\}
 \vdash \mathcal{J}\{T/t\}$.
\item[(b)] If $\Ga \vdash M : T$ and $\Ga , x{:}T , \Ga' \vdash N : S$
  then $\Ga ,\Ga' \vdash N\{M/x\} : S$.
 \end{enumerate}
\end{restatable}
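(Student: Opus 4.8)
The plan is to prove both parts by simultaneous induction on the derivation of the second hypothesis (the judgment $\mathcal{J}$ or $N:S$), since the various judgment forms --- kinding, typing, well-formedness, subkinding, type equality, and entailment --- are mutually recursive and so a single induction must cover all of them at once. Throughout, I would rely silently on the obvious fact that substitution commutes with all the syntactic operations (capture-avoiding substitution into types, kinds, refinements, and terms), and I would use the convention that $t$ (resp.\ $x$) is not free in $T$ (resp.\ $M$) and does not occur in $\Ga$, so that iterated substitutions behave as expected.

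For part (a), most cases are entirely routine: for each rule, apply the induction hypothesis to the premises, then reassemble the conclusion, using that $K'\{S/t\}\{T/t\}$ and similar nested substitutions can be rearranged. The cases that deserve care are those where substitution must be pushed under a binder --- \textsc{(k-fun)}, \textsc{(k-$\forall$)}, \textsc{(k-fix)}, \textsc{(k-ite)}, \textsc{(k-kcase)}, and the analogous equality rules --- where one extends $\Ga'$ with a fresh variable or a refinement assumption and invokes the induction hypothesis on the larger context $\Ga, t{:}K, \Ga'', \dots$; here one also needs that the \textsc{structural} predicate is stable under substitution of a type for a type variable (which holds because substitution does not create new occurrences of the recursion variable $F$ nor change the destructor structure of arguments). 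The genuinely delicate point, and what I expect to be the main obstacle, is the entailment and refinement-formation cases: when the derivation ends in \textsc{(entails)} or uses \textsc{(kref)}, \textsc{(sub-ref)}, \textsc{(r-eqelim)}, \textsc{(eq-iteT/E)}, etc., the semantic side-conditions $\Ga \models \varphi$ are \emph{not} established by the syntactic rules under induction but are postulated properties of the SMT-backed entailment relation. Here I would appeal directly to the \textbf{Substitution} clause of Postulate~\ref{prop:refinementprops}, which states exactly that $\Ga , t{:}K, \Ga' \models \varphi$ and $\Ga\vdash T::K$ imply $\Ga,\Ga'\{T/t\}\models\varphi\{T/t\}$; combined with the syntactic induction hypothesis for the well-formedness premise $\Ga\vdash\varphi$, this discharges these cases.

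For part (b), I would proceed by induction on the derivation of $\Ga, x{:}T, \Ga' \vdash N : S$, but in fact it suffices to take $\Ga'$ empty by the way the statement is phrased (the general form with a trailing $\Ga'$ follows by the same argument, threading $\Ga'$ through unchanged since term variables do not occur in types or kinds). The interesting cases are \textsc{(var)} --- where if the variable is $x$ itself we return the hypothesis $\Ga\vdash M:T$ (possibly after weakening, using the \textbf{Weakening} postulate for any refinement side-conditions in the well-formedness premises), and otherwise we return the variable unchanged --- and the binder cases \textsc{($\rightarrow$I)}, \textsc{($\forall$I)}, \textsc{(fix)}, \textsc{(case)}, \textsc{(prop-ite)}, and \textsc{(kindcase)}, where one pushes the substitution under the binder after an $\alpha$-renaming to keep the bound variable distinct from $x$ and from the free variables of $M$. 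Crucially, because our language has \emph{no} type dependency on terms, substituting a term for a term variable never touches any type, kind, or refinement occurring in the judgment, so the conversion rule \textsc{(conv)} and the refinement machinery cause no trouble in part (b): the type equality premise $\Ga, x{:}T \models U \equiv S :: \ktype$ simply becomes $\Ga \models U \equiv S :: \ktype$ by dropping an unused hypothesis (again justified by \textbf{Weakening}, read in reverse, or more precisely by a straightforward strengthening observation since $x$ genuinely cannot appear). This asymmetry --- part (a) needing the full strength of the entailment postulates, part (b) being essentially a standard ML-style substitution lemma --- is worth flagging, and the proof write-up would reflect it by treating (a) in detail and (b) briskly.
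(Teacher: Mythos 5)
Your proposal is correct and follows essentially the same route as the paper: induction on the derivation of the second judgment (covering all judgment forms at once), routine reassembly for most rules, and an appeal to the postulated substitution property of entailment to discharge the \textsc{(entails)}/refinement cases (the paper phrases this as ``by logical substitution / congruence''). Your additional remarks --- stability of the \textsc{structural} predicate under type substitution, and the observation that part~(b) is trivialized by the absence of term dependency in types --- are points the paper leaves implicit but are consistent with its argument.
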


\begin{restatable}[Context Conversion]{lemma}{ctxtconv}
  \label{lem:ctxtconv}~
  \begin{enumerate}
\item[(a)]  Let $\Ga , x{:}T \vdash$ and $\Ga \vdash T' :: K$.
  If $\Ga , x{:}T \vdash \mathcal{J}$ and $\Ga \models T \equiv T' :: K$ then
  $\Ga,x{:}T' \vdash \mathcal{J}$.

\item[(b)] Let $\Ga , t{:}K \vdash$ and $\Ga \vdash K'$.
  If $\Ga , t{:}K \vdash \mathcal{J}$ and $\Ga \vdash K \leq K'$ then
  $\Ga , t{:}K' \vdash \mathcal{J}$.

  \end{enumerate}
\end{restatable}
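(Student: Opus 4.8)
The plan is to prove both parts by a single induction on the derivation of the judgment $\mathcal{J}$, exploiting that the typing, kinding, subkinding, entailment, well-formedness and equality judgments form one mutually inductive rule system. The ambient tools are the weakening lemma (established just before this result), the substitution principle of Lemma~\ref{lem:subst}, and the assumed properties of entailment in Postulate~\ref{prop:refinementprops} --- \textbf{Substitution}, \textbf{Weakening} and \textbf{Functionality} --- which transport the \textsc{(entails)} and equality side-conditions along the context change. For every rule whose principal judgment does not directly consume the converted assumption ($x{:}T$ in part (a), $t{:}K$ in part (b)), the conclusion follows by applying the induction hypothesis to each premise and re-assembling the rule; binder rules such as \textsc{(k-fun)}, \textsc{($\forall$I)}, \textsc{($\rightarrow$I)} and \textsc{(prop-ite)} additionally invoke weakening to push the converted entry past the freshly bound variable, with refinement side-conditions discharged via the entailment postulates. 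Recalling that a context refinement $\Ga,\varphi$ is represented as a dummy entry $\Ga,x{:}\kref{\_}{\varphi}$, the property-test cases are subsumed by the variable cases.

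For part (a) the only essential case is the rule \textsc{(var)} concluding $\Ga,x{:}T,\Ga' \vdash_S x : T$. Under the converted context $\Ga,x{:}T',\Ga'$ the same rule yields $x : T'$, and since $\Ga \models T \equiv T' :: K$, by symmetry and subkinding of the equality judgment we obtain $\Ga \models T' \equiv T :: \ktype$, so rule \textsc{(conv)} recovers $\Ga,x{:}T',\Ga' \vdash_S x : T$, closing the case. Context well-formedness of $\Ga,x{:}T',\Ga'$ is immediate from the hypothesis $\Ga \vdash T' :: K$. All other uses of $x$ flow through \textsc{(var)}, and the equality hypothesis is propagated into nested contexts using \textbf{Functionality} from Postulate~\ref{prop:refinementprops}.

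For part (b) the analogous essential case is the type-variable lookup concluding $\Ga,t{:}K,\Ga' \vdash t :: K$, together with the eliminations that immediately consume the kind of $t$ --- \textsc{(k-hdl)}, \textsc{(k-hdt)}, \textsc{(k-tail)}, \textsc{(k-dom)}, \textsc{(k-codom)} --- and the introduction rule \textsc{(kref)}. Under the converted context the lookup yields $\Ga,t{:}K',\Ga' \vdash t :: K'$. For every use where the source derivation consumed $t$ at a superkind of $K'$, rule \textsc{(K-sub)} re-supplies that kinding directly under $t{:}K'$, and since $\Ga \vdash K \leq K'$ each such superkind of $K'$ is also a superkind of $K$, so the two derivations coincide on these steps. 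The entailment obligations generated along the way are relocated using \textbf{Substitution} and \textbf{Weakening} from Postulate~\ref{prop:refinementprops}.

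The hard part is precisely the residual uses of $t$ in this case: the steps at which the original derivation inspected the \emph{refined} structure of $t$ at a kind that is a superkind of $K$ but not of $K'$. For these the plan is to reconstruct the required kinding of $t$ under $t{:}K'$ by appealing to entailment in the refinement logic, showing that the subkinding hypothesis $\Ga \vdash K \leq K'$ --- through \textsc{(sub-refkind)}, \textsc{(sub-ref)} and the logical content of the refinements involved --- still delivers, via \textsc{(K-sub)} and \textsc{(kref)}, the kinding demanded at each such step. Carefully tracking how the refinement information attached to $t$ is carried and re-derived across the context change, and discharging the resulting logical obligations with the entailment postulates, is where the bulk of the work concentrates; the remaining cases are routine applications of the induction hypothesis together with weakening and Lemma~\ref{lem:subst}.
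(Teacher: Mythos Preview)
Your proposal takes a genuinely different route from the paper. You set up a full mutual induction on the derivation of $\mathcal{J}$, with a case analysis over every rule of the system and special handling of the variable/look-up cases. The paper instead gives a short, \emph{non-inductive} argument that reduces context conversion to the already-established weakening and substitution lemmas: $\alpha$-convert the bound occurrence to a fresh name $x'$ (resp.\ $t'$), weaken the resulting judgment by inserting the new declaration $x{:}T'$ (resp.\ $t{:}K'$) in front, derive $\Ga,x{:}T' \vdash x : T$ via the variable rule plus \textsc{(conv)} from $T' \equiv T$ (resp.\ $\Ga,t{:}K' \vdash t :: K$ via subsumption), and then apply Lemma~\ref{lem:subst} to substitute $x$ for $x'$ (resp.\ $t$ for $t'$), collapsing back to $\mathcal{J}$. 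No case analysis on rules, no tracking of binder rules, no separate treatment of refinement side-conditions.

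What each approach buys: your induction is the ``honest'' route and makes explicit exactly where the converted assumption is consumed, but it is long and, as you yourself flag, runs into a genuinely delicate point in part~(b) --- the residual uses of $t$ at kinds between $K$ and $K'$ --- for which your sketch (``carefully tracking\ldots and discharging the resulting logical obligations'') is not yet a proof. The paper's trick sidesteps that difficulty entirely because it never opens the derivation: once you have the converted variable typed/kinded at the \emph{old} annotation, substitution does all the work uniformly. If you want to keep your inductive approach you will need to make the hard case for~(b) precise; otherwise, the weakening-plus-substitution argument is both shorter and avoids the issue.
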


\begin{restatable}[Validity for Equality]{theorem}{valideq}
\label{lem:valideq}~
  \begin{enumerate}
  \item[(a)] If $\Ga \vdash K \leq K'$ and $\Ga \vdash$ then $\Ga \vdash K$ and $\Ga \vdash K'$.
  \item[(b)] If $\Ga \models T \equiv T' :: K$ and $\Ga \vdash$ then $\Ga\vdash K$, $\Ga \vdash T :: K$ and
    $\Ga \vdash T' :: K$.
  \item[(c)] If $\Ga \models \psi\Leftrightarrow \varphi$ and $\Ga \vdash$ then $\Ga
    \vdash \psi$ and $\Ga \vdash \varphi$
  \end{enumerate}
\end{restatable}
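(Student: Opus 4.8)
The plan is to prove the three statements simultaneously by mutual induction on the derivations of $\Ga \vdash K \leq K'$, $\Ga \models T \equiv T' :: K$, and $\Ga \models \psi \Leftrightarrow \varphi$, since the subkinding, type-equality, and refinement-equivalence judgments are mutually recursive (subkinding via \textsc{(sub-ref)} appeals to entailment on refinements, type equality via the congruence and $\eta$-rules refers to kinding which in turn uses subkinding, etc.). The key structural tools are already available: Lemma~\ref{lem:subst} (substitution), weakening, and Lemma~\ref{lem:ctxtconv} (context conversion). Throughout we also freely use that context well-formedness $\Ga \vdash$ lets us extract $\Ga \vdash T :: K$ for any $t{:}K \in \Ga$ and $\Ga \vdash \varphi$ for any $\varphi \in \Ga$, and that kind well-formedness $\Ga \vdash K$ is preserved under weakening.

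For part (a), the cases \textsc{(sub-type)} and \textsc{(sub-refkind)} are immediate from their premises together with the definition of kind well-formedness (the premise $\Ga, t{:}\smallkind \vdash \varphi$ of \textsc{(sub-refkind)} is exactly what is needed to form $\kref{t{::}\smallkind}{\varphi}$). For \textsc{(sub-ref)} we use the induction hypothesis on $\Ga \vdash \smallkind \leq \smallkind'$ to get $\Ga \vdash \smallkind$ and $\Ga \vdash \smallkind'$, and the entailment premise $\Ga, t{:}\smallkind' \models \varphi \Rightarrow \varphi'$ together with part (c) of the induction hypothesis (applied to the implication, or to its constituent refinement well-formedness obligations) to conclude $\Ga, t{:}\smallkind' \vdash \varphi$ and $\Ga, t{:}\smallkind' \vdash \varphi'$, hence $\Ga, t{:}\smallkind \vdash \varphi$ via context conversion (Lemma~\ref{lem:ctxtconv}(b)) using $\smallkind \leq \smallkind'$; this gives well-formedness of both refinement kinds. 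The case where $\leq$ internalises kind equality is handled by appeal to part (b) (or is folded into the above rules, depending on the precise rule set).

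For part (b), the congruence rules follow directly by applying the induction hypotheses to the subderivations and then re-applying the corresponding kinding rule; e.g.~for the congruence of $\treccons{\cdot}{\cdot}{\cdot}$ we get kinding of each component and reassemble via \textsc{(k-reccons)}. The $\beta$-rules \textsc{(eq-funapp)}, \textsc{(eq-fixunf)}, \textsc{(eq-dom)}, \textsc{(eq-headtype)}, \textsc{(eq-col)}, etc., are handled by forming the redex's kinding derivation from the premises (using \textsc{(k-app)}, \textsc{(k-fun)}, and so on) and the reduct's kinding by Lemma~\ref{lem:subst}(a) for substitution instances. The extensionality rule \textsc{(eq-funext)} needs the $\eta$-typing rule \textsc{(k-ext)} to move from $\Pi t{:}K_1.K_3$ (resp.~$K_4$) to $\Pi t{:}K_1.K_2$. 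Rules \textsc{(eq-iteT)}, \textsc{(eq-iteE)}, \textsc{(eq-iteEq)} carry enough kinding premises to conclude directly, after using part (c) on the $\Ga \vdash \varphi$ side-condition; \textsc{(eq-bot)} uses \textsc{(k-bot)}. The conversion/transitivity and symmetry rules for $\equiv$ are routine once the statement is set up as a simultaneous induction. Part (c) is proved analogously, peeling off the propositional structure of $\psi \Leftrightarrow \varphi$ and bottoming out in the atomic refinement equalities, whose well-formedness follows from the well-formedness of the extended types they mention (using the kinding premises attached to the relevant equality rules).

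The main obstacle I anticipate is the mutual-induction bookkeeping, specifically the interleaving of subkinding and entailment in \textsc{(sub-ref)}: to conclude well-formedness of a refined kind we must know well-formedness of its refinement formula, which lives one judgment form away (part (c)), and that in turn, via the atomic refinement predicates that mention destructors like $\hdtype{\cdot}$ and $\dom{\cdot}$, loops back to kinding and hence subkinding. Making the induction measure (structural on derivations, with the three judgments bundled) genuinely decrease in every such cross-call — and checking that the context-conversion and substitution lemmas, which are themselves proved by induction on derivations, can be invoked here without circularity — is the delicate part; the individual rule cases are otherwise mechanical.
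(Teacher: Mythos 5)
Your proposal is correct and follows essentially the same route as the paper: induction on the given derivation (with the three judgment forms handled together), using the previously established substitution and context-conversion lemmas, inversion on premises, and re-application of the kinding and well-formedness rules in each case. The only cosmetic difference is that in the \textsc{(sub-ref)} case the paper extracts well-formedness of the refinement formulae directly by inversion on the entailment premise (rule \textsc{(entails)} already carries $\Ga \vdash \varphi$) rather than routing through part (c) of the induction hypothesis, which sidesteps the cross-judgment bookkeeping you flag as the delicate point.
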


\begin{restatable}[Kind Preservation]{lemma}{kindpres}
  \label{cor:kindpres}

 If $\Ga \vdash T :: K$ and $T \rightarrow T'$ then $\Ga \vdash T' :: K$.  
\end{restatable}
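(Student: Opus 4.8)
The plan is to prove Lemma~\ref{cor:kindpres} by induction on the derivation of the type reduction $T \to T'$, viewing this relation as the call-by-value rewriting system on types (records, function types, collections, references, type application, structural recursion, kind case and the property test all being subject to reduction). The key observation organising the argument is that each \emph{redex-contraction} rule of type reduction is the computational counterpart of one of the definitional equality rules of Section~\ref{sec:kindtypeeq}: $\beta$-reduction of $(\lambda t{::}K.T_0)\,S_v$ corresponds to \textsc{(eq-funapp)}, the record destructors to \textsc{(eq-headlabel)}/\textsc{(eq-headtype)}/\textsc{(eq-tail)}, $\dom{\cdot}$ and $\codom{\cdot}$ to \textsc{(eq-dom)}/\textsc{(eq-img)}, $\colof{\cdot}$ and $\refof{\cdot}$ to \textsc{(eq-col)}/\textsc{(eq-ref)}, fixpoint unfolding to \textsc{(eq-fixunf)}, the kind case to \textsc{(eq-kcaseT)}/\textsc{(eq-kcaseF)}, and the type-level property test to \textsc{(eq-iteT)}/\textsc{(eq-iteE)}. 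So the strategy is: turn a contraction step into an equality judgment and then read back a kinding judgment for $T'$ via Validity for Equality (Theorem~\ref{lem:valideq}(b)).

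To make the redex cases go through, I first need a generation (inversion) lemma for the kinding judgment: from $\Ga \vdash T :: K$, where $T$ has a fixed head shape, recover the premises of the matching kinding rule together with a subkinding $\Ga \vdash K_0 \leq K$ relating the ``natural'' kind $K_0$ that the rule assigns to $T$ to the ambient kind $K$. The only non-routine point here is that kinding absorbs subsumption through rule \textsc{(K-sub)} (which itself internalises kind equality), so the lemma must collapse arbitrary chains of subkinding steps; this relies on the expected structural facts about $\leq$ --- contravariance in $\Pi$-domains, covariance in $\Pi$-codomains, and the entailment-based inclusion between refinement kinds established by \textsc{(sub-refkind)}/\textsc{(sub-ref)} --- together with Theorem~\ref{lem:valideq}(a) to keep the intermediate kinds well-formed and Lemma~\ref{lem:subst} to substitute into subkindings where a dependent kind is instantiated (as in \textsc{(k-app)}).

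Given this, the redex cases are uniform: inversion on $\Ga \vdash T :: K$ hands us exactly the hypotheses of the corresponding equality rule, so we obtain $\Ga \models T \equiv T' :: K_0$ at the natural kind $K_0$, promote it along $\Ga \vdash K_0 \leq K$ (using the equality-side subsumption, which for equality judgments is admissible by the same generation/subkinding reasoning, followed by \textsc{(K-sub)} on the conclusion), and invoke Theorem~\ref{lem:valideq}(b) to conclude $\Ga \vdash T' :: K$. The contractions that fire under a refinement guard --- the type-level analogues of \textsc{(R-PropT)}/\textsc{(R-PropF)} and \textsc{(R-KindL)}/\textsc{(R-KindR)} --- additionally require the side conditions $\Ga \models \varphi$ (resp.\ that $T_v$ has kind $\smallkind$, resp.\ a kind provably inequivalent to $\smallkind$) in order to match the premises of \textsc{(eq-iteT)}/\textsc{(eq-iteE)} and \textsc{(eq-kcaseT)}/\textsc{(eq-kcaseF)}; these are precisely the guards carried by the reduction rules themselves, so they are available directly, and soundness of the entailment used in them is covered by Postulate~\ref{prop:refinementprops}. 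The congruence cases (reduction inside a subterm) do not pass through equality at all: inversion gives a kinding derivation for the active subterm, the induction hypothesis rewrites that subterm, and re-applying the same kinding rule --- with Lemma~\ref{lem:subst} wherever a substituted kind occurs in the conclusion --- rebuilds $\Ga \vdash T' :: K_0$, after which \textsc{(K-sub)} recovers the ambient kind $K$.

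I expect the generation lemma to be the main obstacle, and more precisely the part of it that must reason cleanly through interleavings of \textsc{(K-sub)} with kind equality when the kinds in play are themselves refinement kinds: this is exactly where the extra logical content of kind refinements (over plain \Fomega, and even over singleton kinds) has to be tamed, and it is where the assumed properties of entailment (Postulate~\ref{prop:refinementprops}) and Theorem~\ref{lem:valideq} do the real work. By contrast, once the redex/equality-rule correspondence is laid out explicitly, the per-case equational bookkeeping is routine.
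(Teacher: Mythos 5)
Your proposal is correct and takes essentially the same route as the paper: the paper's entire proof of this lemma is ``Immediate from equality validity since $T \rightarrow T'$ implies $T\equiv T'$'', i.e.\ exactly your observation that each reduction rule is an oriented definitional-equality rule, followed by an appeal to Theorem~\ref{lem:valideq}(b). Your plan merely spells out the inversion/generation and congruence bookkeeping that the paper leaves implicit in that one line.
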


This setup then allows us to show so-called functionality properties
of kinding and equality (see Appendix~\ref{app:proofs}), stating that
substitution is consistent with our theory's definitional equality and
that definitional equality is compatible with substitution of
definitionally equal terms.

  



With functionality and the previous properties we can then establish
the so-called validity theorem for our
theory, which is a general well-formedness property of the judgments
of the language. Validity is crucial in establishing the various type
and kind inversion
principles (note that the inversion principles become non-trivial due to the
closure of typing and kinding under equality)
necessary to show uniqueness of types and kinds
(Theorem~\ref{thm:unicity}) and type preservation
(Theorem~\ref{thm:preserv}). Moreover, kinding crucially ensures that
all types of refinement kind are such that the corresponding
refinement is SMT-valid.

\begin{restatable}[Unicity of Types and Kinds]{theorem}{unicity}
\label{thm:unicity}~
\begin{enumerate}
\item  If \,$\Ga \vdash M : T$ and $\Ga \vdash M : S$ then $\Ga \vdash T \equiv S :: K$ and $\Ga \vdash K \leq \ktype$.
\item If $\Ga \vdash T :: K$ and $\Ga \vdash T :: K'$ then
    $\Ga \vdash K \leq K'$ or $\Ga \vdash K' \leq K$.
\end{enumerate}
\end{restatable}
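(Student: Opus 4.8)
The plan is to prove both parts by induction on the derivation of the first judgment, using the inversion principles that follow from the validity theorem. The key subtlety, as the excerpt already flags, is that typing and kinding are closed under equality (rules {\sc (conv)} and {\sc (K-sub)}), so the syntactic structure of the term or type does not immediately determine the shape of the last rule used. I would therefore first establish (or invoke) \emph{inversion lemmas} that strip off uses of {\sc (conv)} and {\sc (K-sub)}: for each term constructor, a lemma saying that if $\Ga \vdash M : T$ with $M$ of a given shape, then $T$ is definitionally equal to the canonical type dictated by the syntax-directed rule, with the corresponding premises derivable; and symmetrically for each type constructor, that if $\Ga \vdash T :: K$ then $K$ is related by subkinding to the canonical kind produced by the syntax-directed kinding rule. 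These are exactly the inversion principles the paragraph before the theorem says are ``crucial'' and ``non-trivial'', and validity (well-formedness of all judgments) is what makes them go through.

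For part~(2), the argument proceeds by induction on $\Ga \vdash T :: K$. In the base cases the type constructor is syntax-directed up to {\sc (K-sub)}: e.g.\ for $T = \trecord{}$ every derivation of $\Ga \vdash \trecnil :: K'$ must, by inversion, factor through {\sc (k-recnil)} giving $\krecord$, followed by a chain of subkinding steps, so $K'$ satisfies $\Ga \vdash \krecord \leq K'$ (and likewise $\Ga \vdash \krecord \leq K$), whence by transitivity $K$ and $K'$ are each above $\krecord$ --- but I actually need $K \leq K'$ or $K' \leq K$, so I argue that both are obtained from $\krecord$ by {\sc (sub-refkind)}/{\sc (sub-ref)} and that the resulting refinement kinds over the same basic kind are linearly comparable, or more simply that both equal $\krecord$ up to {\sc (sub-eq)} when no refinement was added. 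The inductive cases (application {\sc (k-app)}, abstraction {\sc (k-fun)}, the destructors, {\sc (k-ite)}, {\sc (k-kcase)}, {\sc (k-fix)}) use the IH on the relevant premise and then congruence/transitivity of $\leq$; the {\sc (kref)} case is where one derivation added a refinement $\varphi$ and the other did not (or added $\psi$), so one side is $\kref{t{::}\smallkind}{\varphi}$ and by {\sc (sub-refkind)} it is a subkind of $\smallkind$, which by IH is comparable to the other kind --- and one then pushes the comparison through, possibly using {\sc (sub-ref)} together with entailment (Postulate~\ref{prop:refinementprops}) to reconcile the two refinements. The {\sc (K-sub)} case is immediate from the IH and transitivity of $\leq$.

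For part~(1), I induct on $\Ga \vdash M : T$, using part~(2) together with the term-level inversion lemmas. For each syntax-directed term rule the canonical type is determined up to $\equiv$: e.g.\ for $M = M_1\,M_2$, inversion on both $\Ga \vdash M_1 M_2 : T$ and $\Ga \vdash M_1 M_2 : S$ yields $\Ga \vdash M_1 : T'_1 \rightarrow T$ and $\Ga \vdash M_1 : S'_1 \rightarrow S$ (up to {\sc (conv)}), and the IH on $M_1$ gives $\Ga \vdash (T'_1\rightarrow T) \equiv (S'_1 \rightarrow S) :: K$; then using the destructor equalities {\sc (eq-dom)}/{\sc (eq-img)} (or direct injectivity of $\rightarrow$ obtained from validity-driven inversion on type equality) I extract $\Ga \vdash T \equiv S :: \ktype$. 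The cases for abstractions, records and their destructors, references, collections, {\sc (prop-ite)} and {\sc (kindcase)} are analogous, each time reducing to equality of the canonical types and invoking functionality of equality from Postulate~\ref{prop:refinementprops} where substitutions into types occur (e.g.\ {\sc ($\forall$E)}). The {\sc (conv)} case is discharged by the IH plus transitivity and symmetry of $\equiv$ (available from validity, Theorem~\ref{lem:valideq}), and the requirement $\Ga \vdash K \leq \ktype$ holds because every term type is kinded at some basic kind or $\ktype$, all of which are below $\ktype$ by {\sc (sub-type)}.

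\textbf{Main obstacle.} The hard part is not the induction itself but obtaining the inversion lemmas in a usable form: because {\sc (conv)}/{\sc (K-sub)} can be interleaved arbitrarily, one must show that a chain of such coercions can be collapsed into a single equality (resp.\ subkinding) step applied once at the end, which in turn relies on validity and on injectivity/congruence of the type and kind formers under definitional equality --- and definitional equality here is entangled with the refinement logic (via {\sc (r-eqelim)}, {\sc (eq-iteT/E/Eq)}, {\sc (eq-kcaseT/F)}, {\sc (eq-bot)}). So the delicate step is proving that, say, $T_1 \rightarrow T_2 \equiv S_1 \rightarrow S_2 :: \kfunction$ forces $T_i \equiv S_i$ even in the presence of refinement-driven equalities and of the degenerate {\sc (eq-bot)} case where $\Ga \models \bot$; this is handled by first ruling out inconsistent contexts (or observing that under $\Ga \models \bot$ every judgment holds, so unicity is trivial) and otherwise appealing to the canonical-forms structure of type values together with strong normalization of type reduction and the {\sc (eq-dom)}/{\sc (eq-img)} projections. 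This is exactly the ``singleton-kinds-style'' reasoning the paper alludes to, and it is the technical crux on which both parts rest.
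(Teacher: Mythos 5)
Your plan matches the paper's proof in all essentials: the paper proves both parts by induction on the structure of the term/type, relying on exactly the machinery you identify — inversion lemmas that collapse interleaved uses of {\sc (conv)}/{\sc (K-sub)} into a single trailing equality or subkinding step (established via validity), equality-inversion/injectivity of type and kind formers, and transitivity to discharge the conversion cases, with part~(1) leaning on part~(2). The only cosmetic difference is that you phrase the induction as being on the derivation rather than on the syntax of the term/type, which is equivalent once the inversion lemmas are in place.
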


In order to state type preservation we first define the usual
notion of well-typed store, written $\Ga \vdash_S H$, denoting that
for every $l$ in $\m{dom}(H)$ we have that $\Ga \vdash_S l : \tref{T}$
with $\cdot \vdash H(l) : T$. We write $S \subseteq S'$ to denote that
$S'$ is an extension of $S$ (i.e. it preserves the location typings of
$S$).

\begin{restatable}[Type Preservation]{theorem}{preserv}
\label{thm:preserv}
  Let $\Ga \vdash_S M : T$ and $\Ga\vdash_s H$. If
  $\stateconf{H}{M} \red \stateconf{H'}{M'}$ then there exists
  $S'$ such that $S \subseteq S'$, $\Ga\vdash_{S'} H'$ and
  $\Ga \vdash_{S'} M' : T$.

\end{restatable}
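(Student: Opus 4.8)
The plan is to proceed by induction on the reduction derivation $\stateconf{H}{M} \red \stateconf{H'}{M'}$, with an inner case analysis on the last rule applied. The key infrastructure is already in place: substitution (Lemma~\ref{lem:subst}), context conversion (Lemma~\ref{lem:ctxtconv}), kind preservation (Lemma~\ref{cor:kindpres}), unicity (Theorem~\ref{thm:unicity}), and the validity theorem with its attendant type and kind inversion principles. For each reduction rule, I would first apply the appropriate inversion principle to $\Ga \vdash_S M : T$ to recover the typing of the immediate subterms, then reassemble a typing derivation for $M'$, possibly closing under the conversion rule {\sc (conv)} to account for the fact that $T$ may only be reachable up to definitional equality. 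Throughout, the store typing $S'$ is taken to be $S$ itself except in the {\sc (R-RefV)} case, where $S' = S[l \mapsto T_0]$ for the type $T_0$ of the stored value, and one checks that $\Ga \vdash_{S'} H'$ still holds (extension of a well-typed store with a fresh, well-typed location) and that weakening the location typing preserves the typing of $M'$.

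The routine cases are the congruence rules (omitted from Figure~\ref{fig:opsem} but handled uniformly by the induction hypothesis, threading the store extension $S \subseteq S'$ through) and the $\beta$-style reductions: {\sc (R-AppV)} and {\sc (R-TAppV)} use term/type substitution (Lemma~\ref{lem:subst}(b) and (a) respectively); {\sc (R-Fix)} uses Lemma~\ref{lem:subst}(b) with the unfolding $M\{\vrec{F}{T}{M}/F\}$, where the hypothesis $\Ga, F{:}T \vdash_S M : T$ is exactly what is needed; the record destructor rules {\sc (R-RecHdLabV)}, {\sc (R-RecHdValV)}, {\sc (R-RecTailV)} and the collection case rules invert the typing of the constructor form and project the relevant component. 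The store rules {\sc (R-DerefV)} and {\sc (R-AssignV)} use the well-typed store hypothesis $\Ga \vdash_S H$ to relate $H(l)$ to the type recorded in $S$, together with {\sc (conv)}.

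The cases that carry the real content are the ones that appeal to the refinement logic: {\sc (R-PropT)}/{\sc (R-PropF)} and {\sc (R-KindL)}/{\sc (R-KindR)}. For {\sc (R-PropT)}, inversion of {\sc (prop-ite)} gives $\Ga, \varphi \vdash_S M : T_1$ and the source type is $\ite{\varphi}{T_1}{T_2}$; since the rule fires under $\Ga \models \varphi$, the equality rule {\sc (eq-iteT)} yields $\Ga \models \ite{\varphi}{T_1}{T_2} \equiv T_1 :: K$, and one must then discharge the assumption $\varphi$ from the context of the subderivation — this is where I would use $\Ga \models \varphi$ together with a cut/strengthening argument (encoding $\varphi$ as the dummy refinement $x{:}\kref{\_}{\varphi}$ as the paper suggests, then eliminating it via substitution by a witness, appealing to Postulate~\ref{prop:refinementprops}). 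The kind-case rules are analogous, using {\sc (eq-kcaseT)}/{\sc (eq-kcaseF)} and the runtime kinding premise $\Ga \vdash T_v :: \smallkind$ (resp.\ $\Ga \vdash T_v :: K_0$ with $\Ga \vdash K_0 \not\equiv \smallkind$); here unicity of kinds (Theorem~\ref{thm:unicity}(2)) is needed to reconcile the kind derived at typing time with the one tested at runtime. For rules {\sc (R-TAppTRed)} and {\sc (R-KindTRed)} that reduce a type in active position, kind preservation (Lemma~\ref{cor:kindpres}) ensures the reduced type $T'$ still has the kind required by the inversion principle, and one reassembles the typing unchanged.

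I expect the main obstacle to be the interaction between the refinement assumptions that get pushed into the context by {\sc (prop-ite)}/{\sc (kindcase)} during typing and the fact that at runtime, once a branch is selected, those assumptions are discharged by a \emph{semantic} entailment $\Ga \models \varphi$ rather than a syntactic one. Making this rigorous requires a form of the admissibility of cut for the entailment-annotated context — essentially that $\Ga, \varphi \vdash \mathcal J$ together with $\Ga \models \varphi$ gives $\Ga \vdash \mathcal J$ — which is not one of the explicitly stated lemmas and must be either derived from the postulated substitution/weakening properties of entailment or argued directly by induction on the typing derivation. A secondary subtlety is that {\sc (conv)} is only stated at kind $\ktype$, so when patching up types one must confirm the relevant equalities live at (a subkind of) $\ktype$, which follows from validity for equality (Theorem~\ref{lem:valideq}(b)) together with {\sc (K-sub)}.
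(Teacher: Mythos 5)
Your proposal matches the paper's proof: induction on the reduction derivation with inversion on typing, substitution for the $\beta$-style and fixpoint rules, the \textsc{(eq-iteT)}/\textsc{(eq-kcase)} equalities plus \textsc{(conv)} for the property and kind tests, and a store extension only in the \textsc{(R-RefV)} case. The one point you flag as an obstacle --- discharging the refinement assumption from the context in the \textsc{(R-PropT)} case given $\Ga \models \varphi$ --- is handled in the paper by a bare ``by cut'' step, so your observation that this admissibility-of-cut principle is never explicitly stated or proved is accurate rather than a divergence from the paper's argument.
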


Finally, progress can be established in a fairly direct manner
(relying on a straightforward notion of progress for the type
reduction relation). The main interesting aspect is that progress
relies crucially on the decidability of entailment due to the
term-level and type-level predicate test construct.

\begin{restatable}[Type Progress]{lemma}{typprog}
If $\cdot \vdash T :: K$ then either $T$ is a type value or $T\rightarrow T'$, for some $T'$.
\end{restatable}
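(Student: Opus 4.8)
The plan is to prove \emph{Type Progress} by induction on the structure of the type $T$, following the call-by-value type reduction discipline. For each syntactic form of $T$, I will use inversion on the kinding derivation $\cdot \vdash T :: K$ (which is available thanks to the validity theorem and the associated inversion principles mentioned just before the statement) to learn enough about the kinds of subterms, then invoke the induction hypothesis on each immediate subterm occurring in an active evaluation position.

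First I would dispatch the cases where $T$ is already a type value: variables $t$ (though in an empty context this case is vacuous, so it is only relevant for the inductive sub-goals if one generalizes the statement), $\lambda t{::}K.T'$, $\forall t{::}K.T'$, labels $\ell$, $\trecnil$, $\tbool$, $\one$, and $\bot$; these require no reduction. For the constructor forms $\treccons{L}{T_1}{T_2}$, $\tcol{T_1}$, $\tref{T_1}$ and $T_1 \rightarrow T_2$, I would apply the induction hypothesis to the relevant component(s): if some component reduces, a congruence rule fires; if all components are type values, the whole term is a type value. Next come the genuine redex cases. For application $T_1\,T_2$: by the IH, either $T_1$ reduces (congruence), or $T_2$ reduces (congruence, since $T_1$ must already be a value), or both are values; in the latter case inversion on the kinding via {\sc (k-app)} forces $T_1$ to have a $\Pi$-kind, and type-value inversion then forces $T_1$ to be a $\lambda$-abstraction or a $\m{fix}$ term, so a $\beta$- or fixpoint-unfolding reduction applies. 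The destructor cases $\hdlabel{T_1}$, $\hdtype{T_1}$, $\tl{T_1}$, $\colof{T_1}$, $\refof{T_1}$, $\dom{T_1}$, $\codom{T_1}$ are handled uniformly: by IH either $T_1$ reduces (congruence) or $T_1$ is a type value whose kind, by inversion on the destructor kinding rule (e.g.\ {\sc (k-hdt)} forces kind $\kref{t{::}\krecord}{\neg\m{empty}(t)}$), is a subkind of the appropriate basic kind; type-value inversion then pins down the constructor shape (e.g.\ a non-empty record is a $\treccons{\ell}{\cdot}{\cdot}$, never $\trecnil$ — here I must use that $\neg\m{empty}$ holds, which rules out the empty case), and the corresponding projection rule fires.

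The two subtle cases are the property test $\ite{\varphi}{T_1}{T_2}$ and the kind case $\tkindofP{T_1}{\smallkind}{S}{U}$. For the property test, there is no subterm to reduce, so I need the reduction rules {\sc (R-PropT)}/{\sc (R-PropF)} to apply; this is exactly where decidability of the entailment relation $\Ga \models \varphi$ is invoked — since $\cdot \vdash \varphi$ (by inversion on {\sc (k-ite)}), the entailment decision procedure must answer either $\cdot \models \varphi$ or $\cdot \models \neg\varphi$ (appealing to the Soundness postulate to rule out the possibility that neither holds — more precisely, classical validity in the target logic together with the postulated behaviour of the SMT check gives one of the two answers), and the corresponding rule fires. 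For the kind case, by IH either $T_1$ reduces (congruence rule {\sc (R-KindTRed)}), or $T_1$ is a type value; then, since $\cdot \vdash T_1 :: \smallkind''$ for some $\smallkind''$, we can decide whether $\smallkind'' \equiv \smallkind$ or $\smallkind'' \not\equiv \smallkind$ (equality/subkinding on basic kinds being decidable), and {\sc (R-KindL)} or {\sc (R-KindR)} applies.

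I expect the main obstacle to be the destructor cases: making the type-value inversion genuinely go through requires that kinding be \emph{reflected} into the refinement logic, i.e.\ that a type value of kind $\kref{t{::}\krecord}{\neg\m{empty}(t)}$ really is syntactically a $\treccons{\cdot}{\cdot}{\cdot}$ and not $\trecnil$; this depends on the inversion principles for kinding (which in turn rest on validity and on the soundness of entailment, so that a provable $\neg\m{empty}(\trecnil)$ cannot arise) and on a canonical-forms lemma for type values indexed by basic kind. These ingredients are all available from the earlier development (validity, unicity of kinds, Postulate~\ref{prop:refinementprops}), so the argument is conceptually routine once those are in hand, but the bookkeeping of threading the typing environment $\Ga$ through and invoking the right inversion lemma in each destructor case is where the real work lies; correspondingly, the property-test and kind-case cases are where the reliance on decidability of entailment and kind equality must be made explicit.
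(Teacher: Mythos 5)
Your proposal is correct and matches the paper's proof, which is given only as ``straightforward induction on kinding, relying on the decidability of logical entailment'': you carry out exactly that induction (structurally, with inversion standing in for induction on the kinding derivation), and you correctly isolate the two places where decidability of entailment and of basic-kind equality is needed, namely the property test and the kind case. The additional care you take with canonical forms for type values of refined record kind is a genuine detail the paper elides, but it does not change the approach.
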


\begin{restatable}[Progress]{theorem}{progress}
\label{thm:progress}
  Let $\cdot\vdash_S M : T$ and $\cdot\vdash_S H$. Then either
  $M$ is a value or there exists $S'$ and $M'$ such that $\stateconf{H}{M}\red\stateconf{H'}{M'}$.

\end{restatable}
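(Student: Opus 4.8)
The plan is to prove Theorem~\ref{thm:progress} by induction on the derivation of $\cdot\vdash_S M : T$, following the standard syntactic approach but carefully accounting for the extra type-level and refinement machinery. First I would observe that several typing rules, namely {\sc (conv)}, {\sc (prop-ite)} and {\sc (kindcase)}, do not change the head structure of $M$, so the corresponding cases follow directly by invoking the induction hypothesis (with {\sc (prop-ite)} requiring a small additional step, see below). The remaining cases split according to whether $M$ is already a value: if $M$ is one of the value forms from the grammar of $v$, we are done; otherwise, the typing rule that concludes $\cdot\vdash_S M:T$ pins down the top-level constructor of $M$, and we proceed to show a reduction step exists.

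For the non-value cases I would use a canonical forms argument, itself a consequence of the inversion principles for typing and kinding (which follow from validity, and which the paper has already invoked earlier in Section~\ref{sec:metatheory}). Concretely, a closed value of type $\tbool$ is $\true$ or $\false$; of type $T\rightarrow U$ is a $\lambda$-abstraction; of type $\forall t{::}K.U$ is a $\Lambda$-abstraction; of a record type $\treccons{\ell}{T}{U}$ is $\reccons{\ell}{v}{v'}$; of type $\tcol{T}$ is $\emptycol{}$ or $v::v'$; of type $\tref{T}$ is a location $l$; and since $\cdot\vdash_S H$, any such $l$ lies in $\m{dom}(H)$. For each elimination form ($M\,N$, $\typeapp{M}{T}$, $\tmhdlabel{M}$, $\tmhdterm{M}$, $\tmtl{M}$, $\deref{M}$, $M:=N$, $\colcase{M}{N_1}{x{::}xs\Rightarrow N_2}$, $\ite{M}{N_1}{N_2}$), if the principal subterm is not a value the appropriate congruence rule applies by the induction hypothesis; if it is, the canonical forms result tells us which $\beta$-like or store rule fires. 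The constructor-but-not-value cases ($\reccons{\ell}{M}{N}$, $M::N$, $\mkref{M}$) are handled purely by congruence. The cases {\sc (fix)} and {\sc (kindcase)} use the dedicated reduction rules {\sc (R-Fix)} and {\sc (R-KindL)}/{\sc (R-KindR)}; for {\sc (kindcase)} on a $T$ that is not yet a type value we first apply Lemma (Type Progress) to reduce $T$ via {\sc (R-KindTRed)}, and likewise $\typeapp{M}{T}$ on a non-value $T$ reduces via {\sc (R-TAppTRed)}.

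The main obstacle I anticipate is the {\sc (prop-ite)} case and, relatedly, the use of refinement entailment in the {\sc (R-PropT)}/{\sc (R-PropF)} and {\sc (R-KindR)} rules. For $\ite{\varphi}{M}{N}$ we know only $\cdot\vdash\varphi$ (well-formedness), not which of $\varphi$ or $\neg\varphi$ holds; progress here relies on the \emph{decidability} of the entailment check $\m{Valid}(\llb\cdot\rrb\Rightarrow\llb\varphi\rrb)$, so that exactly one of {\sc (R-PropT)} or {\sc (R-PropF)} applies. Strictly, this needs the Soundness postulate (Postulate~\ref{prop:refinementprops}) to rule out the pathological case where neither $\varphi$ nor $\neg\varphi$ is valid: since the context here is empty (or more generally consistent, as guaranteed by well-typedness together with validity ensuring $\bot$ is not derivable), classical validity gives that one of the two checks succeeds. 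A parallel remark applies to {\sc (R-KindR)}, which requires deciding $K_0\not\equiv\smallkind$ for the synthesized kind $K_0$ of the type value $T_v$; here I would appeal to decidability of kind equality, which holds because, as the paper notes, the kind system is of the same strength as one with singleton kinds~\cite{DBLP:journals/tocl/StoneH06}.

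Finally I would remark that the threaded typing environment in the reduction semantics is exactly $\cdot$ (empty) at the top level in the statement of the theorem, so all entailment side-conditions in the reduction rules are checked against $\cdot$ (extended only with refinement-free hypotheses introduced by binders in subterms, which do not arise for closed $M$ being reduced at the top); this keeps the canonical forms reasoning clean. The existential witness $S'$ in the conclusion is $S$ itself except in the {\sc (ref)} case, where {\sc (R-RefV)} allocates a fresh $l$ and we take $S' = S[l\mapsto T]$ — but note this allocation case is a congruence/value case and the actual store extension is invisible to progress as stated (which only asserts \emph{existence} of some $H'$), so no real work is needed there beyond exhibiting the reduction.
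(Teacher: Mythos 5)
Your proposal is correct and follows essentially the same route as the paper, whose own proof is only a two-line sketch: induction on typing, relying on the Type Progress lemma for type-level redexes and on decidability of refinement entailment for the property-test and kind-case constructs. Your write-up simply fills in the canonical-forms and congruence details that the paper leaves implicit.
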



\section{Algorithmic Type Checking and Implementation}
\label{sec:eqreasoning}

This section provides a general description of our practical design
choices and OCaml implementation of the type theory of the previous
sections. While a detailed description of the formulation of our typing and
kinding algorithm is not given for the sake of conciseness, we describe
the representation and entailment of refinements and the
implementation strategy for typing, kinding and equality.
%
From a conceptual point of view, type theories either have a very
powerful and undecidable definitional equality (i.e.~extensional type
theories) or a limited but decidable definitional equality
(i.e.~intensional type theories) \cite{DBLP:books/daglib/0088162}.
For instance, the theories underlying Coq and Agda fall under the
latter category, whereas the theory underlying a system such as
NuPRL~\cite{DBLP:books/daglib/0068834} is of the former
variety.
Languages with refinement types such as Liquid
Haskell~\cite{DBLP:conf/icfp/VazouSJVJ14} and
F-Star~\cite{DBLP:conf/icfp/SwamyCFSBY11} (or with limited forms
of dependent types such as Dependent ML~\cite{DBLP:journals/jfp/Xi07})
live somewhere in the middle of the spectrum, effectively equipping
types with a richer notion of definitional equality through refinement
predicates but disallowing the full power of extensional theories
(i.e.~allowing arbitrary properties to be used as refinements). The
goal of such languages is to allow for non-trivial equalities on types
while preserving decidability of type-checking, typically off-loading
the non-trivial reasoning about entailment of refinement predicates to
some external solver.

\mypara{{Kind Refinements through SMT Solving}}
Our approach follows in this tradition, and our system is
implemented by offloading validity checks of refinement predicates
to the SMT solver CVC4 \cite{DBLP:conf/cav/BarrettCDHJKRT11}, embodied
by the rule for refinement entailment (and for subkinding between two
refinement kinds):
\[
    \inferrule*[right=(entails)]
  {\Ga \vdash \varphi \quad \m{Valid}(\llb \Ga \rrb \Rightarrow \llb \varphi\rrb)}
  {\Ga \models \varphi}
\]

The solver includes first-order theories (with equality) on strings,
finite sets and inductive types (with their associated constructors,
destructors and congruence principles), and so allows us to represent our refinement language
in a fairly direct manner. Crucially, since our theory maintains the
 distinction between types and terms, we need only represent the {\em
  type-level} constructs of our theory in the solver.

Types of basic kind are encoded using an inductive type with a
constructor and destructor for each type constructor and destructor in
our language, respectively. Labels are represented by strings
(i.e.~finite sequences).
In this representation, the ``type of all types'' is named $\m{Types}$.
Types of higher-kind are encoded as first-order terms, so they can
be encoded adequately in the theory of the solver. To do this in a
general way, we add a function symbol $\m{appTyp}$ to the theory
that is used to encode type-level application, effectively
implementing defunctionalization
\cite{Reynolds:1972:DIH:800194.805852}. {For instance, if $f$
  is a variable of some higher-kinded type, such that some equality on $f$ is
  present in refinement formulae, e.g.~$\kref{x :: \smallkind}{f(x) =
    t}$, an equation corresponding to $\m{appTyp}(f,x) = \llb t\rrb$
  will be added to the SMT proof context.}

Refinements are encoded as logical formulae that make use of the
theory of finite sets in order to represent reasoning about record
label set membership and apartness. We add two auxiliary functional
symbols to the theory: $\m{isRec} : \m{Types} \rightarrow \m{Bool}$
and $\m{lab} : \m{Types} \rightarrow \m{Set}\,\m{of}\,\m{String}$,
whose meaning is given through appropriate defining axioms. The 
$\m{isRec}$ predicate codifies that a given term (representing a type) is a
well-formed record, specifying that it is either the representation of
the empty record or a cons-cell, such that the label at the head of the
record does not occur in the label set of its tail. $\m{lab}$ encodes
the label set of a record representation, essentially projecting out
its labels accordingly. We can then define apartness of two label
sets (formally, $\m{apart} : (\m{Set}\,\m{of}\,\m{String},
\m{Set}\,\m{of}\,\m{String}) \rightarrow \m{Bool}$) as the formula
that holds iff the intersection of the two sets is empty. Label
concatenation and its lifting to label sets is defined in terms of
string concatenation. The empty record test and its negation is
encoded via an equality test to the empty record and the appropriate
negation. 

To map types to their representation in the SMT solver we make use of a
representation function $\llb {-} \rrb$ on contexts which collects
variable names (which will be universally quantified in the resulting
formula)
and assumed refinements from the context as a
conjunction. Without loss of generality, we assume that all basic kinds
appear at the top level in the context as a refinement, all context variables are
distinct and all bound occurrences of variables are distinct. 

\[
  \begin{array}{lcl}
    \llb \emptyset \rrb & \triangleq & \m{True}\\
    \llb \Ga , t : \kref{x :: \smallkind}{\phi(x)} \rrb & \triangleq &
                                                                     \llb \Ga
                                                                     \rrb
                                                                     \wedge
                                                                       t
                                                                       :
                                                                       \llb
                                                                       \smallkind \rrb
                                                                        
                                                                       \wedge  
                                                                     \llb\phi(t) \rrb
    \\
    \llb \Ga , t : \Pi s : K.K' \rrb & \triangleq  & \llb \Ga \rrb
                                                     \wedge t :
                                                     \m{Types} \\
    
    \llb \smallkind \rrb & \triangleq & \m{Types}\\
    \llb \kref{ x :: \smallkind}{\phi(x)} \rrb & \triangleq & \llb
                                                             \smallkind \rrb
    \end{array}
\]

To simplify the presentation, we overload the $\llb {-} \rrb$ notation
on contexts, types and kinds. All basic kinds are translated to the
representation type $\m{Types}$. At the level of contexts, type
variables of basic kind are translated to a declaration of a variable
of the appropriate target type and the refinement is translated
straightforwardly making use of the auxiliary predicates defined
above. \btnote{To } 
represent type variables $t$ of higher-kind
{we encode them as variables of representation type
  $\m{Types}$, such that occurrences of $t$ in refinements are
  defunctionalized using the technique described above.}

{ \mypara{Outline of the Algorithm} The main function of our
  checker takes a term (i.e.~an expression or a type), an expected type or kind (and other
  auxiliary parameters explained below) and either raises an exception
  if the term is ill-typed or ill-kinded, or returns the type or kind
  of the term. The returned value is a lower bound of the expected type or
  kind. 
  
  The OCaml type signature of the function is:
}

{\small
\begin{verbatim}
val typecheck: termenv -> termenv -> term -> term -> term -> bool -> ihenv -> term
\end{verbatim}
}

{ The parameters of (OCaml) type \texttt{termenv} 
  respectively hold the typing and evaluation contexts. The evaluation
  context holds bindings between variables and corresponding
  value bindings, necessary to normalize (open) types during type
  and kind-checking.  The parameters of OCaml type \texttt{term} are
  respectively the term to be typed, its expected type, and the
  expected kind of such type.
  The \texttt{typecheck} function is also used to (kind) check
  types, in which case the three terms hold the type to be kinded,
  its expected kind, and the (conventional) well-kinded classifier
  \texttt{KindOK} constant. The parameter of type \texttt{bool} is
  used to approximate whether the typing environment is known to be
  inconsistent (used to implement the kinding and typing rules for
  $\bot$), and the parameter of type \texttt{ihenv} holds the
  induction environment for recursive types.

  The algorithm crucially relies on auxiliary functions to compute
  normal forms of types using an evaluation strategy that is confluent
  and normalizing, and relies on the SMT solver to decide conditional
  predicates and equality tests. In particular, unfolding of
  recursive types is only performed when the recursion argument is
  equal (up to refinements) to a constructor (see
  \cite{DBLP:conf/icalp/Gimenez98,Coq:manual}), so that the condition
  imposed on the rules for recursive types (decreasing size on the
  argument on recursive calls) suffices to ensure termination.

  We highlight our adopted solution for interleaving type-level
  computation with type checking of value terms. When considering a
  value term for type checking, the algorithm first considers the
  structure of the (weak-head normalized) expected type. It then
  focuses on the structure of the type if its head term takes the
  form of a conditional, an application, or a recursive type, and
  applies the appropriate typing rule before recursing on the
  structure of the value term. Recursive types are handled using the
  induction environment mentioned above, allowing typing of recursive
  calls to be discharged using the appropriate kind declared in the
  recursive definition, as specified in the corresponding kinding
  rule. We illustrate a run of our type checker (and interpreter) on
  the concrete syntax for Examples~\ref{ex:typelabelsel}
  and~\ref{ex:termlabelsel}, implementing record field selection in types and values.}

{\small
\begin{verbatim}
# type
let Proj:: Pi L:: Lab.Pi t:: { r::Rec | L inl labSet(r) }.Type =
  fun L::Lab -> 
    letrec G :: Pi t :: { r :: Rec | L inl labSet(r) }.Type = 
      fun t :: { r :: Rec | L inl labSet(r) } -> 
        if (headlb(t) == L) then head(t) else (G (tail(t)))
    in G end
in letrec proj : All L::Lab. All t:: { r::Rec | L inl labSet(r) }.(t -> (Proj L t)) =
       fun  L::Lab -> fun t :: { r::Rec | L inl labSet(r) } -> fun r:t ->
         if (headlb(t) == L) then head(r) else (proj L (tail(t)) tail(r))
   in 
       ( proj `a [|`b : int, `a:bool|] [`b=5, `a=false] )  
   end
end;;
 - : bool = false
\end{verbatim}}
\noindent
The type {\small\verb|(Proj L t)|} defines the projection of the type associated with label {\small\verb|L|} in (record type) {\small\verb|t|},
and {\small\verb|(proj L t r)|} defines the projection of value associated with label with label {\small\verb|L|} in record {\small\verb|r|} (of type {\small\verb|t|}). Notice the declared kind of {\small\verb|Proj|} and the declared type of {\small\verb|proj|}.

\mypara{Kinding Algorithm} The implementation of kind checking
follows a standard algorithm for type-checking a $\lambda$-calculus
with lists, pairs, subtyping and structurally recursive function
definitions \cite{Pierce:2002:TPL:509043}.  Kinding rules that make
use of refinements (e.g.,~those that manipulate records) and any
instance of subkinding or kind equality in the presence of refinements
is discharged via the encoding into CVC4. {
Kind-checking (of types) only requires type-level computation
to take place while handling refinements predicates in kinds: those are normalized prior to encoding.}

\mypara{Type Equality} As in most type theories, the crux
of our implementation lies in a suitable implementation of type
equality. Since our notion of type equality has flavours of
extensionality (recall the examples of Section~\ref{sec:kindtypeeq})
and is essentially kind sensitive, we make use of the now folklore
equivalence checking algorithms that exploit weak-head normalization
and type information \cite{Pierce:2004:ATT:1076265}. In our setting,
we use weak-head normalization of {\em types} and exploit {\em
  kinding} information
\cite{DBLP:journals/tocl/StoneH06,DBLP:conf/popl/StoneH00}. The
algorithm alternates between weak-head normalization and kind-oriented
equality checking phases. In the former phase, weak-head reduction of
types that form a $\lambda$-calculus is used. In the latter phase,
extensionality of type-level functions is implemented essentially by
the equivalent of rule {\sc (eq-funext)} read bottom up and
comparisons at basic kinds against variables of refined kind are
offloaded to the SMT solver, implementing extensionality for types of
basic kind (e.g.,~deriving that $t \equiv \m{Bool}\rightarrow\m{Bool}$
if
$t :: \kref{f{:}\kfunction}{\dom{f} = \tbool \wedge
  \codom{f} = \tbool  }$). The type checking algorithm
itself (which makes use of the type equality algorithm) is mostly straightforward,
since the typing rules of our language are essentially those of an
ML-like language (without modules).

{In terms of our overall approach to type and kind-checking, it
  follows closely known type-checking algorithms for related systems
  and so is clearly sound. Completeness holds relative to the
  underlying SMT theories, as is generally the case in related works
  on SMT-based
  refinement~\cite{DBLP:conf/icfp/VazouSJVJ14,DBLP:conf/esop/VazouRJ13}.}
Our prototype implementation consists of around 5000 lines of OCaml
code (not counting the generated lexer and parser code) which includes
parsuning, kind-checking, type-checking and an interpreter for our
system (using the operational semantics of
Section~\ref{sec:metatheory}). The implementation validates the
examples found in the paper.  {The interaction with the SMT
  solver to discharge refinements produces some overheads, but that is
  mostly due to the unoptimized nature of our proof-of-concept implementation. 
  }

\section{Related Work}
\label{sec:rw}

To the best of our knowledge, ours is the first work to explore the
concept of refinement kind and illustrate their expressiveness as a
practical language feature that 
integrates statically typed meta-programming features such as type
reflection, ad-hoc polymorphism, and type-level computation which
allows us to specify structural properties of function, collection and
record types.

The concept of
refinement kind is a natural adaptation of the well-known notion of
refinement type
\cite{DBLP:journals/toplas/BengtsonBFGM11,DBLP:conf/pldi/RondonKJ08,DBLP:conf/esop/VazouRJ13}, 
which effectively extends type specifications with (SMT decidable)
logical assertions.  Refinement types have been applied to various
verification domains such as security
\cite{DBLP:journals/toplas/BengtsonBFGM11} or the verification of
data-structures
\cite{DBLP:conf/pldi/XiP98,DBLP:conf/pldi/KawaguchiRJ09}, and are
being incorporated in full-fledged programming languages, e.g., ML
\cite{DBLP:conf/pldi/FreemanP91} Haskell
\cite{DBLP:conf/icfp/VazouSJVJ14}, F-Star
\cite{DBLP:conf/icfp/SwamyCFSBY11}, JavaScript
\cite{DBLP:conf/pldi/VekrisCJ16}.

With the aim of supporting common meta-programming idioms in the
domain of web programming, \citet{DBLP:conf/pldi/Chlipala10} develops
a type system to support type-level record computations with
similar aims as ours, avoiding type dependency.  In our case, we
generalize type-level computations to other types as data, and rely on
more amenable explicit type dependency, in the style of System-F
polymorphism. Therefore, we still avoid the need to pollute programs with
explicit proof terms, but through our development of a principled theory
of kind refinements.
The idea of expressing
constraints (e.g., disjointness) on record labels with predicates goes
back to \cite{DBLP:conf/popl/HarperP91}. We note that our system admits 
convenient predicates and operators in the refinement logic that are
applicable not just to record types, but also to other kinds of types
such as function and collection types.

The work of \citet{DBLP:conf/haskell/KiselyovLS04} implements a library
of strongly-typed heterogeneous collections in Haskell via an encoding
using the language extensions of multi-parameter type classes and
functional dependencies. Their library includes heterogeneous lists
and extensible records, with a semantics that is akin to that of our
record types.
Since their development is made on top of Haskell and its type-class
system, they explicitly encode all the necessary type manipulation
(type-level) functions through the type-class system. To do this, they
must also encode several auxiliary type-level data such as type-level
natural numbers, type-level booleans, type-level occurrence and
deletion predicates, to name but a few. To adequately manipulate these
types, they also reify type equality and type unification as explicit
type classes. This is in sharp contrast with our development, which
leverages the expressiveness of refinement kinds to produce the same
style of reasoning but with significantly less machinery.
We also highlight the work of~\citet{DBLP:conf/dsl/LeijenM99}, a domain
specific embedded compiler for SQL in Haskell by using so-called
phantom types, which follows a related approach.

\citet{DBLP:journals/pacmpl/MorrisM19} study a general framework of
extensible data types by introducing a notion of row theory which gives
a general account of record concatenation and projection. Their work
is based on a generalization of row types using qualified types that
can refer to some properties of row containment and combination. The
ability to express these properties at the type-level is similar to
our work, although we can leverage the more general concept of
refinement kind to easily express programs and structural properties
of records that are not definable in their work: the $\m{Map}$ and
$\m{SetGetRec}$ record transformations from
Section~\ref{sec:introexamples}, the ability to state that a record
\emph{does not} contain a given label~\cite{GJ-Nott-86}, or the
general case of a combinator that takes two records $R_1$ and $R_2$
and produces a record where each label $\ell$ is mapped to
$R_1.\ell \rightarrow R_2.\ell$. Their work develops an encoding
of row theories into System F satisfying coherence.
It would be interesting to explore a similar encoding of our work into
an appropriate $\lambda$-calculus such as $F_\omega$ with product
types. 

\citet{DBLP:conf/icfp/WeirichHE13} study an extension to
the core language (System FC) of the Glasgow Haskell Compiler (GHC)
with a notion of kind equality proofs, in order to allow type-level
computation in Haskell to refer to kind-level functions. Their
development, being based on System FC, is designed to manipulate
explicit type and kind coercions as part of the core language
itself, which have a non-trivial structure (as required by the various
type features and extensions of GHC), and so differs significantly
from our work which is designed to keep type and kind conversion as
implicit as possible. However, their work can be seen as a stepping
stone towards the integration of refinement kinds and related
constructs in a general purpose language with an advanced
typing system such as Haskell.

Our extension of the concept of refinements to kinds, together with
the introduction of primitives to reflectively manipulate types as
data (cf. ASTs) and express constraints on those data also highlights how
kind refinements match fairly well with the programming practice of
our time (e.g., interface reflection in Java-like languages),
contrasting the focus of our work with the goals of other approaches
to meta-programming such as
\citet{DBLP:conf/ifip2-1/AltenkirchM02,DBLP:journals/jfp/CalcagnoMS03}.
{The work of Altenkirch and McBride takes a dual approach to ours:
While we take the stance of not having a dependently typed language,
their work starts from a dependent type theory with universes and
so-called large eliminations and shows how one can encode generic
programming (i.e., the ability to define functions by recursion on the
structure of datatypes) by defining an appropriate universe of types
and a coding function. Thus, their general framework is naturally more
expressive than ours, but lacks the general purpose practical
programming setting of ours.

The work of Calcagno et al. focuses on multi-staging in the presence
of effects. Staged computation is a form of
meta-programming where program fragments can be safely quoted and
executed in different phases. This form of metaprogramming is
fundamentally different from that formulated in our work, being
targeted towards efficiency and optimizations based on {\em safe}
partial evaluation. }

%
%
%
%
The concept of a statically checked type-case construct was
introduced by \citet{DBLP:journals/toplas/AbadiCPP91}; however, our
refinement kind checking of dynamic type conditionals on types and
kinds $\ite{\varphi}{e_1}{e_2}$ and $\tkindof{T}{t}{K}{e_1}{e_2}$
greatly extends the precision of type and kind checking, and 
supports very flexible forms of statically checked ad-hoc
polymorphism, as we have shown.

Some works
\cite{DBLP:conf/aplas/SmaragdakisBKB15,DBLP:conf/pldi/HuangS08,DBLP:conf/gpce/FahndrichCL06}
have addressed the challenge of typing specific meta-programming
idioms in {real-world general purpose} languages such as Java
and C\# {(or significant fragments of those languages).
  By using the standard record-based encoding of objects (as done in
  the examples of Sections~\ref{sect:intro}
  and~\ref{sec:introexamples}), several of the meta-programming
  patterns found in their works are representable using our framework
  of refinement kinds (e.g., generating constructors, changing field
  types, generating accessor and modifier methods). However, since
  those works target object-oriented languages, they support
  OO-specific features that are out of the scope of our work
  (e.g. inheritance, method visibility), which does not deal with
  object orientation concepts but rather with a minimal ML-style
  language in order to illustrate the core ideas and their general
  expressiveness.}

  {We further highlight the recent work of
  \citet{DBLP:conf/pldi/KazerounianGVFH19}, which addresses arbitrary
  type-level computation in Ruby libraries and injects appropriate
  {\em run-time checks} to ensure that library methods abide by their
  computed type. Their work allows for
  arbitrary Ruby functions to be called during type-level computation
  and is thus more expressive than our more strictly stratified
  framework. Their typing discipline also exploits singleton-like
  types, that can be used in the context of database column and table
  names, to assign precise type signatures to database query methods
  (i.e.,~type-level computations can interact with a database to find
  the schema of a table that is then used as part of a type).
  While we can define types of the form $F(T_1) \rightarrow T_2$,
  where the domain type is the result of a computation $F$ on $T_1$,
  we are restricted to more limited type-level reasoning, whereas
  their work is closer to general dependent types in this sense. For
  instance, we can define:
  \[
    \begin{array}{l}
    \m{ColRecPreds} \triangleq \lambda s :: \ktype .\\
    \hspace{2.5cm}(\tkindofP{s}{\kcollection}{ (\tkindofP{\colof{t}}{\krecord}{t}{\one})  \\\hspace{3cm}}
    {\one}) \rightarrow \tbool
    %
    \end{array}
  \]
  where $\m{ColRecPreds}$ is a type-level
  function that given a type $s$, provided $s$ is a
  collection type of records of some record type,
produces the type of predicates on that record type (i.e.~a function
from that record type to $\tbool$) and
otherwise returns the  trivial predicate type (i.e.~$\one
\rightarrow \tbool$, where $\one$ is the unit type).
 We can use $\m{ColRecPreds}$ to type a program akin to a generic record
existence test in a table (i.e.~a collection of records):
\[
  \begin{array}{l}
    \m{exists}
      :    \forall C :: \kcollection . C \rightarrow \m{ColRecPreds} (C)\\
    
    \m{exists} \triangleq \Lambda C :: \kcollection .
    \tkindofP{\colof{C}}{\krecord}
    {\vrec{F}{C \rightarrow \m{ColRecPreds} (C)}
    {\\\hspace{3cm}\lambda c{:}C.\lambda x{:}t. \colcase{c}{\false\\\hspace{5.5cm}}{r {::} rs \Rightarrow
    \ite{r=x}{\true}{F(rs)(x)}}}\\\hspace{2.3cm}}
    {\lambda c{:}C.\lambda x{:}\one . \false}

\end{array}
\]
The example above contrasts with
\cite{DBLP:conf/pldi/KazerounianGVFH19}, where a related example is
formulated such that the equivalent of our $\m{ColRecPreds}$
type-level function actually queries a database for the appropriate
table schema, whereas $\m{ColRecPreds}$ can only inspect the ``shape''
of its type arguments to obtain the table schema (i.e.~the types of records
contained in the collection).

}

Our work shows
how the fundamental concept of refinement kinds suggests
itself as a general type-theoretic principle that accounts for statically
checked typeful \cite{Cardelli91} meta-programming, including programs
that manipulate types as data, or build types and programs from data
(e.g., as the type providers of
F\#~\cite{DBLP:conf/pldi/PetricekGS16}) which seems to be out of reach
of existing static type systems. Our language conveniently expresses
programs that automatically generate types and operations from data
specifications, while statically ensuring that generated types satisfy
the intended invariants expressed by refinements.


\section{Concluding Remarks}
\label{sec:conc}

This work introduces the concept of refinement kinds and develops its
associated type theory, in the context of higher-order polymorphic
$\lambda$-calculus with imperative constructs, several kinds of
datatypes, and type-level computation.  The resulting programming
language supports static typing of sophisticated features such
as type-level reflection with ad-hoc and parametric polymorphism,
which can be elegantly combined to implement non-trivial
meta-programming idioms, as we have illustrated with several
examples. Crucially, the typing system for our language is
essentially that of an ML-like language but with a more intricate
notion of type equality and kinding, which are defined independently
from typing.

We have validated our theory by establishing the standard type safety
results and by further developing a prototype implementation for our
theory, making use of the SMT solver
CVC4~\cite{DBLP:conf/cav/BarrettCDHJKRT11} to discharge the
verification of refinements. Our implementation
 demonstrates the
practicality and effectiveness of our approach, and validates all examples in the paper. 
Moreover, as discussed in Section~\ref{sec:eqreasoning}, apart from the peculiarities
specific to the refinement logic,
our implementation is not significantly more involved than standard algorithms
for type-checking system $F_\omega$ or those for singleton kinds
\cite{Pierce:2002:TPL:509043,Pierce:2004:ATT:1076265,DBLP:conf/popl/StoneH00}.

There are many interesting avenues of exploration that have been
opened by this work: From a theoretical point-of-view, it would be
instructive to study the tension imposed on shallow embeddings of our
system in general dependent type theories such as Coq.  After
including existential types, variant types and higher-type imperative
state (e.g., the ability to introduce references storing types at the
term-level), which have been left out of this presentation for the
sake of focus, it would be relevant to investigate limited
forms of dependent or refinement types. It would be also
interesting to investigate how refinement kinds and stateful types
(e.g., typestate or other forms of behavioral types) may be used to
express and type-check invariants on meta-programs with challenging
scenarios of strong updates, e.g., involving changes in representation
of abstract data types.

{Following the approach of
  \citet{DBLP:conf/pldi/KazerounianGVFH19}, it would be interesting to
  study a version of our theory of refinement kinds that is applied to
  a real-world dynamically typed language by inserting run-time checks
  to ensure methods follow their specified types.}

The relationship between our refinement kind
system and the notion of type class \cite{DBLP:conf/popl/WadlerB89},
popularised by Haskell~\cite{DBLP:journals/toplas/HallHJW96},
also warrants further investigation. Type classes integrate ad-hoc
polymorphism with parametric polymorphism by allowing
for the specification of functional interfaces that
quantified types must satisfy. In principle, type classes
can be realized by appropriate type-level records of functions and may
thus be representable in our general framework.
%
Finally, to
ease the burden on programmers, we plan to investigate how to
integrate our algorithmic system with partial type inference mechanisms.


\begin{acks}
{The authors would like to thank the anonymous referees for their
valuable comments and suggestions. This work is supported by
\grantsponsor{NL}{NOVA LINCS}{}
\grantnum{NL}{(Ref. UID/CEC/04516/2019)},
\grantsponsor{FCT}{Funda\c{c}\~{a}o para a Ci\^{e}ncia e Tecnologia
  project CLAY}{}
\grantnum{FCT}{(PTDC/EEICTP/4293/2014)}, and the \grantsponsor{NL}{NOVA
  LINCS \& OutSystems}{} \grantnum{NL}{FLEX-AGILE project}.

}
\end{acks}


\bibliography{references,refs2}

\newpage
\appendix

\hspace{0pt}
\vfill
\begin{center}
{\Huge \bf Appendix}\\[1em]

Refinement Kinds:\\
Type-safe Programming with Practical Type-level Computation\\[1em]

Additional definitions and proofs of the main materials.

\end{center}
\vfill
\hspace{0pt}

\newpage

\section{Full Syntax, Judgments and Rules}
\label{app:rules}

We define the syntax of kinds $K, K'$, refinements
$\varphi, \varphi'$, types $T,S,R$, and terms $M,N$ below. We assume
countably infinite sets of type variables $\typevarset$, names
$\nameset$ and term variables $\varset$. We range over type variables
with $t, t', s, s'$, name variables with $n, m$ and term variables
with $x, y, z$.

\[
   \begin{array}{llcll}
      \mbox{Kinds} & K,K' & ::= & \smallkind  \mid
                                  \kref{t{::}\smallkind}{\varphi} \mid \Pi t{:}K.K' & \mbox{Refined and Dependent Kinds} \\
                  &\smallkind & ::= & \krecord \mid \kcollection \mid \kfunction \mid
                                      \kreference \mid \kname & \mbox{Base Kinds}\\
                   & & \mid & \ktype \mid \kpolyfun_K  \\[1em]

      \mbox{Types} & T,S,R & ::= & t \mid \lambda t{::}K.T \mid T\,S & \mbox{Type-level Functions}\\
      
                 & & \mid &   \tfix{F}{K}{K'}{t}T    & \mbox{Structural Recursion}\\
                   & & \mid & \forall t{::}K.T 
                                                                                    & \mbox{Polymorphism}\\
               
                   & & \mid & L \mid \trecnil \mid \treccons{L}{T}{S} & \mbox{Record Type constructors}\\
                   & & \mid & \hdlabel{T} \mid \hdtype{T} \mid \tl{T} & \mbox{Record Type destructors}\\

                   & & \mid & \tcol{T} \mid \colof{T} & \mbox{Collection Types} \\
                 & & \mid & \tref{T} \mid \refof{T} & \mbox{Reference Types} \\
                 & & \mid & T \rightarrow S \mid  \dom{T} \mid \codom{T} & \mbox{Function Types} \\
                   & & \mid & \tkindofP{T}{\smallkind}{S}{U} & \mbox{Kind Case}\\
                   & & \mid & \ite{\varphi}{T}{S} & \mbox{Property Test}\\ 
                   & & \mid & \bot \mid \top  & \mbox{Empty and Top Types}\\

                   & & \mid & \tbool \mid \one \mid \dots  & \mbox{Basic Data Types}\\[1em]

     \mbox{Extended Types} & \mathcal{T},\mathcal{S} & ::= &T \mid \lb{T}
                                                        \mid \mathcal{T} {++}\,  \mathcal{S} 
                                                                          \\[1em]
     
      \mbox{Refinements} & \varphi,\psi & ::= &
                   \varphi \supset \psi \mid
                              \varphi \wedge \psi \mid \dots &
                                                               \mbox{Propositional Logic}\\
          & & \mid &   \m{empty}(\mathcal{T} ) & \mbox{Empty Record Test}\\
                  & & \mid & \mathcal{T} = \mathcal{S} &
                                                         \mbox{Equality}\\
          & & \mid & \mathcal{T} \in \mathcal{S} & \mbox{Label Set
                                                   Inclusion}\\
        & & \mid & \mathcal{T} \# \,\mathcal{S} & \mbox{Label Set Apartness}
                                                           \end{array}  
  \]
  \[
    \begin{array}{llcll}
  \mbox{Terms} & M,N & ::= & x \mid \lambda x{:}T.M \mid M\,N & \mbox{Functions}\\
               & & \mid & \typeabs{t{::}K}M \mid \typeapp{M}{T} &  \mbox{Type Abstraction and Application}\\
                 & & \mid & \record{} \mid \reccons{\ell}{M}{N} \mid \tmtl{M}  \\
               & & \mid & \tmhdlabel{M} \mid \tmhdterm{M}& \mbox{Records} \\
                 & & \mid & \unitelem & \mbox{Unit Element}\\
                 & & \mid & \ite{M}{N_1}{N_2}\\
                 & & \mid & \true \mid \false & \mbox{Booleans}\\
                 & & \mid & \ite{\varphi}{M}{N} & \mbox{Property Test}\\
                 & & \mid & \tkindofP{T}{K}{M}{N} & \mbox{Kind Case}\\
                 & & \mid & \emptycol{T} \mid M :: N \\
                 & & \mid & \colcase{M}{N_1}{x {::} xs \Rightarrow
                            N_2}
                                                              & \mbox{Collections}\\
                 & & \mid & \mkref{M} \mid \deref{M} \mid M := N \mid l & \mbox{References}\\
                 & & \mid & \vrec{F}{T}{M} & \mbox{Recursion}
    \end{array}
\]

\subsection{Kinding and Typing}
\label{app:kandting}

Our type theory is defined by the following judgments:
\[
  \begin{array}{ll}
    \Ga \vdash & \mbox{$\Ga$ is a well-formed context}\\
    \Ga \vdash K & \mbox{$K$ is a well-formed kind under the assumptions in $\Ga$}\\
    \Ga \vdash \varphi & \mbox{Refinement $\varphi$ is well-formed under the assumptions in $\Ga$}\\
    \Ga \vdash T :: K & \mbox{Type $T$ is a (well-formed) type of kind $K$ under the assumptions in $\Ga$}\\
    \Ga \vdash_S M : T & \mbox{Term $M$ has type $T$ under the assumptions in $\Ga$ and store typing $S$}\\

    \Ga \models \varphi & \mbox{Refinement $\varphi$ holds under the assumptions in $\Ga$}\\
   \Ga \vdash K \equiv K' & \mbox{Kinds $K$ and $K'$ are equal}\\
    \Ga \vdash K \leq K' & \mbox{Kind $K$ is a sub-kind of $K'$}\\
    \Ga \vdash T \equiv T' :: K & \mbox{Types $T$ and $T'$ of kind $K$ are equal}\\
  \end{array}
\]




\paragraph{Context Well-formedness}
\[
  \begin{array}{c}
    \inferrule[]
    {\Ga \vdash K \quad \Ga \vdash}
    {\Ga , t {:} K \vdash  }
    \quad
    \inferrule[]
    {\Ga \vdash T :: \ktype \quad \Ga \vdash}
    {\Ga , x {:} T \vdash }
    \quad
    \inferrule[]
    {\Ga \vdash \varphi \quad \Ga \vdash}
    {\Ga , \varphi \vdash }
    \quad
    \inferrule[]
    {\Ga;S\vdash \quad \Ga \vdash T :: \smallkind}
    {\Ga;S,l:T \vdash}
    \quad
    \inferrule[]
    {\, }
    {\cdot \vdash }\\[1.5ex]
    \inferrule[]
    {\, }
    {\Ga;\cdot \vdash }
  \end{array}
\]

\paragraph{Kind well-formedness}

\[
  \begin{array}{c}
    \inferrule[]
    {\Ga \vdash \quad K \in \{ \krecord,\kcollection ,\kfunction, \kreference,
                  \kname, \ktype \} }
     {\Ga \vdash K}
    \quad
    \inferrule[]
    {\Ga \vdash K \quad \Ga , t{:}K \vdash K'}
    {\Ga \vdash \Pi t{:}K.K'}
    \quad

    \inferrule[]{\Ga \vdash K }
    {\Ga \vdash \kpolyfun_K}
    \quad
    \inferrule[]
    { \Ga \vdash \smallkind \quad \Ga , t{:}\smallkind \vdash \varphi}
    {\Ga \vdash \kref{t {::} \smallkind}{\varphi}}
  \end{array}
\]

\paragraph{Refinement Well-formedness}

Refinement well-formedness simply requires context well-formedness and that all logical predicates
are well-sorted (i.e.~logical expressions of type $\tbool$). All types
occurring in refinements must be well-kinded (we write $p$ to stand
for any logical predicate or uninterpreted function of the theory with
the appropriate sort):
\[
 \inferrule[]
  {
  \forall i \in \{1,\dots,n\} . \Ga \vdash T_i :: \smallkind}
{\Ga \vdash p(T_1, \dots , T_n) }
\]

\paragraph{Refinement Satisfiability}A refinement is satisfiable if it is well-formed and if the representation
of the context $\Ga$ and the refinement $\varphi$ as an implicational formula is
SMT-valid.
\[
  \inferrule[]
  {\Ga \vdash \varphi \quad \m{Valid}(\llb \Ga \rrb \Rightarrow \llb \varphi\rrb)}
  {\Ga \models \varphi}
\]

\paragraph{Kinding}

\[
  \begin{array}{c}
    \inferrule[]
    { t{:}K \in \Ga \quad \Ga \vdash }
    { \Ga \vdash t :: K    }
    \quad
        \inferrule[]
    {\Ga \vdash T :: K \quad \Ga \vdash K \leq K'}
    {\Ga \vdash T :: K'}
    \quad
    \inferrule[]
    {\Ga \vdash}
    {\Ga \vdash \tbool :: \ktype}
    \quad
    \inferrule[]
    {\Ga \vdash}
    {\Ga \vdash \one :: \ktype}
    \quad
     \inferrule[]
    {\Ga \vdash  \ell \in \nameset}
    {\Ga \vdash \ell :: \kname }\\[1.5em]

     \inferrule[]
    {\Ga \vdash T :: \Pi t{:}K.K' \quad
     \Ga \vdash S :: K}
    {\Ga \vdash T\, S :: K'\{S/t\}}
    \quad
    \inferrule[]
    {\Ga \vdash K \quad \Ga , t{:}K \vdash T :: K'}
    {\Ga \vdash \lambda t {::}K.T :: \Pi t{:}K.K'}
    \quad
    \inferrule[]
    {\begin{array}{c}
       \Ga \vdash T :: \Pi t {:}K_1.K_3 \\
       \Ga , t {:}K_1 \vdash
       T\,t :: K_2 \quad x \not\in fv(T)
       \end{array}}
    {\Ga \vdash T :: \Pi t {:}K_1.K_2}\\[1.5em]

    \inferrule[]
    {\Ga \vdash K \quad \Ga , t{:}K \vdash T :: \ktype}
    {\Ga \vdash \forall t {::} K. T :: \kpolyfun_K}
    \\[1.5em]

      \inferrule[]
    {\Ga \vdash}
    {\Ga \vdash \trecnil :: \krecord}
    \quad
    \inferrule[]
    {\Ga \vdash L :: \kname \quad
    \Ga \vdash T :: \smallkind \quad \Ga \vdash S :: \kref{t {::} \krecord}{L \not\in \lb{t} }}
    {\Ga \vdash \treccons{L}{T}{S} ::
    \krecord }\\[1.5em]

    \inferrule[]
    {\Ga \vdash T :: \kref{t{::}\krecord}{\neg\m{empty}(t)}}
    {\Ga \vdash \hdtype{T} :: \ktype }
    \quad
    \inferrule[]
    {\Ga \vdash T :: \kref{t{::}\krecord}{\neg\m{empty} (t)}}
    {\Ga \vdash \hdlabel{T} :: \kname}
    \quad
    \inferrule[]
    {\Ga \vdash T :: \kref{t{::}\krecord}{\neg\m{empty} (t)}}
    {\Ga \vdash \tl{T} :: \krecord}\\[1.5em]


    \inferrule[]
    {\Ga \vdash T :: \ktype \quad \Ga \vdash S :: \ktype}
    {\Ga \vdash T \rightarrow S :: \kfunction}
    \quad
    \inferrule[]
    {\Ga \vdash T :: \kfunction}
    {\Ga \vdash \dom{T} ::\ktype}
    \quad
    \inferrule[]
    {\Ga \vdash T :: \kfunction}
    {\Ga \vdash \codom{T} ::\ktype}
\\[1.5em]

    \inferrule[]
    {\Ga \vdash T :: \ktype}
    {\Ga \vdash \tcol{T} :: \kcollection}
    \quad
    \inferrule[]
    {\Ga \vdash T :: \kcollection}
    {\Ga \vdash \colof{T} :: \ktype}
    \quad
    \inferrule[]
    {\Ga \colof{T} :: \ktype}
    {\Ga \vdash T :: \kcollection}\\[1.5em]
    
    \inferrule[]
    {\Ga \vdash T :: \ktype}
    {\Ga \vdash \tref{T} :: \kreference}
    \quad
    \inferrule[]
    {\Ga \vdash T :: \kreference}
    {\Ga \vdash \refof{T} :: \ktype}
    \quad
    \inferrule[]
    {\Ga \vdash \refof{T} :: \ktype}
    {\Ga \vdash T :: \kreference}
    \\[1.5em]

 \inferrule[]
    {\Ga \vdash \varphi \quad \Ga , \varphi \vdash T :: K \quad \Ga , \neg\varphi \vdash S :: K}
    {\Ga \vdash \ite{\varphi}{T}{S} :: K    }
    \quad
    \inferrule[]
    {\begin{array}{c}
    \Ga \vdash \smallkind \quad
    \Ga \vdash T :: \smallkind''
    \quad
    \Ga , t {:}\smallkind  \vdash S :: K'
    \quad
       \Ga \vdash U :: K'
     \end{array}}
    {\Ga \vdash \tkindofP{T}{\smallkind}{S}{U} :: K'}\\[1.5em]

    \inferrule[]
    {\Ga , F {:} \Pi t{:}K.K' , t{:}K \vdash T :: K'\quad
    \mathsf{structural}(T,F,t)}
    {\Ga \vdash \tfix{F}{K}{K'}{t}T :: \Pi t{:}K.K' }
    \quad
       \inferrule[]
    {\Ga \models \varphi\{T/t\} \quad \Ga \vdash T :: \smallkind }
    {\Ga \vdash T :: \kref{t{:}\smallkind}{\varphi}}\\[1.5em]

  \end{array}
\]

\paragraph{Kind Equality and Sub-kinding}

\[
  \begin{array}{c}
    \inferrule[]
    {\Ga \vdash}
    {\Ga \vdash \smallkind \leq \ktype }
    \quad
     \inferrule[]
    {\Ga \vdash \varphi}
    {\Ga \vdash \kref{t{::}\smallkind}{\varphi} \leq \smallkind}
    \quad
    \inferrule[]
    {\m{Valid}(\llbracket \Ga \rrbracket \wedge \llbracket \varphi
    \rrbracket \Rightarrow \llbracket \psi \rrbracket)}
    {\Ga \vdash \kref{t{::}\smallkind}{\varphi} \leq \kref{t{::}\smallkind}{\psi}}
    \\[1.5em]
    \inferrule[]
    {\Ga \vdash K_3 \leq K_1 \quad \Ga , t {:} K_3 \vdash K_2 \leq K_4}
    {\Ga \vdash \Pi t {::} K_1 . K_2 \leq \Pi t {::} K_3 . K_4  }
    \quad
    \inferrule[]
    {\Ga \vdash K_1 \leq K_2 \quad \Ga \vdash K_2 \leq K_1}
    {\Ga \vdash K_1 \equiv K_2}
  \end{array}
\]

\paragraph{Type Equality}

\[
  \begin{array}{c}
    \mbox{Reflexivity, Transitivity, Symmetry} +\\[1em]
    \inferrule[]
    {\Ga \vdash T :: \kref{t{::}\smallkind}{t = S} \quad \Ga \vdash S ::
    \smallkind}
    {\Ga \vdash T \equiv S :: \smallkind}
    \\[1.5em]
   \inferrule[]
  {\Ga \vdash T_1 \equiv S_1 :: \Pi t{:}K_1.K_2
    \quad
    \Ga \vdash T_2 \equiv S_2 :: K_1}
    {\Ga \vdash T_1\,T_2 \equiv S_1\,S_2 :: K_2\{T_2/t\}}
    \quad    \inferrule[]
    {\begin{array}{c}
       \Ga \vdash S :: \Pi t {:}K_1.K_3 \quad \Ga \vdash T :: \Pi
       t{:}K_1.K_4 \\
       \Ga , t{:} K_1 \vdash S\,t \equiv T\,t :: K_2
       \end{array}}
    {\Ga \vdash S \equiv T :: \Pi t {:}K_1.K_2}\\[1.5em]

     \inferrule[]
  {\Ga \vdash K_1 \equiv K_1'
   \quad \Ga , t{:}K_1\vdash T_1 \equiv T_2 :: K_2 }
  {\Ga \vdash \lambda t{::}K_1.T_1 \equiv \lambda t{::}K_1'.T_2 :: \Pi t{:}K_1.K_2}
  \quad
  \inferrule[]
  {\Ga , t{:}K \vdash T :: K'
   \quad \Ga \vdash S :: K}
  {\Ga \vdash (\lambda t{::}K.T)\,S \equiv T\{S/t\} :: K'\{S/t\}
    }\\[1.5em]

     \inferrule[]
  {\Ga \vdash K_1 \equiv K_2 \quad
   \Ga , t{:}K_1 \vdash T \equiv S :: \smallkind}
  {\Ga \vdash \forall t{::}K_1.T \equiv \forall t{:}K_2.S ::
  \kpolyfun_{K_1}}
    \\[1.5em]


      \inferrule[]
  {\Ga \vdash L \equiv L' :: \kname\quad
  \Ga \vdash T\equiv T' :: \smallkind\quad
  \Ga \vdash S \equiv S' :: \kref{t {::} \krecord}{L\not\in \lb{t} }}
  {\Ga \vdash \treccons{L}{T}{S} \equiv
  \treccons{L'}{T'}{S'} :: \krecord}
    
  \end{array}
\]
\[
  \begin{array}{c}
\inferrule[]
  {\Ga \vdash T \equiv S ::
  \kref{r{::}\krecord}{\neg\m{empty}(r)}  }
  {\Ga \vdash \hdlabel{T} \equiv \hdlabel{S} :: \kname}
  \quad
  \inferrule[]
  {\Ga \vdash T \equiv S ::  \kref{r{::}\krecord}{\neg\m{empty} (r)}  \quad  }
  {\Ga \vdash \hdtype{T} \equiv \hdtype{S} :: \ktype}
\\[1.5em]
  \inferrule[]
  {\Ga \vdash T \equiv S ::\kref{r{::}\krecord}{\neg\m{empty} (r)}  \quad
  }
    {\Ga \vdash \tl{T} \equiv \tl{S} :: \krecord}
    \quad
  \inferrule[]
  {\Ga \vdash L :: \kname \quad
  \Ga \vdash T :: \smallkind \quad
  \Ga \vdash S :: \kref{t {::} \krecord}{L\not\in \lb{t} }}
  {\Ga \vdash \hdlabel{\treccons{L}{T}{S}} \equiv L :: \kname}
  \\[1.5em]
  \inferrule[]
  {\Ga \vdash L :: \kname \quad
  \Ga \vdash T ::\smallkind \quad
  \Ga \vdash S :: \kref{t {::} \krecord}{L\not\in \lb{t}}}
  {\Ga \vdash \hdtype{\treccons{L}{T}{S}} \equiv T :: \ktype}
\\[1.5em]
  \inferrule[]
  {\Ga \vdash L :: \kname \quad
  \Ga \vdash T :: \smallkind \quad
  \Ga \vdash S :: \kref{t {::} \krecord}{L\not\in \lb{t}}}
  {\Ga \vdash \tl{\treccons{L}{T}{S}} \equiv S :: \krecord}\\[1.5em]


  \inferrule[]
  {\Ga \vdash T \equiv S :: \ktype}
  {\Ga \vdash \tcol{T} \equiv \tcol{S} :: \kcollection}\quad
  \inferrule[]
  {\Ga \vdash T \equiv S ::
  \kcollection}
  {\Ga \vdash \colof{T} \equiv \colof{S} :: \ktype}\quad
  \inferrule[]
  {\Ga \vdash T :: \ktype}
    {\Ga \vdash \colof{\tcol{T}} \equiv T :: \ktype }\\[1.5em]

     \inferrule[]
  {\Ga \vdash T \equiv S :: \ktype}
    {\Ga \vdash \tref{T} \equiv \tref{S} :: \kreference}\\[1.5em]
    
     \inferrule[]
  {\Ga \vdash T \equiv S ::
  \kreference}
  {\Ga \vdash \refof{T} \equiv \refof{S} :: \ktype}\quad
  \inferrule[]
  {\Ga \vdash T :: \ktype}
    {\Ga \vdash \refof{\tref{T}} \equiv T :: \ktype }\\[1.5em]

    \inferrule[]
  {\Ga \vdash T \equiv S :: \ktype
   \quad \Ga \vdash T' \equiv S' :: \ktype}
  {\Ga \vdash T \rightarrow T' \equiv S \rightarrow S' ::
    \kfunction}\\[1.5em]
    \inferrule[]
  {\Ga \vdash T \equiv S :: \kfunction}
  {\Ga \vdash \dom{T} \equiv \dom{S} ::\ktype}
  \quad
  \inferrule[]
  {\Ga \vdash T \equiv S :: \kfunction}
  {\Ga \vdash \codom{T} \equiv \codom{S} :: \ktype}\\[1.5em]
  \inferrule[]
  {\Ga \vdash T :: \smallkind \quad \Ga \vdash S :: \smallkind'}
  {\Ga \vdash \dom{T\rightarrow S} \equiv T :: \ktype}
  \quad
  \inferrule[]
  {\Ga \vdash T :: \smallkind \quad \Ga \vdash S :: \smallkind'}
    {\Ga \vdash \codom{T\rightarrow S} \equiv S :: \ktype}\\[1.5em]

  \end{array}
\]

\[
  \begin{array}{c}
 \inferrule[]
  {\begin{array}{c}
     \Ga \vdash T \equiv T' :: \smallkind_0 \quad
  \Ga \vdash \smallkind \equiv \smallkind' \quad
  \Ga , t{:}\smallkind \vdash S \equiv S' :: K''\quad
     \Ga \vdash U \equiv U' :: K''
   \end{array}}
  {\Ga \vdash \tkindofP{T}{\smallkind}{S}{U} \equiv
  \tkindofP{T'}{\smallkind'}{S'}{U'} :: K'' }\\[1.5em]

  \inferrule[]
  {\begin{array}{c}
     \Ga \vdash T :: \smallkind \quad
  \Ga ,t{:}\smallkind\vdash S :: K'\quad
     \Ga \vdash U :: K'
   \end{array}}
  {\Ga \vdash  \tkindofP{T}{\smallkind}{S}{U} \equiv S\{T/t\} :: K'}\\[1.5em]

 \inferrule[]
  {\begin{array}{c}
     \Ga \vdash T :: \smallkind_0 \quad \Ga \vdash \smallkind_0\not\equiv \smallkind\quad
  \Ga ,t{:}\smallkind\vdash S :: K'\quad
     \Ga \vdash U :: K'
   \end{array}}
  {\Ga \vdash  \tkindofP{T}{\smallkind}{S}{U} \equiv U :: K'}\\
    \end{array}
\]

\[
  \begin{array}{c}
\inferrule[]
  {\Ga \models \varphi \Leftrightarrow \psi \quad
  \Ga,\varphi \vdash T_1 \equiv S_1 :: K\quad
  \Ga , \neg\varphi \vdash T_2 \equiv S_2 :: K}
    {\Ga \vdash \ite{\varphi}{T_1}{T_2} \equiv \ite{\psi}{S_1}{S_2} :: K }\\[1.5em]

    \inferrule[]
    {\Ga \vdash \varphi \quad
    \Ga ,\varphi \vdash T_1 :: K
    \quad \Ga ,\neg\varphi \vdash T_2 :: K}
    {\Ga \vdash \ite{\varphi}{T_1}{T_2} \equiv T_1 :: K}
    \quad
    \inferrule[]
    {\Ga \vdash \neg\varphi  \quad
    \Ga ,\varphi \vdash T_1 :: K
    \quad \Ga ,\neg\varphi \vdash T_2 :: K}
    {\Ga \vdash \ite{\varphi}{T_1}{T_2} \equiv T_2 :: K}
      \\[1.5em]
    \inferrule[]
    {\Ga \vdash \varphi \quad \Ga,\varphi \vdash T ::K \quad
     \Ga , \neg\varphi \vdash T :: K}
    {\Ga \vdash \ite{\varphi}{T}{T} \equiv T :: K}
    \quad
    \inferrule[]
    {\Ga \vdash T \equiv S :: K \quad \Ga \vdash K \leq K'}
    {\Ga \vdash T\equiv S :: K'}
    \\[1.5em]

    \inferrule[]
  {\begin{array}{c}
     \m{structural}(T,F,t)\quad  \m{structural}(S,F,t)\\
     \Ga \vdash K_1 \equiv K_1'\quad
  \Ga \vdash K_2 \equiv K_2' \quad
     \Ga, F{:}\Pi t{:}K_1.K_2,t{:}K_1 \vdash T \equiv S :: K_2
     \end{array}}
  {\Ga\vdash \tfix{F}{K_1}{K_2}{t}T \equiv
    \tfix{F}{K_1'}{K_2'}{t}S :: \Pi t{:}K_1.K_2  }\\[1.5em]

    \inferrule[]
    {\Ga ,t{:}K_1 \vdash K_2 \quad
    \Ga , F{:}\Pi t{:}K_1.K_2,t{:}K_1 \vdash T :: K_2\quad
    \Ga \vdash S :: K_1 \quad \m{structural}(T,F,t)}
    {\Ga \vdash (\tfix{F}{K_1}{K_2}{t}T)\,S \equiv
      T\{S/t\}\{(\tfix{F}{K_1}{K_2}{t}T)/F\} :: K_2\{S/t\}}
\end{array}    
\]

\paragraph{Typing}

For readability we omit the store typing environment from all rules
except in the location typing rule. In all other rules the store
typing is just propagated unchanged.

{\small
\[
   \begin{array}{c}
    \inferrule[(var)]
    {(x{:}T)\in\Ga \quad \Ga ; S \vdash \quad \Ga \vdash }
    {\Ga \vdash_S x : T}
    \quad
    \inferrule[($\one$I)]{ \Ga \vdash}{ \Ga \vdash \unitelem : \one}
    \quad
    \inferrule[($\rightarrow$I)]
    {\Ga \vdash_S T :: \ktype \quad \Ga , x{:}T\vdash_S M : U}
    {\Ga \vdash_S \lambda x{:}T.M : T \rightarrow U}\\[1.5em]
    \inferrule[($\rightarrow$E)]
    {\Ga \vdash_S M : T \rightarrow S\quad \Ga \vdash_S N : T}
    {\Ga \vdash_S M\, N : S}
    \quad
    \inferrule[($\forall$I)]
    {\Ga \vdash K \quad \Ga , t{:}K \vdash_S M : T}
    {\Ga \vdash_S \typeabs{t{::}K}M : \forall t{::}K.T }
    \quad
    \inferrule[($\forall$E)]
    {\Ga \vdash_S M : \forall t :: K.S \quad \Ga \vdash T :: K }
    {\Ga \vdash_S \typeapp{M}{T} : S\{T/t\}}\\[1.5em]
    \inferrule[($\trecnil I_1$)]
    {\Ga \vdash \quad \Ga ; S \vdash }
    {\Ga \vdash_S \record{} : \trecnil }
    \quad
    \inferrule[($\trecnil I_2$)]
    {\Ga \vdash_S L :: \kname \quad  \Ga \vdash_S M : T_1
    \quad \Ga \vdash T_2 :: \kref{t {::} \krecord}{ L \not\in \lb{t}} \quad
    \Ga \vdash_S N : T_2}
    {\Ga \vdash_S \reccons{L}{M}{N} : \treccons{L}{T}{U}}\\[1.5em]
    \inferrule[(reclabel)]
    {\Ga \vdash_S M : \treccons{L}{T}{U}    }
    {\Ga \vdash_S \tmhdlabel{M} : L}\quad
    \inferrule[(recterm)]
    {\Ga \vdash_S M : \treccons{L}{T}{U}   }
    {\Ga \vdash_S \tmhdterm{M} : T}\quad

     \inferrule[(rectail)]
    {\Ga \vdash_S M : \treccons{L}{T}{U}   }
    {\Ga \vdash_S \tl{M} : U}
    \\[1.5em]

    \inferrule[(true)]
    {\Ga \vdash\quad \Ga ; S \vdash}
    {\Ga \vdash_S \true: \tbool}
    \quad
    \inferrule[(false)]
    {\Ga \vdash \quad \Ga ; S \vdash}
    {\Ga \vdash_S \false : \tbool}\quad
    
    \inferrule[(bool-ite)]
    {\Ga \vdash_S M : \tbool \quad
    \Ga \vdash_S N_1 : T \quad
    \Ga \vdash_S N_2 :T }
    {\Ga \vdash_S \ite{M}{N_1}{N_2} : T}\\[1.5em]

    \inferrule[(emp)]
    {\Ga \vdash T :: \ktype \quad \Ga ; S \vdash}
    {\Ga \vdash_S \emptycol{T} : \tcol{T}}
    \quad
    \inferrule[(cons)]
    { \Ga \vdash_S M : T \quad \Ga \vdash_S N :\tcol{T} }
       {\Ga \vdash_S M :: N : \tcol{T}}
    \quad

   \inferrule[(case)]
    {
    \Ga \vdash_S M : \tcol{T} \quad \Ga \vdash N_1 : S \quad
    \Ga ,x{:}T, xs{:}\tcol{T}\vdash N_2 : S
    }
    {\Ga \vdash_S \colcase{M}{N_1}{x {::} xs \Rightarrow N_2} : S }
     \\[1.5em]
 
    \inferrule[(loc)]
    {\Ga \vdash \quad \Ga;S\vdash \quad S(l) = T}{\Ga \vdash_S l : \tref{T}}
    \quad
    \inferrule[(ref)]
    {\Ga \vdash_S M : T}
    {\Ga \vdash_S \mkref{M} : \tref{T}}
    \quad
     \inferrule[(deref)]
    {\Ga \vdash_S M : \tref{T}}
       {\Ga \vdash_S \deref{M} : T}
    \quad
    \inferrule[(assign)]
    {\Ga \vdash_S M : \tref{T} \quad \Ga \vdash_S N : T}
    {\Ga \vdash_S M := N : \one }\\[1.5em]

    \inferrule[(prop-ite)]
    {\Ga \vdash \varphi \quad
    \Ga , \varphi \vdash_S M : T_1 \quad
    \Ga , \neg\varphi \vdash_S N : T_2}
    {\Ga \vdash_S \ite{\varphi}{M}{N} : \ite{\varphi}{T_1}{T_2}}
    \quad
    \inferrule[(kindcase)]
    {\Ga \vdash T :: \smallkind' \quad \Ga \vdash \smallkind \quad
     \Ga , t{:}\smallkind \vdash_S M : U \quad \Ga \vdash_S N : U}
    {\Ga \vdash_S  \tkindofP{T}{\smallkind}{M}{N} : U}\\[1.5em]
    \inferrule[(conv)]
    {\begin{array}{c}
       \Ga \vdash_S M : U \quad \Ga \models U \equiv T :: \smallkind
     \end{array}}
    {\Ga \vdash_S M : T}
    \quad
    \inferrule[(fix)]
    {\Ga ,F: T \vdash_S M : T \quad \m{structural}(F,M)}
    {\Ga \vdash_S \vrec{F}{T}{M} : T}
    
   \end{array}
 \]}



\section{Full Operational Semantics}
\label{app:opsemrules}

The type reduction relation, $T \rightarrow T'$ is defined as a call-by-value
reduction semantics on types $T$, obtained by orienting the computational rules of type
equality from left to right (thus excluding rule ({\sc eq-elim}))
and enforcing the call-by-value discipline.
Recalling that type values are denoted by $T_v, S_v$ and given by the following grammar:
\[
  \begin{array}{rcl}
    T_v,S_v & ::= & \lambda t{::}K.T \mid \forall t{::}K.T \mid
                    \ell \mid \trecnil \mid \treccons{\ell}{T_v}{S_v} \mid
                    \tcol{T_v} \mid \tref{T_v} \mid T_v \rightarrow S_v \mid
                    \bot \mid \tbool \mid \one \mid t
                   
  \end{array}
\]
The type reduction rules are:
\[
  \begin{array}{c}
    \inferrule[]
    {T\rightarrow T'}
    {T\,S \rightarrow T'\,S}
    \quad
    \inferrule[]
    {S \rightarrow S'}
    {(\lambda t{::}K.T)\,S \rightarrow (\lambda t{::}K.T)\,S'}
    \quad
    \inferrule[]
    {T\not\rightarrow }
    {(\lambda t{::}K.T)\,S_v \rightarrow T\{S_v/t\}}
    \quad
       \inferrule[]
    {T \rightarrow T'}
    {\lambda t {::} K.T \rightarrow \lambda t {::} K.T'} 
    \\[1.5em]
    
    \inferrule[]
    {T \rightarrow T'}
    {\forall t {::} K.T \rightarrow \forall t {::} K.T'}
    \quad
    \inferrule[]
    { }
    { (\tfix{F}{K}{K'}{t}T)\,S_v \rightarrow T\{S_v/t\}\{\tfix{F}{K}{K'}{t}T/F\} }\\[1.5em]

    \inferrule[]
    {L \rightarrow L'}
    {\treccons{L}{T}{S} \rightarrow \treccons{L'}{T}{S}}
    \quad
    \inferrule[]
    {T \rightarrow T'}
    {\treccons{\ell}{T}{S} \rightarrow \treccons{\ell}{T'}{S}}
    \quad
    \inferrule[]
    {S \rightarrow S'}
    {\treccons{\ell}{T_v}{S} \rightarrow \treccons{\ell}{T_v}{S'}}\\[1.5em]

    \inferrule[]
    {T\rightarrow T'}
    {\hdlabel{T} \rightarrow \hdlabel{T'}}
    \quad
    \inferrule[]
    {T\rightarrow T'}
    {\hdtype{T} \rightarrow \hdtype{T'}}
    \quad
    \inferrule[]
    {T\rightarrow T'}
    {\tl{T} \rightarrow \tl{T'}}\\[1.5em]

    \inferrule[]
    { }
    {\hdlabel{\treccons{\ell}{T_v}{S_v}} \rightarrow \ell }
    \quad
    \inferrule[]
    { }
    {\hdtype{\treccons{\ell}{T_v}{S_v}} \rightarrow T_v }
    \quad
    \inferrule[]
    { }
    {\tl{\treccons{\ell}{T_v}{S_v}} \rightarrow S_v }
  \end{array}\]

\[\begin{array}{c}

    \inferrule[]
    {T \rightarrow T'}
    {\tcol{T} \rightarrow \tcol{T'}}
    \quad
    \inferrule[]
    {T \rightarrow T'}
    {\colof{T} \rightarrow \colof{T'}}
    \quad
    \inferrule[]
    { }
    {\colof{\tcol{T_v}} \rightarrow T_v }\\[1.5em]

    \inferrule[]
    {T \rightarrow T'}
    {\tref{T} \rightarrow \tref{T'}}
    \quad
    \inferrule[]
    {T \rightarrow T'}
    {\refof{T} \rightarrow \refof{T'}}
    \quad
    \inferrule[]
    { }
    {\refof{\tref{T_v}} \rightarrow T_v }\\[1.5em]

    \inferrule[]
    {T\rightarrow T'}
    {(T\rightarrow S) \rightarrow (T'\rightarrow S)}
    \quad
    \inferrule[]
    {S\rightarrow S'}
    {(T_v\rightarrow S) \rightarrow (T_v\rightarrow S')}
    \quad
    \inferrule[]
    {T\rightarrow T'}
    {\dom{T} \rightarrow \dom{T'}}
    \quad
 \inferrule[]
    {T\rightarrow T'}
    {\codom{T} \rightarrow \codom{T'}}
    \\[1.5em]

    \inferrule[]{ }{\dom{T_v\rightarrow S_v} \rightarrow T_v}
    \quad
    \inferrule[]{ }{\codom{T_v\rightarrow S_v} \rightarrow S_v}\\[1.5em]

    \inferrule[]
    {\Ga \models \varphi}
    { \ite{\varphi}{T}{S} \rightarrow T}
    \quad 
    \inferrule[]
    {\Ga \models \neg\varphi}
    { \ite{\varphi}{T}{S} \rightarrow S}
    \quad
    \inferrule[]
    {T \rightarrow T'}
    {\tkindofP{T}{\smallkind}{S}{U} \rightarrow \tkindofP{T'}{\smallkind}{S}{U}}
    \\[1.5em]
    \inferrule[]
    { \Ga \vdash T_v :: \smallkind}
    {\tkindofP{T_v}{\smallkind}{S}{U} \rightarrow S\{T_v/t\}}
    \quad
    \inferrule[]
    { \Ga \vdash T_v :: \smallkind' \quad \Ga \vdash \smallkind'\not\equiv\smallkind}
    {\tkindofP{T_v}{\smallkind}{S}{U} \rightarrow U}

  \end{array}
\]

The rules of our operational semantics are as follows:
\[
  \begin{array}{c}
 \inferrule*[lab={R-RecConsLab}]
    { \stateconf{H}{L} \red \stateconf{H'}{L'} }
    { \stateconf{H}{ \reccons{L }{M}{N} } \red
    \stateconf{H'}{\reccons{L' }{M}{N}} }\\[1.5em]
    
    \inferrule*[lab={R-RecConsL}]
    { \stateconf{H}{M} \red \stateconf{H'}{M'} }
    { \stateconf{H}{ \reccons{\ell }{M}{N} } \red
    \stateconf{H'}{\reccons{\ell }{M'}{N}} }
    \quad
    \inferrule*[lab={R-RecConsR}]
    { \stateconf{H}{M} \red \stateconf{H'}{M'} }
    { \stateconf{H}{ \reccons{\ell }{v}{M} }
    \red \stateconf{H'}{\reccons{\ell }{v}{M'}} }
    \\[1.5em]

    \inferrule*[lab={R-RecHdLab}]
    {\stateconf{H}{M} \red \stateconf{H'}{M'} }
    {\stateconf{H}{\tmhdlabel{M}} \red \stateconf{H'}{\tmhdlabel{M'}}  }
    \quad
    \inferrule*[lab={R-RecHdLabV}]
    { }
    {\stateconf{H}{\tmhdlabel{\reccons{\ell}{v}{v'}}} \red \stateconf{H}{\ell}  }
    \\[1.5em]

    \inferrule*[lab={R-RecHdVal}]
    {\stateconf{H}{M} \red \stateconf{H'}{M'} }
    {\stateconf{H}{\tmhdterm{M}} \red \stateconf{H'}{\tmhdterm{M'}}  }
    \quad
    \inferrule*[lab={R-RecHdValV}]
    { }
    {\stateconf{H}{\tmhdterm{\reccons{\ell}{v}{v'}}} \red \stateconf{H}{v}  }
    \\[1.5em]

    \inferrule*[lab={R-RecTail}]
    {\stateconf{H}{M} \red \stateconf{H'}{M'} }
    {\stateconf{H}{\tmtl{M}} \red \stateconf{H'}{\tmtl{M'}}  }
    \quad
    \inferrule*[lab={R-RecTailV}]
    { }
    {\stateconf{H}{\tmtl{\reccons{\ell}{v}{v'}}} \red \stateconf{H}{v'}  }\\[1.5em]

    \inferrule*[lab={R-Ref}]{ \stateconf{H}{M} \red \stateconf{H'}{M'}}
    { \stateconf{H}{ \mkref{M}} \red \stateconf{H'}{ \mkref{M'} } } \quad
    \inferrule*[lab={R-RefV}]{ l \not\in \m{dom}(H) }{ \stateconf{H}{ \mkref{v}} \red \stateconf{H[l \mapsto v]}{l} }\\[1.5em]
    \inferrule*[lab={R-Deref}]{ \stateconf{H}{M}
    \red \stateconf{H'}{M'}  }{ \stateconf{H}{\deref{M}} \red \stateconf{H'}{\deref{M'}} } \quad
   
    \inferrule*[lab={R-DerefV}]{ H(l) = v }{ \stateconf{H}{\deref{l}} \red \stateconf{H}{v} } \\[1.5em]

    \inferrule*[lab={R-AssignL}]{ \stateconfH{M} \red \stateconf{H'}{M'} }{ \stateconfH{ M := N} \red \stateconf{H'}{M' := N} }
    \quad
    \inferrule*[lab={R-AssignR}]{ \stateconfH{M} \red \stateconf{H'}{M'} }{ \stateconfH{ l := M} \red \stateconf{H'}{l := M'} }  \end{array}\]

\[
  \begin{array}{c}
    \inferrule*[lab={R-AssignV}]{ \,}
    { \stateconf{H}{ l := v } \red \stateconf{H[l \mapsto v]}{v} } \quad
 \inferrule*[lab={R-PropT}]{ \Gamma  \models \varphi }{ \stateconfH{\ite{ \varphi }{M}{N} }  \red \stateconfH{M}} 
   \\[1.5em]
    \inferrule*[lab={R-PropF}]{ { \Gamma } \models \neg \varphi }{ \stateconfH{ \ite{ \varphi }{M}{N} }  \red \stateconfH{N}} \quad
   
    \inferrule*[lab={R-IfT}]{ }{ \stateconf{H}{ \ite{\true}{M}{N}} \red \stateconf{H}{M}} \\[1.5em]

  \end{array}
\]
\[
  \begin{array}{c}
      \inferrule*[lab={R-IfF}]{ }{ \stateconf{H}{\ite{\false}{M}{N}} \red \stateconf{H}{N}} \quad
    \inferrule*[lab={R-If}]{ \stateconf{H}{M} \red \stateconf{H'}{M'} }{ \stateconf{H}{\ite{M}{N_1}{N_2}} \red \stateconf{H'}{\ite{M'}{N_1}{N_2}}} \\[1.5em]

    \inferrule*[lab={R-TAppTRed}]{T\rightarrow T'}
    {\stateconfH{(\Lambda t {::} K. \, M) [T]} \red
    \stateconfH{(\Lambda t {::} K. \, M) [T']} }\\[1.5em]

    \inferrule*[lab={R-Fix}]{ \, }{ \stateconfH{\vrec{F}{T_v}{M}}
    \red \stateconf{H}{M\{\vrec{F}{T_v}{M}/{F}\}}} \quad

    \inferrule*[lab={R-TApp}]{ \, }{ \stateconfH{(\Lambda t {::} K. \, M) [T_v]}   \red  \stateconfH{M\{T_v/t\}} } \\[1.5em]
    \inferrule*[lab={R-TAppL}]
    { \stateconf{H}{M} \red \stateconf{H'}{M'} }
    { \stateconf{H}{M[T]} \red \stateconf{H'}{M'[T]}}
    \quad
    \inferrule*[lab={R-AppV}]{T\not\red }{ \stateconf{H}{(\lambda x{:} T. M) \, V} \red \stateconf{H}{M\{V/x\}} }
    \\[1.5em]

    \inferrule*[lab={R-AppL}]{ \stateconf{H}{M} \red \stateconf{H'}{M'}}
    { \stateconf{H}{M\, N} \red \stateconf{H'}{M'\, N}}\quad
     \inferrule*[lab={R-AppLT}]{ T\rightarrow T'}
    { \stateconf{H}{\lambda x{:}T.M} \red \stateconf{H'}{\lambda
    x{:}T'.M}}\\[1.5em]

    \inferrule*[lab={R-AppL}]{ \stateconf{H}{M} \red \stateconf{H'}{M'}}
    { \stateconf{H}{M\, N} \red \stateconf{H'}{M'\, N}}
    \quad
    \inferrule*[lab={R-AppR}]{T\not\red \quad \stateconf{H}{N} \red \stateconf{H'}{N'} }
    { \stateconf{H}{(\lambda x{:} T. M) \, N} \red \stateconf{H'}{ (\lambda x: T. M) \, N' } }\\[1.5em]

    \inferrule*[lab={R-ColConsL}]{ \stateconf{H}{M} \red \stateconf{H'}{M'} }{ \stateconf{H}{M :: N} \red \stateconf{H'}{M' :: N} } \quad
    \inferrule*[lab={R-ColConsR}]{ \stateconf{H}{N} \red \stateconf{H'}{N'} }{ \stateconf{H}{v :: N} \red \stateconf{H'}{v :: N'} }\\[1.5em]

    \inferrule*[lab={R-ColTlV}]
    {\stateconf{H}{M} \red \stateconf{H'}{M'}}
    {\stateconf{H}{\colcase{M}{N_1}{x {::} xs \Rightarrow
    N_2} \red \stateconf{H'}{\colcase{M'}{N_1}{x {::} xs \Rightarrow
    N_2}}}}\\[1.5em]

     \inferrule*[lab=({R-ColCaseEmp})]
    {\, }
    {\stateconf{H}{\colcase{\emptycol{}}{N_1}{x {::} xs \Rightarrow
    N_2}} \red
    \stateconf{H}{N_1}}
\\[1em]
       \inferrule*[lab=({R-ColCaseCons})]
    {\, }
    {\stateconf{H}{\colcase{v :: vs}{N_1}{x {::} xs \Rightarrow
    N_2}} \red
    \stateconf{H}{N_2\{v/x,vs/xs\}}}\\[1em]
   \end{array}
  \]

  \[
    \begin{array}{c}
      \inferrule*[lab={R-KindType}]
      {T\rightarrow T'}
      {\stateconfH{\tkindofP{T}{\smallkind}{M}{N} } \red
      \stateconfH{\tkindofP{T'}{\smallkind}{M}{N} }}

      \\[1.5em]
   \inferrule*[lab={R-KindL}]{\Ga \vdash T :: \smallkind }{
      \stateconf{H}{ \tkindofP{T}{\smallkind}{M}{N} }
        \red
        \stateconf{H}{M\{T/t\}}
      } \quad
   \inferrule*[lab={R-KindR}]{\Ga \vdash T :: \smallkind_0 \quad \Ga \vdash \smallkind_0 \not\equiv \smallkind }{
      \stateconf{H}{ \tkindofP{T}{\smallkind}{M}{N} }
        \red
        \stateconf{H}{N}
      } \\[1.5em]

    \end{array}
    \]


\section{Proofs}
\label{app:proofs}

\subst*
\begin{proof}
By induction on the derivation of the second given judgment.
We show some illustrative cases.

{\bf (a) }

\begin{description}

\item[Case:] $\inferrule*[right=(kref)]
    {\Ga, t{:}K,\Ga' \models \varphi\{S/s\} \quad \Ga , t{:}K,\Ga'  \vdash T :: \smallkind }
    {\Ga , t{:}K,\Ga' \vdash T :: \kref{t{::}\smallkind}{\varphi}}$
    
\begin{tabbing}
$\Ga , \Ga'\{T/t\} \models \varphi\{S/s\}\{T/t\}$ \` by i.h.\\
$\Ga , \Ga' \{T/t\}  \vdash S\{T/t\}  :: \smallkind\{T/t\} $ \` by
i.h.\\
$\Ga , \Ga'\{T/t\} \models \varphi \{T/t\}\{S \{T/t\}/s\}$ \` by
properties of substitution\\
$\Ga , \Ga' \{T/t\} \vdash S \{T/t\} :: \kref{s{::}\smallkind \{T/t\}
} { \varphi \{T/t\} }$ \` by rule
\end{tabbing}

\item[Case:]  
$\inferrule*[right=(entails)]
  {\Ga , t{:}K,\Ga' \vdash \varphi \quad \m{Valid}(\llb \Ga , t{:}K,\Ga'  \rrb \Rightarrow \llb \varphi\rrb)}
  {\Ga , t{:}K,\Ga' \models \varphi}$

\begin{tabbing}
$\Ga , \Ga'\vdash \varphi\{T/t\} $ \` by i.h.\\
$\m{Valid}(\llb \Ga , \Ga'\{T/t\}   \rrb \Rightarrow \llb
\varphi\{T/t\} \rrb)$ \` by logical substitution / congruence\\
$\Ga , \Ga' \{T/t\}\models \varphi\{T/t\}$ \` by rule
\end{tabbing}
  
\item[Case:]    $ \inferrule[]
    { \Ga,t{:}K,\Ga' \vdash \smallkind' \quad \Ga,t{:}K,\Ga' , s{:}\smallkind' \vdash \varphi}
    {\Ga,t{:}K,\Ga' \vdash \kref{s : \smallkind'}{\varphi}}$

    \begin{tabbing}
      $\Ga,\Ga'\{T/t\} \vdash \smallkind'\{T/t\}$ \` by i.h.\\
      $\Ga,\Ga'\{T/t\} , s{:}\smallkind'\{T/t\} \vdash \varphi\{T/t\}$ \` by i.h.\\
      $\Ga , \Ga'\{T/t\} \vdash \kref{s : \smallkind'\{T/t\}}{\varphi\{T/t\}}$ \` by rule
    \end{tabbing}

\item[Case:]  $\inferrule*[right=(r-eqelim)]
   {\Ga,t:K,\Ga' \vdash T' :: \kref{s{::}\smallkind}{s = S} \quad \Ga , t:K,\Ga' \vdash S ::
    \smallkind}
    {\Ga , t:K,\Ga' \vdash T' \equiv S :: \smallkind}$

\begin{tabbing}
$\Ga,\Ga'\{T/t\} \vdash T'\{T/t\} :: \kref{s{::}\smallkind\{T/t\}}{s =
  S\{T/t\}} $ \` by i.h.\\
$\Ga , \Ga'\{T/t\} \vdash S\{T/t\} :: \smallkind\{T/t\}$ \` by i.h.\\
$\Ga , \Ga'\{T/t\} \vdash T'\{T/t\} \equiv S\{T/t\} ::
\smallkind\{T/t\}$ \` by rule
\end{tabbing}






    

\item[Case:]   $\inferrule[]
    {\Ga,t{:}K,\Ga' \vdash K' \quad \Ga,t{:}K,\Ga' , s{:}K' \vdash T' :: \smallkind}
    {\Ga,t{:}K,\Ga' \vdash \forall s {:} K'. T' :: \kpolyfun_{K'}}$

    \begin{tabbing}
      $\Ga,\Ga'\{T/t\} \vdash K'\{T/t\}$ \` by i.h.\\
      $\Ga,\Ga'\{T/t\}, s{:}K'\{T/t\} \vdash T'\{T/t\} :: \smallkind$ \` by i.h.\\
      $\Ga,t{:}K,\Ga' \vdash \forall s {:} K'\{T/t\}.
      T'\{T/t\} :: \kpolyfun_{K'\{T/t\}}$ \` by rule
    \end{tabbing}

\item[Case:] $\inferrule[]
    {\Ga,t{:}K,\Ga' \vdash L :: \kname \quad
      \Ga,t{:}K,\Ga'
      \vdash T' :: \smallkind \quad
      \Ga,t{:}K,\Ga' \vdash S' :: \kref{t : \krecord}{L \# t }}
    {\Ga,t{:}K,\Ga' \vdash \treccons{L}{T'}{S'} ::
      \krecord }$

    \begin{tabbing}
      $\Ga,\Ga'\{T/t\} \vdash L\{T/t\} :: \kname$ \` by i.h.\\
      $\Ga,\Ga'\{T/t\} \vdash T'\{T/t\} :: \smallkind$ \` by i.h.\\
      $\Ga,\Ga'\{T/t\} \vdash S'\{T/t\} :: \kref{t : \krecord}{L\{T/t\} \# t }$ \` by i.h.\\
      $\Ga,\Ga'\{T/t\} \vdash \treccons{L\{T/t\}}{T'\{T/t\}}{S'\{T/t\}} :: \krecord$
      
      \` by rule
    \end{tabbing}




\item[Case:] $\inferrule[]
  {\Ga,t{:}K,\Ga' \vdash \varphi \quad
    \Ga,t{:}K,\Ga' , \varphi \vdash T' :: K' \quad
    \Ga,t{:}K,\Ga' , \neg\varphi \vdash S :: K'}
    {\Ga,t{:}K,\Ga' \vdash \ite{\varphi}{T'}{S} :: K'    }$

    \begin{tabbing}
      $\Ga,\Ga'\{T/t\} \vdash \varphi\{T/t\}$ \` by i.h.\\
      $\Ga,\Ga'\{T/t\} , \varphi\{T/t\} \vdash T'\{T/t\} :: K'\{T/t\}$ \` by i.h.\\
      $\Ga,\Ga'\{T/t\} , \neg\varphi\{T/t\} \vdash S\{T/t\} :: K'\{T/t\}$ \` by i.h.\\
      $\Ga,\Ga'\{T/t\} \vdash \ite{\varphi\{T/t\}}{T'\{T/t\}}{S\{T/t\}} :: K'\{T/t\}    $ \` by rule
    \end{tabbing}

  \item[Case:] $\inferrule[]
    {\Ga,t{:}K,\Ga' \vdash S :: \kref{t : \krecord}{ \ell \not\in t} \quad
      \Ga,t{:}K,\Ga' \vdash M : T' \quad
    \Ga,t{:}K,\Ga' \vdash N : S}
    {\Ga,t{:}K,\Ga'  \vdash \reccons{\ell}{M}{N} : \treccons{\ell}{T'}{S}}$

    \begin{tabbing}
      $\Ga,\Ga'\{T/t\} \vdash S :: \kref{t : \krecord}{ \ell \not\in t}$ \` by i.h.\\
      $\Ga,\Ga'\{T/t\} \vdash M\{T/t\} : T'\{T/t\}$ \` by i.h.\\
      $\Ga,\Ga'\{T/t\} \vdash N\{T/t\} : S\{T/t\}$ \` by i.h.\\
      $\Ga,\Ga'\{T/t\}  \vdash
      \reccons{\ell}{M\{T/t\}}{N\{T/t\}} : \treccons{\ell}{T'\{T/t\}}{S\{T/t\}}$ \` by rule
    \end{tabbing}

 \item[Case:] $    \inferrule[]
    {\Ga,t{:}K,\Ga' \vdash M : \treccons{L}{S}{U} }
    {\Ga,t{:}K,\Ga' \vdash \tmhdlabel{M} : L }$

    \begin{tabbing}
      $\Ga,\Ga'\{T/t\} \vdash M\{T/t\} : \treccons{L\{T/t\} }{S\{T/t\} }{U\{T/t\} } $ \` by i.h.\\
           $\Ga,\Ga'\{T/t\} \vdash \tmhdlabel{M\{T/t\}} : L\{T/t\}$ \` by rule
    \end{tabbing}

\item[Case:] $\inferrule[]
    {\Ga , t{:}K,\Ga' \vdash M : \treccons{L}{S}{U}  }
    {\Ga,t{:}K,\Ga'  \vdash \tmhdterm{M} : S}$
    
    \begin{tabbing}
      $\Ga,\Ga'\{T/t\}  \vdash M\{T/t\} : \treccons{L \{T/t\}}{S \{T/t\}}{U \{T/t\}}$  \` by i.h.\\
      $\Ga,\Ga'\{T/t\}  \vdash \tmhdterm{M\{T/t\}} : S \{T/t\}$ \` by rule
    \end{tabbing}

\item[Case:]
    $\inferrule[]
    {\begin{array}{c}
       \Ga,t'{:}K,\Ga' ,t{:}K_1 \vdash K_2 \quad \Ga,t'{:}K,\Ga' \vdash S :: K_1\\
    \Ga,t'{:}K,\Ga' , F{:}\Pi t{:}K_1.K_2,s{:}K_1 \vdash T' :: K_2
        \quad \m{structural}(T',F,t)
       \end{array}}
    {\Ga,t'{:}K,\Ga' \models (\tfix{F}{K_1}{K_2}{t}T')\,S \equiv
      T'\{S/t\}\{(\tfix{F}{K_1}{K_2}{t}T')/F\} :: K_2\{S/t\}}$    

    \begin{tabbing}
      $\Ga,\Ga'\{T/t'\} ,t{:}K_1\{T/t'\} \vdash K_2\{T/t'\} $ \` by i.h.\\
      $\Ga,\Ga'\{T/t'\} \vdash S\{T/t'\} :: K_1\{T/t'\}$ \` by i.h.\\
      $\Ga,\Ga'\{T/t'\} , F{:}\Pi t{:}K_1\{T/t'\}.K_2\{T/t'\},s{:}K_1\{T/t'\} \vdash T'\{T/t'\} :: K_2\{T/t'\}$ \` by i.h.\\
      $\m{structural}(T'\{T/t'\},F,t)$ \` \\
      $\Ga,\Ga'\{T/t'\} \models (\tfix{F}{K_1\{T/t'\}}{K_2\{T/t'\}}{t}T'\{T/t'\})\,S\{T/t'\} \equiv$\\
     \qquad$
     T'\{T/t'\}\{S\{T/t'\}/t\}\{(\tfix{F}{K_1\{T/t'\}}{K_2\{T/t'\}}{t}T'\{T/t'\})/F\}$\\
     \qquad\qquad$:: K_2\{T/t'\}\{S\{T/t'\}/t\}$ \` by rule
    \end{tabbing}
    
\end{description}
The remaining cases follow by similar reasoning, relying on type- and kind-preserving substitution in the language of refinements.
\end{proof}

\ctxtconv*
\begin{proof}
  Follows by weakening and substitution.

  {\bf(a)}
  \begin{tabbing}
    $\Ga , x:T' \vdash x:T'$ \` by variable rule\\
    $\Ga \vdash T' \equiv T :: K$ \` by symmetry\\
    $\Ga ,x{:}T'\vdash x:T$ \` by conversion\\
    $\Ga,x':T \vdash \mathcal{J}\{x'/x\}$ \` alpha conversion, for fresh $x'$\\
    $\Ga,x:T',x'{:}T\vdash \mathcal{J}\{x'/x\}$ \` by  weakening\\
    $\Ga,x'{:}T\vdash \mathcal{J}\{x'/x\}\{x/x'\}$ \` by substitution\\
    $\Ga , x{:}T'\vdash \mathcal{J}$ by definition
  \end{tabbing}
Statement {\bf (b)} follows by the same reasoning.
\end{proof}

\begin{restatable}[Functionality of Kinding and Refinements]{lemma}{funckinding}
  \label{lem:funckind}~
  
  Assume $\Ga \models T \equiv S :: K$, $\Ga \vdash T :: K$ and $\Ga
  \vdash S :: K$:

  \begin{enumerate}
  \item[(a)] If $\Ga , t{:}K ,\Ga' \vdash T' :: K'$ then
    $\Ga , \Ga'\{T/t\} \models T'\{T/t\} \equiv T'\{S/t\} :: K'\{T/t\}$
  \item[(b)] If $\Ga ,   t{:}K ,\Ga' \vdash K'$ then
    $\Ga , \Ga'\{T/t\} \vdash K\{T/t\} \equiv K\{S/t\}$.
  \item[(c)] If $\Ga , t{:}K,\Ga' \models \varphi$ then
    $\Ga , \Ga'\{T/t\} \models \varphi\{T/t\} \Leftrightarrow \varphi \{S/t\}$
  \end{enumerate}

\end{restatable}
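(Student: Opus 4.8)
The plan is to prove (a), (b) and (c) by a single simultaneous induction on the derivations of the three source judgments $\Ga, t{:}K, \Ga' \vdash T' :: K'$, $\Ga, t{:}K, \Ga' \vdash K'$, and $\Ga, t{:}K, \Ga' \models \varphi$, keeping the hypotheses $\Ga \models T \equiv S :: K$, $\Ga \vdash T :: K$ and $\Ga \vdash S :: K$ fixed throughout. Each structural kinding rule has a matching congruence rule for type equality (and likewise kind well-formedness has a matching kind-equality rule), so in the vast majority of inductive cases one applies the induction hypothesis to every premise and reassembles the result with the corresponding congruence rule. The well-formedness side conditions those congruence rules demand — typically that both $T'\{T/t\}$ and $T'\{S/t\}$ are well-kinded in $\Ga, \Ga'\{T/t\}$ — are discharged using the Substitution Lemma (Lemma~\ref{lem:subst}) together with Validity for Equality (Theorem~\ref{lem:valideq}); Weakening and Context Conversion (Lemma~\ref{lem:ctxtconv}) are used to move between the contexts $\Ga, \Ga'\{T/t\}$ and $\Ga, \Ga'\{S/t\}$ where that is convenient.

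The genuinely interesting cases are few. The base case is the variable rule: if $T' = t$ then $T'\{T/t\} = T$ and $T'\{S/t\} = S$, so the goal is exactly $\Ga, \Ga'\{T/t\} \models T \equiv S :: K$, which follows from the hypothesis by Weakening; if $T' = t'$ with $t' \neq t$ (or $t'$ bound in $\Ga'$) the goal is an instance of reflexivity. In the conversion case (rule \textsc{(K-sub)}) we invoke part (b) to turn $\Ga, t{:}K, \Ga' \vdash K_1 \leq K_2$ into the corresponding subkinding between the substituted kinds, using that $\leq$ internalises kind equality. For rule \textsc{(kref)} — and symmetrically for \textsc{(r-eqelim)} — we first obtain $\Ga, \Ga'\{T/t\} \models T'\{T/t\} \equiv T'\{S/t\} :: \smallkind$ from the induction hypothesis at the basic kind, and then transport the refinement $\varphi\{T/t\}$ along this equality using the \textbf{Functionality} property of entailment from Postulate~\ref{prop:refinementprops}, which lets us conclude the equality at the refined kind $\kref{s{::}\smallkind}{\varphi\{T/t\}}$. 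The property-test rule \textsc{(k-ite)} and the kind-case rule \textsc{(k-kcase)} are handled via their congruence equality rules, using part (c) to establish $\Ga, \Ga'\{T/t\} \models \varphi\{T/t\} \Leftrightarrow \varphi\{S/t\}$ (so the two conditionals may be equated); and the structural recursion rule \textsc{(k-fix)} uses that the predicate $\m{structural}(T', F, t)$ is stable under the substitution $\{T/t\}$.

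For part (c) the only essential case is the entailment rule \textsc{(entails)}, which is discharged directly by the \textbf{Functionality} property of entailment: from $\Ga, \Ga'\{T/t\} \vdash T \equiv S :: K$ (by Weakening) and well-formedness of $\varphi$ under the extended context we get $\Ga, \Ga'\{T/t\} \models \varphi\{T/t\} \Leftrightarrow \varphi\{S/t\}$; well-formedness of the two substituted refinements is itself a consequence of the Substitution Lemma and of part (a) applied to the types appearing inside $\varphi$. Part (b) is the mildest: since kinds only mention types and refinements, it reduces to parts (a) and (c) together with the congruence rules for kind formation (and the subkinding rule for refinement kinds, which appeals to (c)).

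The main obstacle is organisational rather than mathematical: setting up the simultaneous induction so that it is well founded, given that kinding, kind well-formedness and refinement well-formedness/entailment are mutually recursive, and making sure every congruence step is supplied with exactly the well-kindedness premises it needs. This bookkeeping is most delicate in the extensionality rules \textsc{(k-ext)} and \textsc{(eq-funext)} — where we must produce a common "wide" kind for the two substituted function types before applying the induction hypothesis pointwise — and in the \textsc{(kref)}/\textsc{(r-eqelim)} cases, where in addition to the Substitution Lemma and Validity for Equality we lean on the assumed \textbf{Substitution} and \textbf{Functionality} properties of the SMT entailment relation to keep the refinement side in step with the type side.
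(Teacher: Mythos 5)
Your proposal is correct and follows essentially the same route as the paper's proof: induction on the given kinding/kind well-formedness/entailment derivations, reassembling each case with the matching congruence rule for definitional equality, and discharging the refinement cases (\textsc{kref}, \textsc{k-ite}) via the Substitution Lemma, Context Conversion, and the postulated Functionality of entailment. The additional detail you supply on the variable base case and the organisation of the mutual induction is consistent with, and slightly more explicit than, what the paper records.
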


\begin{proof}
  By induction on the given kinding/kind well-formedness and entailment judgments.
  Functionality follows by substitution and the congruence rules of
  definitional equality.

  \begin{description}
  \item[Case:] $\inferrule[]
    { \Ga,t:K,\Ga' \vdash \smallkind' \quad \Ga,t:K,\Ga' , t'{:}\smallkind' \vdash \varphi}
    {\Ga,t:K,\Ga' \vdash \kref{t' : \smallkind'}{\varphi}}$

    \begin{tabbing}
      $\Ga,\Ga'\{T/t\} \vdash \smallkind'\{T/t\}\equiv \smallkind'\{S/t\}$ \` by i.h.\\
      $\Ga,\Ga'\{T/t\} , t'{:}\smallkind'\{T/t\} \vdash \varphi\{T/t\} \equiv \varphi\{S/t\}$ \` by i.h.\\
      $\Ga,\Ga'\{T/t\} \vdash \kref{t' : \smallkind'\{T/t\}}{\varphi\{T/t\}}
      \equiv\kref{t' : \smallkind'\{S/t\}}{\varphi\{S/t\}}$ \` by kind ref. equality
    \end{tabbing}

  \item[Case:] $\inferrule[]
    {\Ga,t:K,\Ga' \vdash K \quad \Ga,t:K,\Ga , s{:}K' \vdash T' :: \smallkind}
    {\Ga,t:K,\Ga \vdash \forall s {:} K'. T' :: \kpolyfun_K}$

    \begin{tabbing}
      $\Ga,\Ga'\{T/t\} \vdash K'\{T/t\} \equiv K'\{S/t\}$ \` by i.h.\\
      $\Ga,\Ga'\{T/t\} , t'{:}K'\{T/t\} \vdash T'\{T/t\} \equiv T'\{S/t\} :: \smallkind$ \` by i.h.\\
      $\Ga , \Ga'\{T/t\} \vdash \forall s:K'\{T/t\}.T'\{T/t\} \equiv 
         \forall s:K'\{S/t\}.T'\{S/t\} :: \kpolyfun_{K'\{T/t\}}$ \` by $\forall$ Eq.
       \end{tabbing}

  \item[Case:] $\inferrule[]
    {\Ga,t:K,\Ga' \vdash L :: \kname \quad
      \Ga,t:K,\Ga' \vdash T' :: \smallkind \quad
      \Ga,t:K,\Ga' \vdash S' :: \kref{t : \krecord}{L \not\in t }}
    {\Ga,t:K,\Ga' \vdash \treccons{L}{T'}{S'} ::
      \krecord }$

    \begin{tabbing}
      $\Ga , \Ga'\{T/t\} \vdash L\{T/t\} \equiv L\{S/t\} :: \kname$ \` by i.h.\\
      $\Ga,\Ga'\{T/t\} \vdash T'\{T/t\} \equiv T'\{S/t\} :: \smallkind$ \` by i.h.\\
      $\Ga,\Ga'\{T/t\} \vdash S'\{T/t\} \equiv S'\{S/t\} ::
      \kref{t : \krecord}{L\{T/t\} \not\in t }$ \` by i.h.\\
      $\Ga , \Ga'\{T/t\} \vdash \treccons{L\{T/t\}}{T'\{T/t\}}{S'\{T/t\}} \equiv
      \treccons{L\{S/t\}}{T'\{S/t\}}{S'\{S/t\}} :: \krecord$ \` by Rec Eq.\\
    \end{tabbing}

  \item[Case:]$\inferrule*[right=(k-dom)]
    {\Ga ,t:K,\Ga'\vdash T' :: \kfunction}
    {\Ga , t:K,\Ga'\vdash \dom{T'} ::\ktype}$

    \begin{tabbing}
$\Ga , \Ga'\{T/t\} \vdash T'\{T/t\} \equiv T'\{S/t\} :: \kfunction$ \`
by i.h.\\
$\Ga , \Ga'\{T/t\} \vdash \dom{T\{T/t\}}\equiv \dom{T'\{S/t\}} ::
\ktype$ \` by congruence rule 
     \end{tabbing}

 \item[Case:]
     $\inferrule[(k-reccons)]
    {\Ga t:K,\Ga'\vdash L :: \kname \quad
    \Ga t:K,\Ga'\vdash U :: \smallkind \quad \Ga t:K,\Ga'\vdash W :: \kref{s : \krecord}{L \not\in \lb{s} }}
    {\Ga t:K,\Ga'\vdash \treccons{L}{U}{W} ::
      \krecord }$
   
    \begin{tabbing}
$\Ga , \Ga'\{T/t\} \vdash L\{T/t\} \equiv L\{S/t\} :: \kname$ \` by
i.h.\\
$\Ga,\Ga'\{T/t\} \vdash U \{T/t\}  \equiv U\{S/t\} :: \smallkind
\{T/t\}$ \` by i.h.\\
$\Ga ,\Ga'\{T/t\}\vdash W\{T/t\} \equiv W\{S/t\} :: \kref{s :
  \krecord}{L\{T/t\} \not\in \lb{s} }$ \` by i.h.\\
$\Ga , \Ga'\{T/t\} \vdash \treccons{L\{T/t\}}{U\{T/t\}}{W\{T/t\}}
\equiv \treccons{L\{S/t\}}{U\{S/t\}}{W\{S/t\}} :: \krecord$ \\\` by
congruence rule
   \end{tabbing}


\item[Case:]
  $\inferrule[]
  {\Ga,t{:}K,\Ga' \vdash \varphi \quad \Ga,t{:}K,\Ga' , \varphi \vdash T' :: K' \quad
    \Ga,t{:}K,\Ga' , \neg\varphi \vdash S' :: K'}
    {\Ga,t{:}K,\Ga' \vdash \ite{\varphi}{T'}{S'} :: K'    }$

    \begin{tabbing}
      $\Ga,\Ga'\{T/t\}, \varphi\{T/t\}
      \vdash T'\{T/t\} \equiv T'\{S/t\} :: K'\{T/t\}$ \` by i.h.\\
      $\Ga,\Ga'\{T/t\} , \neg\varphi\{T/t\} \vdash S'\{T/t\} \equiv S'\{S/t\} :: K'\{T/t\}$
      \` by i.h.\\

      $\Ga , t:K , \Ga' , \varphi \models \varphi$ \` tautology\\
      $\Ga , \Ga'\{T/t\} , \varphi\{T/t\} \models \varphi\{T/t\}$ \` by substitution\\
      $\Ga , \Ga'\{T/t\} , \varphi\{S/t\} \models \varphi\{T/t\}$ \` by ctxt. conversion\\
      $\Ga , \Ga'\{T/t\} \models\varphi\{S/t\} \supset\varphi\{T/t\}$ \` by $\supset$I\\
      $\Ga , \Ga'\{S/t\} , \varphi\{S/t\} \models \varphi\{S/t\}$ \` by substitution\\
      $\Ga , \Ga'\{T/t\} , \varphi\{T/t\} \models \varphi\{S/t\}$ \` by ctxt. conversion\\
      $\Ga , \Ga'\{T/t\} \models\varphi\{T/t\} \supset\varphi\{S/t\}$ \` by $\supset$I\\
      $\Ga,\Ga'\{T/t\} \vdash \varphi\{T/t\} \equiv \varphi\{S/t\}$ \` by definition\\
      $\Ga , \Ga'\{T/t\} \models\ite{\varphi\{T/t\}}{T'\{T/t\}}{S'\{T/t\}} \equiv$\\
      \qquad\qquad\qquad $\ite{\varphi\{S/t\}}{T'\{S/t\}}{S'\{S/t\}} :: K'\{T/t\}$
      \` by rule

    \end{tabbing}

  \item[Case:] $\inferrule[]
    {\Ga,t{:}K,\Ga' \models \varphi\{T'/s\} \quad
    \Ga,t{:}K,\Ga' \vdash T' :: \smallkind' }
    {\Ga,t{:}K,\Ga' \vdash T' :: \kref{s{:}\smallkind'}{\varphi}}$

    \begin{tabbing}
      $\Ga,\Ga'\{T/t\} \models \varphi\{T'/s\}\{T/t\} \equiv \varphi\{T'/s\}\{S/t\}$ \` by i.h.\\
      $\Ga,\Ga'\{T/t\} \models T'\{T/t\}\equiv T'\{S/t\} :: \smallkind'\{T/t\}$ \` by i.h.\\

      $\Ga , \Ga'\{T/t\} \models T'\{T/t\} \equiv T'\{S/t\} :: \kref{s{:}\smallkind'\{T/t\}}{\varphi\{T/t\}}$ \` by Eq Conversion
    \end{tabbing}
    
  \end{description}

\end{proof}

\valideq*

\begin{proof}
  By induction on the given derivation.

  \begin{description}
\item[Case:]  $\inferrule[]
    { \Ga \vdash \smallkind \leq \smallkind' \quad
     \Ga , t{:}\smallkind \vdash \varphi \Rightarrow \psi}
   { \Ga \vdash \kref{t{:}\smallkind}{\varphi} \leq
     \kref{t{:}\smallkind'}{\psi}}$

   \begin{tabbing}
     $\Ga \vdash \smallkind$ and $\Ga \vdash \smallkind'$ \` by i.h.\\
     $\Ga ,  t{:}\smallkind \vdash \varphi$ \` by inversion\\
    $\Ga , t{:}\smallkind \vdash \psi$ \` by inversion\\
    $\Ga , t{:}\smallkind' \vdash \psi$ \` by context conversion\\
     $\Ga \vdash \kref{t{:}\smallkind}{\varphi}$ \` by refinement kind w.f.\\
     $\Ga \vdash \kref{t{:}\smallkind'}{\psi}$ \` by refinement kind w.f.
   \end{tabbing}

   \item[Case:]$\inferrule*[right=(r-eqelim)]
   {\Ga \vdash T :: \kref{s{::}\smallkind}{s = S} \quad \Ga \vdash S ::
    \smallkind}
    {\Ga  \vdash T \equiv S :: \smallkind}$

    \begin{tabbing}
      $\Ga \vdash S :: \smallkind$ \` by inversion\\
      $\Ga \vdash T :: \kref{s{::}\smallkind}{s = S}$ \` by
      inversion\\
      $\Ga \vdash T :: \smallkind$ \` by subsumption\\
      $\Ga \vdash \smallkind$ \` by kind w.f.
      \end{tabbing}

 \item[Case:] $\inferrule[]
  {\Ga \vdash L :: \kname \quad
  \Ga \vdash T ::\smallkind \quad
  \Ga \vdash S :: \kref{t {::} \krecord}{L\not\in \lb{t}}}
  {\Ga \vdash \hdtype{\treccons{L}{T}{S}} \equiv T :: \ktype}$

  \begin{tabbing}
    $\Ga \vdash T :: \smallkind$  \` by inversion\\
    $\Ga \vdash T :: \ktype$ \` by subsumption\\
    $\Ga \vdash \treccons{L}{T}{S} :: \krecord$ \` by kinding\\
    $\Ga \vdash \treccons{L}{T}{S} :: \kref{t {::}
      \krecord}{\neg\m{empty}(t)}$ \` by subsumption\\
    $\Ga \vdash \hdtype{\treccons{L}{T}{S}} :: \ktype$ \` by kinding\\
    $\Ga \vdash \ktype$ \` by kind w.f.
   \end{tabbing}
\item[Case:]
  $\inferrule[]
  {\Ga \vdash T :: \smallkind}
  {\Ga \models \colof{\tcol{T}} \equiv T :: \ktype }$

  \begin{tabbing}
    $\Ga \vdash T :: \smallkind$ \` by inversion\\
    $\Ga \vdash T :: \ktype$ \` by subkinding\\
    $\Ga \vdash \tcol{T} :: \kcollection$ \` by kinding\\
    $\Ga \vdash \colof{\tcol{T}} :: \ktype$ \` by kinding\\
        $\Ga \vdash \ktype$ \` by kind w.f.
  \end{tabbing}

  
Remaining cases follow by a similar reasoning.

  \end{description}
\end{proof}

\kindpres*
\begin{proof}
Immediate from equality validity since $T \rightarrow S$ implies
$T\equiv S$.
\end{proof}

\begin{restatable}[Functionality of Equality]{lemma}{funceq}
\label{lem:funceq}  ~
  Assume $\Ga \models T_0 \equiv S_0 :: K$:
  \begin{enumerate}
  \item[(a)] If \,$\Ga , t{:}K \models T \equiv S :: K'$ then
    $\Ga \models T\{T_0/t\} \equiv S\{S_0/t\} :: K'\{T_0/t\}$.
  \item[(b)] If $\Ga , t{:}K \vdash K_1 \equiv K_2$ then
    $\Ga \vdash K_1\{T_0/t\} \equiv K_2\{S_0/t\}$.
  \item[(c)] If $\Ga , t{:}K \vdash \varphi \Leftrightarrow \psi$ then
    $\Ga \models\varphi\{T_0/t\} \Leftrightarrow \psi\{S_0/t\}$.
  \end{enumerate}

\end{restatable}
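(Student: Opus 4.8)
The plan is to derive all three parts directly --- without a fresh induction on derivations --- from substitution (Lemma~\ref{lem:subst}), validity for equality (Theorem~\ref{lem:valideq}), the already-established functionality of kinding and refinements (Lemma~\ref{lem:funckind}), and the entailment properties of Postulate~\ref{prop:refinementprops}. The single idea common to all three cases is a ``sandwich'': to relate $-\{T_0/t\}$ with $-\{S_0/t\}$, first substitute $T_0$ \emph{uniformly on both sides} of the given equality, then bridge the remaining gap on the right-hand side by functionality of the right component (namely $S$, resp.\ $K_2$, resp.\ $\psi$) under the hypothesis $\Ga \models T_0 \equiv S_0 :: K$, and finally compose with transitivity of the appropriate equality judgment.

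Concretely, for part (a): by Theorem~\ref{lem:valideq}(b) the hypothesis $\Ga \models T_0 \equiv S_0 :: K$ gives $\Ga \vdash T_0 :: K$ and $\Ga \vdash S_0 :: K$, and applied to $\Ga, t{:}K \models T \equiv S :: K'$ it gives $\Ga, t{:}K \vdash S :: K'$. Instantiating $\mathcal{J}$ with the equality judgment, Lemma~\ref{lem:subst}(a) yields $\Ga \models T\{T_0/t\} \equiv S\{T_0/t\} :: K'\{T_0/t\}$. Lemma~\ref{lem:funckind}(a), applied to $\Ga, t{:}K \vdash S :: K'$ with the equality $\Ga \models T_0 \equiv S_0 :: K$, yields $\Ga \models S\{T_0/t\} \equiv S\{S_0/t\} :: K'\{T_0/t\}$. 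Transitivity of type equality then gives $\Ga \models T\{T_0/t\} \equiv S\{S_0/t\} :: K'\{T_0/t\}$, as required.

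Parts (b) and (c) follow the same recipe with the matching components. For (b): substitute $T_0$ into $\Ga, t{:}K \vdash K_1 \equiv K_2$ via Lemma~\ref{lem:subst}(a) to get $\Ga \vdash K_1\{T_0/t\} \equiv K_2\{T_0/t\}$; extract $\Ga, t{:}K \vdash K_2$ from validity (Theorem~\ref{lem:valideq}(a), using the definition of kind equality as mutual subkinding); apply Lemma~\ref{lem:funckind}(b) to obtain $\Ga \vdash K_2\{T_0/t\} \equiv K_2\{S_0/t\}$; and close by transitivity of kind equality. For (c): substitute $T_0$ into the given equivalence to get $\Ga \models \varphi\{T_0/t\} \Leftrightarrow \psi\{T_0/t\}$; use Theorem~\ref{lem:valideq}(c) to get $\Ga, t{:}K \vdash \psi$, so that $\psi$ is well-formed in the larger context; apply the Functionality property of entailment from Postulate~\ref{prop:refinementprops} --- crucially, the version stated for well-formed (rather than valid) refinements, which is exactly what is available --- to obtain $\Ga \models \psi\{T_0/t\} \Leftrightarrow \psi\{S_0/t\}$; and chain the two biconditionals.

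The main obstacle is bookkeeping rather than mathematics. One must check that the kind annotation produced by the substitution step ($K'\{T_0/t\}$) is \emph{syntactically identical} to the one produced by functionality of kinding, so that transitivity applies with no intervening conversion; this is precisely why $T_0$, and not $S_0$, is substituted on the left throughout. One must also verify that Lemma~\ref{lem:subst}(a) is legitimately invoked on equality and kind-equality judgments --- fine, since $\mathcal{J}$ ranges over all four judgment forms --- and, for part (c), that the \emph{well-formed} version of refinement functionality is used rather than the one premised on a holding refinement, since in $\Ga, t{:}K \models \varphi \Leftrightarrow \psi$ neither $\varphi$ nor $\psi$ need hold on its own. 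No use of context conversion (Lemma~\ref{lem:ctxtconv}) is needed here, it having already been absorbed into Lemma~\ref{lem:funckind}.
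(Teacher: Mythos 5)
Your proposal is correct and follows essentially the same route as the paper's own proof: each part is obtained by first substituting $T_0$ uniformly via Lemma~\ref{lem:subst}, then bridging $S\{T_0/t\}$ to $S\{S_0/t\}$ (resp.\ $K_2$, $\psi$) via functionality of kinding/refinements with the premises supplied by validity for equality (Theorem~\ref{lem:valideq}), and closing by transitivity. Your explicit remark that part (c) must invoke the well-formedness-premised Functionality property of Postulate~\ref{prop:refinementprops} rather than Lemma~\ref{lem:funckind}(c) is a correct and slightly more careful reading than the paper's terse ``by functionality''.
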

\begin{proof}

  \begin{description}
  \item[(a)]
    ~
    \begin{tabbing}
  $\Ga , t{:}K\models T \equiv S :: K'$ \` assumption\\
  $\Ga \vdash T_0\equiv S_0 :: K$ \` assumption\\
  $\Ga \vdash T_0 :: K$ and $\Ga \vdash S_0 :: K$ \` by eq. validity\\
  $\Ga , t{:}K \vdash T :: K'$ and $\Ga , t{:}K \vdash S :: K'$ \` by eq. validity\\
  $\Ga \vdash T\{T_0/t\} \equiv S\{T_0/t\} :: K'\{T_0/t\}$ \` by substitution\\
  $\Ga \vdash S\{T_0/t\} \equiv S\{S_0/t\} :: K'\{T_0/t\}$ \` by functionality\\
  $\Ga \vdash T\{T_0/t\}\equiv S\{S_0/t\} :: K'\{T_0/t\}$ \` by transitivity
\end{tabbing}

\item[(b)]
  ~
  \begin{tabbing}
  $\Ga \vdash T_0\equiv S_0 :: K$ \` assumption\\
  $\Ga , t:K \vdash K_1 \equiv K_2$ \` assumption\\
  $\Ga \vdash T_0 :: K$ and $\Ga \vdash S_0 :: K$ \` by eq. validity\\
  $\Ga ,t:K \vdash K_1$ and $\Ga , t:K \vdash K_2$ \` by eq. validity\\
  $\Ga \vdash K_1\{T_0/t\} \equiv K_2\{T_0/t\}$ \` by substitution\\
  $\Ga \vdash K_2\{T_0/t\} \equiv K_2\{S_0/t\}$ \` by functionality\\
  $\Ga \vdash K_1\{T_0/t\} \equiv K_2\{S_0/t\}$ \` by transitivity
  \end{tabbing}

\item[(c)]
  ~
  \begin{tabbing}
$\Ga \vdash T_0\equiv S_0 :: K$ \` assumption\\
$\Ga , t:K \vdash \varphi \equiv \psi$ \` assumption\\
$\Ga \vdash T_0 :: K$ and $\Ga \vdash S_0 :: K$ \` by eq. validity\\
$\Ga , t:K \vdash \varphi$ and $\Ga , t:K \vdash \psi$ \` by eq. validity\\
$\Ga \vdash \varphi\{T_0/t\} \equiv\psi\{T_0/t\}$ \` by substitution\\
$\Ga \vdash \psi\{T_0/t\} \equiv \psi\{S_0/t\}$ \` by functionality\\
$\Ga \vdash \varphi\{T_0/t\} \equiv \psi\{S_0/t\}$ \` by transitivity
  \end{tabbing}
  
\end{description}
\end{proof}

\begin{restatable}[Validity]{theorem}{valid}
\label{lem:valid}~
  \begin{enumerate}
  \item[(a)] If $\Ga \vdash K$ then $\Ga \vdash$
  \item[(b)] If $\Ga \vdash T :: K$ then $\Ga \vdash K$
  \item[(c)] If $\Ga \vdash M : T$ then $\Ga \vdash T :: \ktype$.
  \end{enumerate}
\end{restatable}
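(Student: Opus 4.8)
The plan is to establish the three clauses by a single simultaneous induction on the derivations of $\Ga \vdash K$, $\Ga \vdash T :: K$ and $\Ga \vdash M : T$, carrying along as a fourth clause the auxiliary statement that $\Ga \vdash \varphi$ implies $\Ga \vdash$. This auxiliary claim is itself immediate: every refinement-well-formedness derivation decomposes through the propositional connectives down to a leaf of the form $\Ga \vdash p(T_1,\dots,T_n)$ with each $\Ga \vdash T_i :: \smallkind$, and the induction hypothesis for clause (b) applied to such a premise gives $\Ga \vdash \smallkind$, hence $\Ga \vdash$ by clause (a). Beyond the induction hypotheses, the proof will lean on the Substitution Lemma (Lemma~\ref{lem:subst}), the already-established Validity for Equality (Theorem~\ref{lem:valideq}), Context Conversion (Lemma~\ref{lem:ctxtconv}), inversion on context- and kind-well-formedness, and the routine admissible fact that an unused term variable can be strengthened out of a kinding or kind-well-formedness derivation.

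Clause (a) is the easiest: each kind-formation rule either has $\Ga \vdash$ as an explicit premise (the base kinds) or has premises $\Ga \vdash K_0$, $\Ga, t{:}K_0 \vdash K_1$ ($\Pi$), $\Ga \vdash K_0$ ($\kpolyfun$), or $\Ga \vdash \smallkind$, $\Ga, t{:}\smallkind \vdash \varphi$ (refinement), and in each case the induction hypothesis for (a), or the auxiliary claim followed by inversion on $\Ga, t{:}\smallkind \vdash$, yields $\Ga \vdash$. For clauses (b) and (c), the introduction-form rules are discharged uniformly by applying the induction hypothesis to the premises and re-applying the matching formation rule: for $\lambda t{::}K.T$ one rebuilds $\Ga \vdash \Pi t{:}K.K'$; for $\treccons{L}{T}{S}$, $\tref{T}$, $\tcol{T}$, $T\rightarrow S$, $\forall t{::}K.T$ one obtains the corresponding base kind and coerces to $\ktype$ by subkinding where needed; and the destructors ($\hdtype{T}$, $\hdlabel{T}$, $\tl{T}$, $\dom{T}$, $\codom{T}$, $\colof{T}$, $\refof{T}$) together with the kind-case and property-test formation rules simply re-apply their own rules, using the auxiliary claim for well-formedness of any tested refinement.

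The cases that require a little more care are the non-structural and elimination rules. For subsumption (K-sub), the induction hypothesis gives $\Ga \vdash K$, hence $\Ga \vdash$ by (a), and then Validity for Equality~(a) applied to $\Ga \vdash K \leq K'$ yields $\Ga \vdash K'$; the term-level conversion rule (conv) is analogous, using Validity for Equality~(b) on $\Ga \models U \equiv T :: \smallkind$ to get $\Ga \vdash T :: \smallkind$ and then $\Ga \vdash T :: \ktype$. For type application (k-app), inverting the induction hypothesis on $\Ga \vdash T :: \Pi t{:}K.K'$ gives $\Ga, t{:}K \vdash K'$, and Substitution with $\Ga \vdash S :: K$ delivers $\Ga \vdash K'\{S/t\}$; similarly, $\forall$E uses Substitution after extracting the body kind from a generation lemma. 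For the extensionality rule (k-ext) one rebuilds $\Ga \vdash \Pi t{:}K_1.K_2$ from $\Ga \vdash K_1$ (inverting the first premise's hypothesis) and $\Ga, t{:}K_1 \vdash K_2$ (the second premise's hypothesis). The term elimination rules ($\rightarrow$E and the record/collection/reference destructors) follow the same recipe: the induction hypothesis gives a kinding judgment for the eliminated type, whence a generation lemma reads off the kind of the relevant component. The binders $\lambda x{:}T.M$ and $\typeabs{t{::}K}M$ need the extra binding in $\Ga, x{:}T \vdash U :: \ktype$ (resp.\ $\Ga, t{:}K \vdash T :: \ktype$) to be removed --- term-variable strengthening in the first case, re-applying (k-$\forall$) in the second --- while (fix) uses that well-formedness of $\Ga, F{:}T$ already forces $\Ga \vdash T :: \ktype$, and (prop-ite) combines the two branch hypotheses through rule (k-ite).

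I expect the main obstacle to be the generation (syntactic inversion) lemma for kinding of type constructors --- the statement that, for instance, $\Ga \vdash T \rightarrow S :: \ktype$ implies $\Ga \vdash S :: \ktype$, and analogously for the other constructors. This is sound because the kind of a type is determined by the type's head constructor up to the two non-syntax-directed rules (K-sub) and (k-ext), neither of which changes the type; it can be proved by a short auxiliary induction or folded into the main one, but it must be set up before the elimination cases go through. The only other delicate point is sequencing the mutual recursion so that clauses (a)--(c), the auxiliary refinement claim, and the appeals to Validity for Equality do not become circular; once that ordering is fixed, the remaining work is routine rule-by-rule bookkeeping.
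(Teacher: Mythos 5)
The paper's own proof of this theorem is literally the single line ``Straightforward induction on the given derivation,'' and your proposal is exactly that induction, spelled out case by case with the same supporting machinery the paper has already established (Substitution, Context Conversion, Validity for Equality, and inversion). Your identification of the only non-routine cases --- subsumption/conversion via Validity for Equality, application via Substitution, and the need for a syntactic generation lemma before the elimination cases --- is sound and matches what the paper's development implicitly relies on.
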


\begin{proof}
Straightforward induction on the given derivation.
\end{proof}

\begin{lemma}[Injectivity]

  If $\Ga \vdash \Pi t:K_1.K_2 \equiv \Pi t:K_1'.K_2'$ then
   $\Ga \vdash K_1 \equiv K_1'$ and $\Ga , t:K_1 \vdash K_2 \equiv K_2'$.

 \end{lemma}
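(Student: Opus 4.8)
The plan is to reduce everything to a generation (inversion) lemma for the subkinding judgment. Since rule (sub-eq) is the only rule that concludes a judgment of the form $\Ga \vdash K \equiv K'$, inverting the hypothesis $\Ga \vdash \Pi t{:}K_1.K_2 \equiv \Pi t{:}K_1'.K_2'$ immediately gives $\Ga \vdash \Pi t{:}K_1.K_2 \leq \Pi t{:}K_1'.K_2'$ and $\Ga \vdash \Pi t{:}K_1'.K_2' \leq \Pi t{:}K_1.K_2$. So the real content is the claim: if $\Ga \vdash \Pi t{:}J_1.J_2 \leq K$ then $K$ has the form $\Pi t{:}J_1'.J_2'$ with $\Ga \vdash J_1' \leq J_1$ and $\Ga , t{:}J_1' \vdash J_2 \leq J_2'$.

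I would prove this generation lemma by induction on the subkinding derivation. The rules (sub-type), (sub-refkind) and (sub-ref) all have a non-$\Pi$ kind on the left, hence are vacuously handled; reflexivity of subkinding (primitive or admissible) yields the conclusion via reflexivity of $\leq$; the $\Pi$-subkinding rule is exactly the statement; and the transitivity step $\Pi t{:}J_1.J_2 \leq K' \leq K$ is dealt with by applying the induction hypothesis twice — the first application forces $K' = \Pi t{:}J_1'.J_2'$ and the second $K = \Pi t{:}J_1''.J_2''$ — and then stitching the pieces back together, using transitivity of $\leq$ on the domains and, for the codomains, Context Conversion (Lemma~\ref{lem:ctxtconv}(b)) to transport $\Ga , t{:}J_1' \vdash J_2 \leq J_2'$ into the context $\Ga , t{:}J_1''$ along $\Ga \vdash J_1'' \leq J_1'$. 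The well-formedness premises that Context Conversion requires ($\Ga , t{:}J_1' \vdash$ and $\Ga \vdash J_1''$) are obtained from Validity (Theorems~\ref{lem:valideq} and~\ref{lem:valid}), so there is no circularity.

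With the generation lemma in hand, the statement follows: from $\Ga \vdash \Pi t{:}K_1.K_2 \leq \Pi t{:}K_1'.K_2'$ we get $\Ga \vdash K_1' \leq K_1$ and $\Ga , t{:}K_1' \vdash K_2 \leq K_2'$, and from the reverse subkinding $\Ga \vdash K_1 \leq K_1'$ and $\Ga , t{:}K_1 \vdash K_2' \leq K_2$. The two domain subkindings give $\Ga \vdash K_1 \equiv K_1'$ by (sub-eq). For the codomains, Context Conversion applied to $\Ga , t{:}K_1' \vdash K_2 \leq K_2'$ along $\Ga \vdash K_1' \leq K_1$ yields $\Ga , t{:}K_1 \vdash K_2 \leq K_2'$, which combined with $\Ga , t{:}K_1 \vdash K_2' \leq K_2$ gives $\Ga , t{:}K_1 \vdash K_2 \equiv K_2'$ by (sub-eq).

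The main obstacle is the transitivity case of the generation lemma: making the induction go through requires that codomain comparisons be moved across definitionally related domain kinds, which is exactly what Context Conversion provides, and one has to be careful to discharge its well-formedness premises from the validity theorems rather than from the statement being proved. Everything else is a routine traversal of the subkinding rules.
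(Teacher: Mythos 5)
Your argument is correct and is essentially the paper's (unstated) proof made explicit: since $\Ga \vdash K \equiv K'$ is introduced only by \textsc{(sub-eq)}, the paper's ``induction on the given kind equality derivation'' amounts to inverting into two subkinding judgments and analysing the $\Pi$-subkinding rule, which is exactly your generation lemma (the paper even records that lemma separately as ``Injectivity via Subkinding''). One small note: in the transitivity case you invoke Context Conversion in the narrowing direction (transporting the codomain comparison to the context with the \emph{smaller} domain kind), which is the sound reading even though Lemma~\ref{lem:ctxtconv}(b) as literally printed orients its subkinding premise the other way; at the final stitching step the point is moot anyway, since by then the two domain kinds are inter-derivable and hence equivalent.
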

 \begin{proof}
Straightforward induction on the given kind equality derivation.
 \end{proof}

 \begin{lemma}[Injectivity via Subkinding]
If $\Ga \vdash \Pi t{:}K_1.K_2 \leq K$ then
     $\Ga \vdash K \equiv \Pi t:K_1'.K_2'$ with
     $\Ga \vdash K_1 \equiv K_1'$ and $\Ga , t:K_1 \vdash K_2 \equiv K_2'$.

 \end{lemma}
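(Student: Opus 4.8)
The plan is to proceed by induction on the derivation of $\Ga \vdash \Pi t{:}K_1.K_2 \leq K$, exploiting the fact that only very few subkinding rules can carry a $\Pi$-kind on the left of the relation. First I would dispatch the impossible cases: the conclusions of the rules for base-kind subsumption, for dropping a refinement ($\kref{t{::}\smallkind}{\varphi} \leq \smallkind$), and for subkinding between two refinement kinds all have a left-hand side that is syntactically a base or refinement kind, never a $\Pi$-kind, so none of them can be the last rule applied. The reflexivity case (available either as a primitive or as an admissible rule established earlier) gives $K = \Pi t{:}K_1.K_2$ outright, so one takes $K_i' = K_i$ and closes with reflexivity of kind equality. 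If subkinding is closed under transitivity, that case chains two applications: the induction hypothesis on the first yields an intermediate $\Pi$-kind $\Pi t{:}K_1''.K_2''$ with the required equalities, the induction hypothesis together with the preceding Injectivity lemma for $\equiv$ handles the second, and transitivity and congruence of $\equiv$ assemble the result.

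The substantive case is the congruence rule for $\Pi$-kinds, where $K$ is literally $\Pi t{:}K_1'.K_2'$ and the premises give $\Ga \vdash K_1' \leq K_1$ and $\Ga , t{:}K_1' \vdash K_2 \leq K_2'$. Here the key moves are: (i) use Validity for Equality (Theorem~\ref{lem:valideq}) to recover well-formedness of all the component kinds, so that context conversion is applicable; (ii) obtain $\Ga \vdash K_1 \equiv K_1'$, which is where this lemma works in concert with the Injectivity lemma — the domain direction is pinned down by applying the one-directional reasoning in both directions, the reverse subkinding on the domain being supplied either by the context in which the lemma is used (namely, inside the proof of Injectivity for $\equiv$) or by a mutual induction with it; (iii) transport $\Ga , t{:}K_1' \vdash K_2 \leq K_2'$ along $\Ga \vdash K_1 \equiv K_1'$ via Context Conversion (Lemma~\ref{lem:ctxtconv}) to relocate it into the context $\Ga , t{:}K_1$, and symmetrically for the reverse direction, so that the two one-directional subkindings combine into $\Ga , t{:}K_1 \vdash K_2 \equiv K_2'$.

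The main obstacle I anticipate is reconciling the variance baked into the $\Pi$-subkinding rule (contravariant in the domain, covariant in the codomain) with the symmetric conclusion phrased in terms of kind \emph{equality}: a single subkinding derivation delivers only one-directional facts, and the codomain relation is moreover stated in the "wrong" domain context. The resolution rests on (a) threading this lemma together with the preceding Injectivity lemma so that both directions of subkinding on the components are in hand, and (b) the interplay of Validity and Context Conversion needed to shift the codomain comparison into the context $t{:}K_1$ once the domains are known equal. Everything else — the shape analysis ruling out the non-$\Pi$ rules, and the congruence and transitivity bookkeeping — is routine.
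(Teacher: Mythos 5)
The paper itself gives no proof of this lemma --- it is stated in the appendix with an empty proof body --- so there is nothing to compare your argument against; what follows assesses your proposal on its own terms. Your overall strategy (induction on the subkinding derivation, discarding the rules whose left-hand side is syntactically a basic or refinement kind, treating reflexivity and transitivity as admissible bookkeeping, and isolating the $\Pi$-congruence rule as the one substantive case) is the only reasonable route and is surely what the authors intend.

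The gap is exactly where you flagged it, and neither of your proposed escapes closes it. In the $\Pi$-congruence case the premises deliver only $\Ga \vdash K_1' \leq K_1$ and $\Ga , t{:}K_1' \vdash K_2 \leq K_2'$, one direction each. Borrowing the reverse direction ``from the context in which the lemma is used'' is not a proof of the lemma, which must hold for every derivation of its hypothesis. A mutual induction with the Injectivity lemma for $\equiv$ cannot supply the missing directions either: since kind equality is \emph{defined} as mutual subkinding, that lemma consumes both directions of $\leq$ and returns component equalities --- it cannot manufacture a direction you do not already have. Indeed, with the rules as printed the statement appears to fail outright: $\Ga \vdash \Pi t{::}\ktype.\ktype \leq \Pi t{::}\krecord.\ktype$ is derivable from $\Ga \vdash \krecord \leq \ktype$ and reflexivity on the codomain, yet the conclusion would force $\Ga \vdash \ktype \equiv \krecord$. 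So the $\Pi$-case cannot be completed as stated; one must either weaken the conclusion to the directional facts the rule actually yields ($K \equiv \Pi t{:}K_1'.K_2'$ with $\Ga \vdash K_1' \leq K_1$ and $\Ga , t{:}K_1' \vdash K_2 \leq K_2'$ --- essentially what the appendix's Inversion proof steps through before its own unexplained upgrade to equality), or strengthen the $\Pi$-subkinding rule to demand domain equality. Your diagnosis of the obstacle is accurate; the proposed resolution is not a proof.
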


\begin{restatable}[Inversion]{lemma}{invers}
  \label{lem:invers}~
  \begin{enumerate}
  \item[(a)] If $\Ga \vdash \lambda t{::}K.T :: K'$ then there is
    $K_1$ and $K_2$ such that $\Ga \vdash K' \equiv \Pi t{:}K_1.K_2$,
    $\Ga \vdash K \equiv K_1$ and $\Ga , t{:}K_1 \vdash T :: K_2$.
  \item[(b)] If $\Ga \vdash T\,S :: K$ then
    $\Ga \vdash T :: \Pi t{:}K_0.K_1$, $\Ga \vdash S :: K_0$ and
    $\Ga \vdash K \equiv K_1\{S/t\}$.
  \item[(c)] If $\Ga \vdash \lambda x{:}T.M : T'$ then there is
    $T_1$ and $T_2$ such that $\Ga \models T' \equiv T_1\rightarrow T_2 :: \kfunction$,
    $\Ga \models T \equiv T_1 :: \ktype$ and $\Ga , x{:}T_1 \vdash M : T_2$.

  \item[(d)] If  $\Ga \vdash \treccons{L}{T}{S} :: K$
    then $\Ga \vdash L :: \kname$, $\Ga \vdash T :: \ktype$,
    $\Ga \vdash S :: \kref{t{::}\krecord}{L\not\in t}$ and $\Ga \vdash K \equiv \krecord$.

  \item[(e)] If $\Ga \vdash \reccons{L}{M}{N} : T$ then there is $L',T_1,T_2$ such that $\Ga \models L \equiv L' :: \kname$,
    $\Ga \vdash \treccons{L'}{T_1}{T_2} :: \krecord$,
    $\Ga \models T \equiv \treccons{L'}{T_1}{T_2} :: \krecord$,
    $\Ga \vdash M : T_1$ and $\Ga\vdash N : T_2$.

     \item[(f)] If $\Ga \vdash T :: \kref{t{::}K}{\varphi}$ then
    $\Ga \models \varphi\{T/t\}$, $\Ga \vdash T :: K$ and
    $\Ga , t{:}K \vdash \varphi$.


  \item[(h)] If $\Ga \vdash \ite{\varphi}{M}{N} : T$ then
    $\Ga \models T \equiv \ite{\varphi}{T_1}{T_2} :: K$ with
    $\Ga , \varphi \vdash M : T_1$ and $\Ga , \neg\varphi \vdash N : T_2$.

  \item[(i)] If $\Ga \vdash \ite{\varphi}{T}{S} :: K$ then
    $\Ga \vdash \varphi$, $\Ga,\varphi \vdash T :: K$ and $\Ga,\neg\varphi \vdash S :: K$.

  \item[(j)] If $\Ga \vdash T \rightarrow S :: K$ then $\Ga \vdash K \equiv \kfunction$, $\Ga \vdash T :: \smallkind$ and $\Ga \vdash S :: \smallkind'$, for some
    $\smallkind,\smallkind'$.

  \item[(k)] If $\Ga \vdash M :: N :: T$ then
    $\Ga \models T \equiv \tcol{S} :: \kcollection$,
    $\Ga \vdash N : \tcol{S}$ and $\Ga \vdash M : S$, for some $S$.

  \item[(l)] If $\Ga \vdash \tcol{T'} :: K$ then
    $\Ga \vdash K \equiv \kcollection$ and $\Ga \vdash T' :: \smallkind$, for some $\smallkind$.

  \item[(m)] If $\Ga \vdash \tkindofP{T'}{K}{M}{N} : T$
    then $\Ga \vdash T' :: \smallkind$, $\Ga \vdash K$, $\Ga , t:K\vdash M : S$ and
    $\Ga \vdash N : S$, with $\Ga \vdash T \equiv S :: \smallkind'$ for some
    $\smallkind, \smallkind', S$.

  \item[(n)] If $\Ga \vdash \tkindofP{T'}{K}{S}{S'} :: K'$
    then $\Ga \vdash T' :: \smallkind$, $\Ga \vdash K$, $\Ga ,t{:}K \vdash S :: K''$, $\Ga\ vdash S' :: K''$ and $\Ga \vdash K' \equiv K''$, for some $\smallkind, K''$.

  \item[(o)] If $\Ga \vdash \vrec{F}{T}{M} : T$ then
    $\Ga ,F:T \vdash M : T$ and $\m{structural}(F,M)$.

  \item[(p)] If $\Ga \vdash  \tfix{F}{K_1}{K_2}{t}T' :: K$
    then $\Ga ,  F{:}\Pi t{:}K_1.K_2, t{:}K_1 \vdash T' :: K_2$,
    $\m{structural}(T',F,t)$ and $\Ga \vdash K \equiv \Pi t{:}K_1.K_2$.

  \item[(q)] If $\Ga \vdash \tmhdlabel{M} : T$ then
    $\Ga \models T \equiv L :: \kname$,
    $\Ga \vdash M : \treccons{L}{S}{U}$.

  \item[(r)] If $\Ga \vdash \tmhdterm{M} : T$ then
    $\Ga \models T \equiv S :: \ktype$ and $\Ga \vdash M : \treccons{L}{S}{U}$

  \item[(s)] If $\Ga \vdash \tl{M} : T$ then
   $\Ga \models T \equiv U :: \krecord$ and $\Ga \vdash M :
   \treccons{L}{S}{U}$.
   
  \item[(t)] If $\Ga \vdash \tmcolhd{M} : T$ then
    $\Ga \vdash M : \tcol{T}$
  \item[(u)] If $\Ga \vdash \tmcoltl{M} : T$ then
   $\Ga \vdash M : \tcol{T}$
  \item[(v)] If $\Ga \vdash \mkref{M} : T$ then
    $\Ga \models T \equiv \tref{T'}$ and $\Ga \vdash M : T'$
  \item[(w)] If $\Ga \vdash \deref{M} : T$ then
    $\Ga \models T \equiv \tref{T'} ::\kreference$,
    $\Ga \vdash M : \tref{T'}$, for some $T'$.
  \item[(x)] If $\Ga \vdash M := N : T$ then
    $\Ga \models T \equiv \one :: \ktype$,
    $\Ga \vdash M : \tref{T'}$, $\Ga \vdash N : T'$, for some $T'$.

  \item[(y)] If $\Ga \vdash M\,N : T$ then
    $\Ga \vdash M : T_1\rightarrow T_2$, $\Ga \vdash N : T_1$,
    $\Ga \vdash T \equiv T_2 :: \ktype$
  
  \item[(z)] If $\Ga \vdash M[T] : S$ then
    $\Ga \vdash M : \forall t :: K.T'$, $\Ga \vdash T :: K$,
    $\Ga \vdash T' :: \kref{f{::}\kpolyfun_K}{\inst{f}{\,T} \equiv U :: \smallkind}$
    and $\Ga \vdash S \equiv T'\{T/t\}:: \ktype$.
    
  \end{enumerate}

\end{restatable}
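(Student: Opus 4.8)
The plan is to prove all of the clauses (a)--(z) by a single simultaneous induction on the given typing or kinding derivation. The essential difficulty, and the reason inversion is non-trivial here, is that neither relation is syntax-directed: because of the conversion rule \textsc{(conv)} on typing judgments and the subsumption rule \textsc{(K-sub)} on kinding judgments, a derivation of, e.g., $\Ga \vdash \lambda t{::}K.T :: K'$ need not end with the introduction rule \textsc{(k-fun)}, and likewise for every other construct. So for each syntactic form I would split the final rule of the derivation into two situations.

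First, when the derivation ends with the syntax-directed rule for the construct under consideration (\textsc{(k-fun)} for (a), \textsc{(k-app)} for (b), \textsc{($\rightarrow$I)} for (c), \textsc{(k-reccons)} for (d), \textsc{($\trecnil I_2$)} for (e), \textsc{(kref)} for (f), \textsc{(k-ite)}/\textsc{(prop-ite)} for (h),(i), \textsc{(k-fun)} for (j), \textsc{(cons)} for (k), and so on), the conclusion is essentially immediate: the premises of that rule supply the required subderivations verbatim, and the $\equiv$/$\leq$ component of the statement is discharged by reflexivity, using Validity (Theorem~\ref{lem:valid}, Theorem~\ref{lem:valideq}) to produce the well-formedness side conditions that the reflexivity instance needs. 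Second, when the derivation ends with \textsc{(conv)} or \textsc{(K-sub)}, I would apply the induction hypothesis to the premise and then propagate the extra information: from $\Ga \vdash_S M : U$ and $\Ga \models U \equiv T :: \smallkind$ the IH gives the decomposition for $M$ against $U$, and transitivity of type equality together with Validity for Equality (Theorem~\ref{lem:valideq}) upgrades it to a decomposition against $T$; symmetrically, a trailing \textsc{(K-sub)} is absorbed using transitivity of subkinding and \textsc{(sub-eq)}.

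For the clauses whose stated classifier is a compound object --- a $\Pi$-kind or an arrow type, i.e. (a), (b), (c), (j), (p), (y), (z), and the collection/reference clauses (k),(l),(t)--(x) --- the conversion case additionally requires decomposing an equality (or subkinding) between compound classifiers into equalities between their components; this is exactly the role of the already-established Injectivity lemma and the Injectivity-via-Subkinding lemma. When the classifier carries a substitution --- the kind $K_1\{S/t\}$ in (b), the type $T'\{T/t\}$ in (z) --- I would additionally invoke Functionality of Equality (Lemma~\ref{lem:funceq}) and Substitution (Lemma~\ref{lem:subst}) to keep the decomposition consistent with the substitution. The refinement clause (f) and the record clauses (d),(e),(q)--(s) also need to move a refinement predicate across a subkinding step, which is handled by the rules \textsc{(sub-refkind)} and \textsc{(sub-ref)} together with \textsc{(entails)} and the assumed entailment properties of Postulate~\ref{prop:refinementprops} (weakening and substitution, in particular), possibly combined with Context Conversion (Lemma~\ref{lem:ctxtconv}).

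The main obstacle will be precisely these conversion/subsumption cases: making them go through cleanly depends on the fact that equalities and subkindings between structured classifiers can always be inverted into component-wise equalities, and that this inversion commutes with substitution. This is why the Injectivity lemmas, the Validity theorems, and the Functionality lemmas must already be in place, and it is also why the present lemma is not circular --- the induction hypothesis is applied only to strictly smaller derivations of judgments of the same shape, and every auxiliary result it uses was proved earlier. A secondary, purely bookkeeping concern is that several clauses feed into the proofs of others (the record-constructor inversions (d),(e) are used inside the record-destructor inversions (q)--(s), and similarly for collections and references), so the clauses genuinely form one simultaneous induction rather than a sequence of independent ones; I would note that the cross-references are acyclic with respect to the induction metric, so no ordering subtlety arises. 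Once this scaffolding is fixed, each individual clause reduces to routine applications of transitivity, congruence, and the cited lemmas.
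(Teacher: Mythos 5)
Your proposal is correct and follows essentially the same route as the paper: the paper proves the lemma by induction on the given typing/kinding derivation, with the syntax-directed cases immediate from the rule premises and the \textsc{(conv)}/\textsc{(K-sub)} cases handled exactly as you describe --- apply the induction hypothesis, invert the trailing equality or subkinding into component-wise facts via the Injectivity lemmas, and recombine with context conversion and transitivity (its one displayed case, clause (a) ending in subsumption, is precisely this argument). Your additional remarks on functionality under substitution and on the acyclicity of the cross-references between clauses are consistent with, and slightly more explicit than, the paper's terse treatment.
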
 

\begin{proof}
  By induction on the structure of the given typing or kinding derivation, using
  validity.
  {\bf (a)}
  \begin{description}
\item[Case:] $\inferrule[]
    {\Ga \vdash \lambda t:K.T :: K'' \quad \Ga \vdash K'' \leq K'}
    {\Ga \vdash \lambda t:K.T :: K'}$
    
    \begin{tabbing}
      $\Ga \vdash K'' \equiv \Pi t{:}K_1'.K_2'$,
      $\Ga \vdash K \equiv K_1'$ and $\Ga,t:K_1' \vdash T :: K_2'$ \` by i.h.\\
      $\Ga \vdash K'' \leq \Pi t:K_1.K_2$, for some $K_1$, $K_2$ with
      $\Ga \vdash K_1' \leq K_1$ and $\Ga,t:K_1' \vdash K_2' \leq K_2$
      \\\` by inversion\\
      $\Ga \vdash K_1' \equiv K_1$ and $\Ga , t:K_1' \vdash K_2' \equiv K_2$ \` by inversion\\
      $\Ga  ,t:K_1 \vdash T :: K_2'$ \` by ctxt. conversion\\
      $\Ga  , t:K_1 \vdash T :: K_2$ \` by conversion\\
      $\Ga \vdash K \equiv K_1$ \` by transitivity
    \end{tabbing}
  \end{description}

Other cases follow by similar reasoning (or are immediate).
  
\end{proof}

Below we do not list the (very) extensive list of all inversions. They follow the same
pattern of the kinding inversion principle.

\begin{lemma}[Equality Inversion]
\label{lem:eqinv}~
  \begin{enumerate}
  \item If $\Ga \models T \equiv \lambda t:K_1.T_2 :: K'$ then
    $\Ga \models T \equiv \lambda t:K_0.T_2' :: \Pi t:K_0.K''$ with
    $\Ga \vdash K_0 \equiv K_1$ and $\Ga , t:K_0 \models T_2 \equiv T_2' :: K''$, for some $K''$.

  \item If $\Ga \models T \equiv T_0\,S_0 :: K$ then 
 $\Ga \models T \equiv T_1\,S_1 :: K$ with $\Ga \models
T_1 \equiv T_0 :: \Pi t:K_1.K_0$,$\Ga S_1 \equiv S_0 :: K_1$
and $K = K_0\{S_1/t\}$.

  \item If $\Ga \models T \equiv \treccons{L}{T}{S} :: K$ then
$\Ga \models T \equiv \treccons{L'}{T'}{S'} :: K$ with
$\Ga \models L \equiv L' :: \kname$, 
$\Ga \models T' \equiv T :: \smallkind$, 
$\Ga \models S' \equiv S :: \kref{t:\krecord}{L\not\in t}$ and
$K = \krecord$.

\item If $\Ga \vdash K \leq \kref{t:\smallkind}{\varphi}$
then $\Ga \vdash K \leq \kref{t:\smallkind'}{\psi}$
with $\Ga \vdash \smallkind \leq \smallkind'$ and
$\Ga \vdash \varphi \Rightarrow \psi$

\item If $\Ga \models T \equiv \dom{T_0} :: K$ then $\Ga \models T
  \equiv \dom{T_1} :: \ktype$ with $\Ga \models T_0 \equiv T_1 ::
  \kfunction$ and $K = \ktype$
   \end{enumerate}

\end{lemma}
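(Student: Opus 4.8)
The plan is to derive all five clauses from the type-directed, weak-head-normalization-based algorithmic formulation of type equality in the style of Stone and Harper~\cite{DBLP:journals/tocl/StoneH06,DBLP:conf/popl/StoneH00,Pierce:2004:ATT:1076265}, rather than by a naive induction on the declarative equality derivation. Such a naive induction fails at the transitivity rule --- the intermediate type need not be weak-head normal --- and the extensionality principles ({\sc (eq-funext)}, {\sc (r-eqelim)}, {\sc (eq-funapp)}) preclude rescuing it with a plain Church--Rosser argument. The needed infrastructure is already in place: type reduction $T \rightarrow T'$ is confluent and strongly normalizing (as for $F_\omega$); each reduction rule is a computational equality rule oriented left-to-right, so $\Ga \vdash T :: K$ together with $T \rightarrow^{*} T'$ gives $\Ga \models T \equiv T' :: K$; and kinding is preserved by reduction (Lemma~\ref{cor:kindpres}). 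The auxiliary result to establish first is \emph{completeness of algorithmic equality}: whenever $\Ga \models T \equiv S :: K$, the kind-directed algorithm that weak-head-normalizes both sides, compares their heads, recurses on the subterms at the appropriate kinds, and discharges comparisons at basic kinds via entailment, succeeds. This is the standard Kripke logical-relations argument over contexts for singleton-kinds systems, adapted to thread the refinement side-conditions, which behave well under substitution, weakening and functionality by Postulate~\ref{prop:refinementprops} and rule {\sc (entails)}.

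Granting completeness, clauses 1, 2, 3 and 5 follow uniformly. In each, by validity for equality (Theorem~\ref{lem:valideq}) both sides are well kinded at $K$, and by kinding inversion (Lemma~\ref{lem:invers}) $K$ normalizes to the relevant canonical kind (e.g.\ $\Pi t{:}K_a.K_b$ for clause 1, $\ktype$ for clause 5). Running the algorithm at that kind weak-head-normalizes both the right-hand side and $T$; since the resulting normal forms must have matching heads, one reads off that $T \rightarrow^{*} T'$ where $T'$ has the stated shape, while the recursive calls of the algorithm supply the structural data: $\Ga \vdash K_0 \equiv K_1$ and $\Ga, t{:}K_0 \vdash T_2 \equiv T_2' :: K''$ for clause 1, and (via the head-comparison phase together with Lemma~\ref{lem:invers}) injectivity of application, of the record constructor, and of $\dom{-}$ at a neutral head for clauses 2, 3 and 5 --- the latter modulo a minor case split according to whether the $\dom{-}$ on the right is stuck or fires, in which case $T_1$ is reconstructed as $\dom{T_0}\rightarrow\codom{T_0}$. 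Converting $T \rightarrow^{*} T'$ back to declarative equality (via Lemma~\ref{cor:kindpres} and the computational equality rules) and composing with the structural data yields the stated conclusion.

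Clause 4 concerns subkinding rather than type equality and is handled separately, by inversion on the derivation of $\Ga \vdash K \leq \kref{t{::}\smallkind}{\varphi}$: the only subkinding rule whose conclusion has a refinement kind on the right is {\sc (sub-ref)}, so $K$ is itself a refinement kind and its base kind and refinement stand in the claimed subkinding/implication relation to $\smallkind$ and $\varphi$; when the judgment is obtained through a chain (via transitivity of $\leq$, or via a kind equality unfolded into $\leq$ in both directions), one applies the induction hypothesis link-by-link and composes using transitivity of base-kind subkinding and of implication in the refinement logic, appealing to the $\Pi$-injectivity-via-subkinding lemma to exclude $\Pi$-kinds in intermediate positions.

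The main obstacle is squarely the completeness-of-algorithmic-equality step underpinning clauses 1--3 and 5: it is what forces the detour through a normalizing, kind-directed algorithm, and hence through the Kripke logical relation, and it is where one must verify that the refinement entailment obligations survive substitution, weakening and functionality (Postulate~\ref{prop:refinementprops}, Lemma~\ref{lem:funckind}). Once that machinery is in place --- essentially the singleton-kinds development the paper builds on, extended with refinement-logic bookkeeping --- the inversion principles themselves, and the longer list of companion inversions alluded to just before the lemma, are routine.
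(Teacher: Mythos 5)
Your proposal reaches the right conclusions but by a genuinely different route from the paper. You derive clauses 1--3 and 5 from completeness of a kind-directed, weak-head-normalizing algorithmic equality (the Stone--Harper logical-relations development), whereas the paper's Appendix proof of Lemma~\ref{lem:eqinv} is a direct induction on the declarative equality derivation, ``relying on validity, reflexivity, substitution, context conversion and inversion.'' Its two displayed cases are transitivity --- where the induction hypothesis is applied to the premise that still carries the constructor on its right-hand side, and the result is then pre-composed with the other premise by transitivity --- and the $\beta$-rule {\sc (eq-funapp)}, where the constructor shape of the substitution instance is recovered via kinding inversion and reflexivity. Because each inversion clause is deliberately one-sided (``$T$ equal to something of shape $C[\vec A]$'') and hence stable under pre-composition with arbitrary equalities on the left, your claim that the naive induction \emph{fails} at transitivity is overstated for this formulation; the genuinely delicate spots of the direct induction are rather symmetry (the constructor migrates to the left-hand side), {\sc (r-eqelim)}, extensionality, and the $\beta$/fixpoint-unfolding rules --- precisely where the paper's sketch is thinnest (its {\sc (eq-funapp)} case quietly assumes the record constructor already occurs in the function body rather than arriving wholesale through the substituted argument). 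What your detour buys is robustness and uniformity: completeness of algorithmic equality dispatches all of these cases at once and is in any event needed for the decidability claims of Section~\ref{sec:eqreasoning}; what it costs is the full confluence/normalization and Kripke logical-relations apparatus, which the paper only cites rather than develops. Clause 4 is treated essentially identically in both: inversion on the subkinding derivation, composing {\sc (sub-ref)} links through transitivity of base-kind subkinding and of refinement implication.
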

\begin{proof}
  By induction on the given equality derivations, relying on validity, reflexivity, substitution, context conversion and inversion. We show two illustrative cases.

  \begin{description}
  \item[Case:] Transitivity rule
\begin{tabbing}
    $\Ga \models T \equiv S' :: K$ and
    $\Ga \models S' \equiv \treccons{L}{U}{W}:: K$ \` assumption\\
    $\Ga \models S' \equiv \treccons{L'}{U'}{W'}:: K$ with
    $\Ga \models L \equiv L :: \kname$, \\$\Ga \models U \equiv U' ::
    \smallkind$ and $\Ga \models W \equiv W' ::
    \kref{t:\krecord}{L\not\in t}$ and $K = \krecord$ \` by i.h.\\
 
  $\Ga \models T\equiv \treccons{L'}{U'}{W'}  :: K$ \` by transitivity\\
\end{tabbing}
  \end{description}

\item[Case:] $
  \inferrule[]
  {\Ga , t{:}K_0 \vdash T_1 :: K'
   \quad \Ga \vdash T_2 :: K_0}
 {\Ga \models (\lambda t{:}K_0.T_1)\,T_2
   \equiv T_1\{T_2/t\} :: K'\{T_2/t\} }$

  \begin{tabbing}
    $\Ga , t{:}K_0 \vdash T_1 :: K'$, $\Ga \vdash T_2 :: K_0$
    and $\treccons{L}{U}{W} =  T_1\{T_2/t\}$ and
    $K = K'\{T_2/t\}$ \` assumption\\

 $T_1 = T_0$ such that 
    $T_0\{T_2 /t\}= \treccons{L}{U}{W} $\\
    $\Ga , t:K_0 \vdash T_0 :: K'$ \` assumption\\
    $\Ga \vdash T_0\{T_2 /t\} :: K'\{T_2/t\}$ \` by substitution\\
    $\Ga \vdash \treccons{L}{U}{W} :: K'\{T_2/t\}$ \` by definition\\
    $\Ga \vdash L :: \kname$, $\Ga \vdash U :: \ktype$, $\Ga \vdash W
    :: \kref{t{:}\krecord}{L \not\in \lb{t}}$ \` by inversion\\
$\Ga \vdash \treccons{L}{U}{W} \equiv \treccons{L}{U}{W} :: K'\{T_2/t\}$ \` by reflexivity\\
  \end{tabbing}
  
\end{proof}

 \begin{lemma}[Subkinding Inversion]
~
   \begin{enumerate}

   \item If $\Ga \vdash \smallkind \leq \smallkind'$ then
     $\Ga \vdash \smallkind \equiv \smallkind'$ or
     $\Ga \vdash \smallkind' \equiv \ktype$.

   \item If $\Ga \vdash K \leq \kref{t{:}\smallkind'}{\varphi}$ then
     $\Ga \vdash K \equiv \kref{t{:}\smallkind}{\psi}$ with
     $\Ga \vdash \smallkind \leq \smallkind'$ and $\Ga \models \psi \Rightarrow \varphi$.

   \item If $\Ga \vdash \kref{t{:}\smallkind'}{\varphi} \leq \smallkind$
     then $\Ga \vdash \smallkind \leq \smallkind$ and $\Ga ,t{:}\smallkind'\vdash \varphi$.
   \end{enumerate}
 \end{lemma}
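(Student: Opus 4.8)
The plan is to prove all three statements simultaneously by induction on the structure of the derivation of the given subkinding judgment, inspecting the last rule applied in each case. A simultaneous induction is forced on us because a subderivation may switch between the three forms of subkinding under consideration ($\smallkind \leq \smallkind'$, $K \leq \kref{t{:}\smallkind'}{\varphi}$, and $\kref{t{:}\smallkind'}{\varphi} \leq \smallkind$) — most visibly through a transitivity step — so the induction measure is the size of the subkinding derivation, and when invoking an induction hypothesis one appeals to a strictly smaller subderivation, possibly of a different one of the three statements.

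For statement (1), a subkinding judgment whose conclusion relates two basic kinds can only arise from (\textsc{sub-type}), which forces $\smallkind' = \ktype$ and lets us discharge $\Ga \vdash \smallkind' \equiv \ktype$ via two uses of (\textsc{sub-type}); or from reflexivity, which gives $\Ga \vdash \smallkind \equiv \smallkind'$ directly; or from a transitivity step $\Ga \vdash \smallkind \leq K''$ and $\Ga \vdash K'' \leq \smallkind'$, in which case I would first observe, by inspection of the subkinding rules together with validity (Theorem~\ref{lem:valid}), that $K''$ must itself be a basic kind, then apply the induction hypothesis to each premise and conclude by transitivity of $\equiv$ and the fact that the only basic kind strictly above another basic kind is $\ktype$. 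Statements (2) and (3) follow the same template: for (2) the only structural rule whose conclusion has a refinement kind on the right is (\textsc{sub-ref}), which immediately yields $K = \kref{t{:}\smallkind}{\psi}$ with $\Ga \vdash \smallkind \leq \smallkind'$ and $\Ga, t{:}\smallkind' \models \psi \Rightarrow \varphi$; the reflexivity and transitivity cases use the induction hypothesis (in the transitivity case the intermediate kind is again seen to be a refinement kind, so the IH of (2) applies twice and the basic subkindings compose by the IH of (1), the implications by transitivity of $\Rightarrow$). For (3) the relevant rule is (\textsc{sub-refkind}), giving the stated subkinding between basic kinds and, from its premise, $\Ga, t{:}\smallkind' \vdash \varphi$; the transitivity case combines the IH of (2), to analyse the refinement kind on the left against the intermediate kind, with the IH of (3). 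Throughout, well-formedness side conditions ($\Ga \vdash$, $\Ga \vdash \smallkind$, etc.) are discharged by validity (Theorem~\ref{lem:valid}) and equality validity (Theorem~\ref{lem:valideq}), and the conclusions are packaged in the stated $\equiv$ form using reflexivity of kind equality (itself obtained from the relevant subkinding rules).

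The main obstacle I anticipate is precisely the transitivity step: establishing that the intermediate kind $K''$ in a derivation $\Ga \vdash K \leq K''$ and $\Ga \vdash K'' \leq K'$ has the right shape (basic when both endpoints are basic, a refinement kind when $K'$ is one) so that the induction hypotheses can be chained, and checking that this recursion is well-founded across the three mutually dependent statements. If, alternatively, the subkinding relation is taken in its transitivity-free (algorithmic) presentation, this obstacle vanishes and the proof reduces to a routine case analysis over the handful of structural rules; in that setting a separate admissibility-of-transitivity lemma — which itself relies on these inversions — would be needed, and I would simply record that dependency.
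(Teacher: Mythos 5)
Your proposal is correct and follows essentially the same route as the paper, whose entire proof is the one-liner ``by induction on the given derivation, using equality inversion''; your case analysis on the last rule applied is exactly what that induction unfolds to. The only divergence is your careful treatment of an explicit transitivity rule, which the paper's listed subkinding rules do not actually include (the relation is presented in the transitivity-free style you mention in your closing remark), so the argument collapses to the routine structural case analysis you describe there.
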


 \begin{proof}
By induction on the given derivation, using equality inversion.
\end{proof}

\begin{lemma}
\label{lem:upsidedown}
  If $\Ga \models T \equiv S :: K$,
  $\Ga \vdash T :: K'$ and $\Ga \vdash S :: K'$ and
  $\Ga \vdash K' \leq K$ then $\Ga \vdash T \equiv S :: K'$.
\end{lemma}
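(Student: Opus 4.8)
The plan is to argue by case analysis on the shape of $K'$, using the subkinding inversion lemmas to constrain the shape of $K$ from $\Ga \vdash K' \leq K$, and carrying the subkinding premise as an induction parameter so it can be decomposed across transitive steps. By Validity for Equality (Theorem~\ref{lem:valideq}) we may assume $\Ga \vdash T :: K$, $\Ga \vdash S :: K$, $\Ga \vdash K$ and $\Ga \vdash K'$ throughout. Since $K'$ is a kind, it is either a basic kind $\smallkind$, a refinement kind $\kref{t::\smallkind}{\varphi}$, or a product $\Pi t:K_1.K_2$, and I treat these in turn.

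When $K' = \kref{t::\smallkind}{\varphi}$, the plan is to reduce to the base kind. By kinding inversion (Lemma~\ref{lem:invers}(f)), $\Ga \vdash T :: \kref{t::\smallkind}{\varphi}$ and $\Ga \vdash S :: \kref{t::\smallkind}{\varphi}$ give $\Ga \vdash T :: \smallkind$, $\Ga \vdash S :: \smallkind$ and $\Ga \models \varphi\{T/t\}$, while subkinding inversion yields $\Ga \vdash \smallkind \leq K$, possibly after one more refinement-stripping step. The induction hypothesis at $\smallkind$ then gives $\Ga \vdash T \equiv S :: \smallkind$, and re-applying rule (kref) closes the case: the side refinement $\varphi\{T/t\}$ is already in hand, $\Ga, t:\smallkind \vdash \varphi$ holds by inversion, and its equivalent for $S$, if needed, follows from the Functionality clause of Postulate~\ref{prop:refinementprops}. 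The key observation is that equality at a refinement kind reduces exactly to equality at its basic part together with the refinement being valid, so nothing is lost going down.

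When $K' = \Pi t:K_1.K_2$, the plan is to use the extensionality rule (eq-funext): it suffices to show $\Ga, t:K_1 \vdash T\,t \equiv S\,t :: K_2$. From $\Ga \vdash T :: \Pi t:K_1.K_2$ and $\Ga \vdash S :: \Pi t:K_1.K_2$ we obtain $\Ga, t:K_1 \vdash T\,t :: K_2$ and $\Ga, t:K_1 \vdash S\,t :: K_2$ by (k-app); and from $\Ga \models T \equiv S :: K$, congruence of application applied at the domain, and the induction hypothesis over the codomain kind (transported along the contravariant $\Pi$-subkinding via context conversion, Lemma~\ref{lem:ctxtconv}) we obtain the required $\Ga, t:K_1 \vdash T\,t \equiv S\,t :: K_2$. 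The only subtlety is the contravariance bookkeeping: the bound variable must be introduced at the right domain kind before being moved.

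The main obstacle I expect is the basic-kind case, $K' = \smallkind$, where subkinding inversion forces $K$ to be either $\smallkind$ (trivial, the hypothesis is the conclusion) or $\ktype$. In the latter situation we must narrow an equality obtained at the top kind $\ktype$ to the specific kind $\smallkind$ that $T$ and $S$ actually inhabit. The plan is to case-split on the head form of $T$ (equivalently $S$, since they are equal), apply the matching clause of the Equality Inversion lemma (Lemma~\ref{lem:eqinv}) to recover $\Ga \models T \equiv S :: K_0$ at the natural kind $K_0$ of that head constructor, and then use Unicity of Types and Kinds (Theorem~\ref{thm:unicity}) to compare $K_0$ with $\smallkind$, both being kinds of $T$: if $K_0 \leq \smallkind$ we upcast the equality using the $\leq$-closure rule for equality; otherwise $\smallkind \leq K_0$ and we recurse. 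Ensuring this auxiliary recursion is well-founded, and discharging the refinement-stripping side steps that arise in the transitive and refinement cases, is the fiddly part of the whole argument.
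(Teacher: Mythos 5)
Your strategy --- case analysis on the structure of $K'$ --- departs from the paper, which proves this lemma by induction on the derivation of $\Ga \models T \equiv S :: K$, so the case split there is over the last equality rule used, not over the shape of the target kind. That by itself would be acceptable, but as written your argument has a concrete gap in the refinement case. Having obtained $\Ga \vdash T \equiv S :: \smallkind$ from the induction hypothesis, you close the case by ``re-applying rule (kref)''. But \textsc{(kref)} is a \emph{kinding} rule: it yields $\Ga \vdash T :: \kref{t{::}\smallkind}{\varphi}$, not the equality judgment $\Ga \vdash T \equiv S :: \kref{t{::}\smallkind}{\varphi}$. Inspecting the equality rules, the only rule that changes the kind of an already-derived equality is the subsumption rule, and it only moves \emph{up} the subkinding order (from $K$ to $K'$ with $K \leq K'$), never down into a refinement. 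So the step from equality at $\smallkind$ to equality at $\kref{t{::}\smallkind}{\varphi}$ is itself an instance of the lemma you are proving (with $K = \smallkind$ and $K' = \kref{t{::}\smallkind}{\varphi}$); your ``key observation'' that nothing is lost going down is the claim to be established, not a licence to go back up.

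The basic-kind case is also thinner than it needs to be. When $K$ is forced to be $\ktype$ and $K' = \smallkind$, you propose to case-split on the head form of $T$ and invoke Equality Inversion; but Lemma~\ref{lem:eqinv} covers only a handful of head constructors, and $T$ may equally well be a variable, a destructor application such as $\hdtype{T_0}$ or $\dom{T_0}$, a type-level application, a property test, a kind case, or $\bot$ --- none of which your analysis addresses, and for several of which no inversion clause exists. The auxiliary recursion you set up (recursing when $\smallkind \leq K_0$) also has no stated termination measure. Induction on the equality derivation sidesteps both problems: each equality rule records the kind at which it concludes, so the inductive hypothesis is available exactly at the points where the derivation changes kind, and no head-form analysis of $T$ is required. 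I would restructure the proof along those lines, or, if you keep your decomposition, first isolate and prove the statement that $\Ga \vdash T \equiv S :: \smallkind$ together with $\Ga \vdash T :: \kref{t{::}\smallkind}{\varphi}$ and $\Ga \vdash S :: \kref{t{::}\smallkind}{\varphi}$ entails $\Ga \vdash T \equiv S :: \kref{t{::}\smallkind}{\varphi}$ --- that is where the real content of the lemma lies.
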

\begin{proof} By induction on the given equality derivation.\end{proof}

\unicity*
\begin{proof}
By induction on the structure of the given type/term.

\begin{description}

\item[Case:] $M$ is $\reccons{\ell}{M'}{N'}$

  \begin{tabbing}
    $\Ga \vdash \reccons{\ell}{M'}{N'} : T$ and $\Ga \vdash \reccons{\ell}{M'}{N'} : S$ \` assumption\\
    $\Ga \vdash M' : T_1$, $\Ga \vdash N' : T_2$,
    $\Ga \vdash \ell \equiv L' :: \kname$, $\Ga \vdash
    \reccons{L'}{T_1}{T_2} :: \krecord$\\ and
    $\Ga \models T \equiv \reccons{L'}{T_1}{T_2} :: \krecord$ \` inversion\\
    $\Ga \vdash M' : S_1$, $\Ga \vdash N' : S_2$,
    $\Ga \vdash \ell \equiv L'' :: \kname$, $\Ga \vdash
    \reccons{L''}{S_1}{S_2} :: \krecord$\\ and
    $\Ga \models S \equiv \reccons{L''}{S_1}{S_2} :: \krecord$ \` inversion\\
    $\Ga \models T_1 \equiv S_1 :: K_1$ and $\Ga \vdash K_1 \leq \ktype$ \` by i.h.\\
    $\Ga \models T_1 \equiv S_1 :: \ktype$ \` by conversion\\ 
    $\Ga \models T_2 \equiv S_2 :: K_2$ and $\Ga \vdash K_2 \leq \ktype$ \` by i.h.\\
    $\Ga \vdash T_1 :: \krecord$ and $\Ga \vdash T_2 :: \krecord$ \` by inversion and conversion\\
    $\Ga \models T_2 \equiv S_2 :: \krecord$ \` by Lemma~\ref{lem:upsidedown}
    
  \end{tabbing}

  \item[Case:] $T$ is $\reccons{L}{S_1}{S_2}$

    \begin{tabbing}
      $\Ga \vdash \reccons{L}{S_1}{S_2} :: K$ and
      $\Ga \vdash \reccons{L}{S_1}{S_2} :: K'$ \` assumption\\
      $\Ga \vdash L :: \kname$, $\Ga \vdash S_1 :: \ktype$,
      $\Ga \vdash S_2 :: \kref{t{:}\krecord}{K\not\in t}$ and $\Ga \vdash K \equiv \krecord$ \` by inversion\\
      $\Ga \vdash L :: \kname$, $\Ga \vdash S_1 :: \ktype$,
      $\Ga \vdash S_2 :: \kref{t{:}\krecord}{K\not\in t}$ and $\Ga \vdash K' \equiv \krecord$ \` by inversion\\
      $\Ga \vdash \krecord \leq \krecord$ \` by reflexivity
    \end{tabbing}

  \item[Case:] $M$ is $\ite{\varphi}{M'}{N'}$

    \begin{tabbing}
      $\Ga \vdash \ite{\varphi}{M'}{N'} : T$ and
      $\Ga \vdash \ite{\varphi}{M'}{N'} : S$ \` assumption\\
      $\Ga , \varphi \vdash M' : T_1$, $\Ga , \neg\varphi \vdash N' : T_2$
      and $\Ga \models T \equiv \ite{\varphi}{T_1}{T_2}$ \` by inversion\\
      $\Ga , \varphi \vdash M' : S_1$, $\Ga , \neg\varphi \vdash N' : S_2$
      and $\Ga \models S \equiv \ite{\varphi}{S_1}{S_2}$ \` by inversion\\
      $\Ga , \varphi \models T_1 \equiv S_1 :: K_1$ with $\Ga \vdash K_1 \leq \ktype$ \` by i.h.\\
      $\Ga , \neg\varphi \models T_2 \equiv S_2 :: K_2$ with
      $\Ga \vdash K_2 \leq \ktype$ \` by i.h.\\
      $\Ga \models \ite{\varphi}{T_1}{T_2} \equiv
      \ite{\varphi}{S_1}{S_2} :: \ktype$ \` by rule

    \end{tabbing}

  \item[Case:] $M$ is $\tkindofP{T'}{\smallkind}{M'}{N'}$

    \begin{tabbing}
      $\Ga \vdash \tkindofP{T'}{\smallkind}{M'}{N'} :T$ and
      $\Ga\vdash \tkindofP{T'}{\smallkind}{M'}{N'} :S$ \` assumption\\
      $\Ga \vdash T' :: \smallkind'$, $\Ga \vdash \smallkind$,
      $\Ga , t:\smallkind \vdash M' : T$ and
      $\Ga \vdash N' : T$ \` by inversion\\
      $\Ga \vdash T' :: \smallkind'$, $\Ga \vdash \smallkind$,
      $\Ga , t:\smallkind \vdash M' : S$ and
      $\Ga \vdash N' : S$ \` by inversion\\
      $\Ga \models T \equiv S :: K$ with $\Ga \vdash K \leq \ktype$ \` by i.h.
    \end{tabbing}

  \item[Case:] $T$ is $\tkindofP{T'}{\smallkind}{S_1}{S_2}$

    \begin{tabbing}
      $\Ga \vdash \tkindofP{T'}{\smallkind}{S_1}{S_2} :: K$ and
      $\Ga \vdash \tkindofP{T'}{\smallkind}{S_1}{S_2} :: K'$ \` assumption\\
      $\Ga \vdash T' :: \smallkind'$, $\Ga \vdash \smallkind$,
      $\Ga , t:\smallkind \vdash S_1 :: K$ and
      $\Ga \vdash S_2 :: K$ \` by inversion\\
      $\Ga \vdash T' :: \smallkind'$, $\Ga \vdash \smallkind$,
      $\Ga , t:\smallkind \vdash S_1 :: K'$ and
      $\Ga \vdash S_2 :: K'$ \` by inversion\\
      $\Ga \vdash K \leq K'$ or $\Ga \vdash K' \leq K$ \` by i.h.
    \end{tabbing}

      \item[Case:] $M$ is $\vrec{F}{T}{M'}$

    \begin{tabbing}
      $\Ga \vdash \vrec{F}{T}{M'} : T$ and
      $\Ga \vdash \vrec{F}{T}{M'} : S$ \` assumption\\
      $\Ga \models T \equiv T' :: \smallkind$ and
      $\Ga , F:T \vdash M' : T'$ \` by inversion\\
      $\Ga \models S \equiv S' :: \smallkind'$ and
      $\Ga , F:T \vdash M' : S'$ \` by inversion\\
      $\Ga , F:T \models T' \equiv S' :: K$ with $\Ga \vdash K \leq \ktype$ \` by i.h.\\
      $\Ga , F:T \models T \equiv T' :: \smallkind$ and
      $\Ga , F:T \models S \equiv S' :: \smallkind'$ \` by weakening\\
      $\Ga , F:T \models T \equiv S :: \ktype$ \` by transitivity and conversion\\
      $\Ga \models T \equiv S :: \ktype$ \` by strengthening
    \end{tabbing}

  \item[Case:] $T$ is $\tfix{F}{K}{K'}{t}T'$

    \begin{tabbing}
      $\Ga \vdash \tfix{F}{K_1}{K_2}{t}T' :: K$ and
      $\Ga \vdash \tfix{F}{K_1}{K_2}{t}T' :: K'$ \` assumption\\
      $\Ga, F:\Pi t:K_1.K_2 , t:K_1  \vdash T' :: K_2$,
      $\m{structural}(T',F,t)$ and $\Ga \vdash K \equiv \Pi t:K_1.K_2$
      \` by inversion\\
      $\Ga, F:\Pi t:K_1.K_2 , t:K_1  \vdash T' :: K_2$,
      $\m{structural}(T',F,t)$ and $\Ga \vdash K' \equiv \Pi t:K_1.K_2$ \` by inversion\\
      $\Ga \vdash K \leq K'$ \` by transitivity \\
    \end{tabbing}

  \item[Case:] $M$ is $\tmhdterm{M'}$

    \begin{tabbing}
      $\Ga \vdash \tmhdterm{M'} : T$ and
      $\Ga \vdash \tmhdterm{M'} : S$ \` assumption\\
      $\Ga \vdash M' : \treccons{L}{U}{W}$, and
      $\Ga \models T \equiv U :: \ktype$\\ \` by inversion\\
      $\Ga \vdash M' : \treccons{L'}{U'}{W'}$,  and
      $\Ga \models S \equiv U' :: \ktype$ \\\` by inversion\\
      $\Ga \models \treccons{L}{U}{W}\equiv \treccons{L'}{U'}{W'} :: K$ with $K \leq \ktype$ \` by i.h.\\

      $\Ga \models U \equiv U' :: \ktype$ \` by inversion\\
      $\Ga \models T \equiv U \equiv U' \equiv S$ \` by transitivity
      and symmetry

    \end{tabbing}
    
  \item[Case:] $T$ is $\hdtype{T'}$

    \begin{tabbing}
      Identical to above.
    \end{tabbing}

  \item[Case:] $T$ is $\inst{T_1}{\,T_2}$

    \begin{tabbing}
      $\Ga \vdash \inst{T_1}{\,T_2} :: K$ and $\Ga \vdash \inst{T_1}{\,T_2} :: K'$ \` assumption\\
      $\Ga \vdash T_1 :: \kpolyfun_\smallkind$,
      $\Ga \vdash T_2 :: \smallkind$ and $\Ga \vdash K \equiv \ktype$ \` by inversion\\
      $\Ga \vdash T_1 :: \kpolyfun_{\smallkind'}$,
      $\Ga \vdash T_2 :: \smallkind'$ and $\Ga \vdash K' \equiv \ktype$ \` by inversion\\
      $\Ga \vdash K \leq K'$ \` since $\Ga \vdash \ktype \leq \ktype$
    \end{tabbing}
\end{description}
\end{proof}

\preserv*
\begin{proof}
  By induction on the operational semantics and inversion on typing.
  We show the most significant cases.
  \begin{description}

  \item[Case:]$
      \inferrule[]{T_0\rightarrow T_0'}
    {\stateconfH{(\Lambda t {::} K. \, M) [T_0]} \red
    \stateconfH{(\Lambda t {::} K. \, M) [T_0']} }$

  \begin{tabbing}
    $\Ga \vdash T \equiv U :: \smallkind$ where
    $\Ga \vdash \Lambda t {::} K. \, M :T_1$, $\Ga \vdash T_0 :: K$,
    $\Ga \vdash U :: \smallkind$,\\
    $\Ga \vdash T_1 \equiv \forall t{::}K.S :: \kpolyfun_K$ and $\Ga , t{::}K \vdash S :: \smallkind$ \` by inversion\\
    $\Ga \vdash S\{T_0/t\} :: \smallkind$ \` by substitution\\
    $\Ga \vdash T_0 \equiv T_0' :: K$ \` by definition\\
    $\Ga \vdash S\{T_0/t\} \equiv  S\{T_0'/t\} :: \smallkind$ \` by functionality\\
  $\Ga \vdash U \equiv S\{T_0/t\} :: \smallkind$ \` by transitivity\\
    $\Ga \vdash U \equiv S\{T_0'/t\} :: \smallkind$ \` by transitivity\\
    $\Ga \vdash (\Lambda t {::} K. \, M) [T_0'] : S\{T_0'/t\}$ \` by typing\\
    $\Ga \vdash (\Lambda t {::} K. \, M) [T_0'] : U$ \` by conversion\\
    
  \end{tabbing}
    
  \item[Case:]
    $
 \inferrule[]
    { \stateconf{H}{M} \red \stateconf{H'}{M'} }
    { \stateconf{H}{ \reccons{\ell }{M}{N} } \red
    \stateconf{H'}{\reccons{\ell }{M'}{N}} }
  $

  \begin{tabbing}
    $\Ga \vdash_S T \equiv \treccons{L}{T'}{T''}$,
    $\Ga \vdash_S \ell \equiv L :: \kname$,
    $\Ga \vdash_S M : T'$ and $\Ga \vdash_S N : T''$ \` by inversion\\
    $\exists S'$ such that $S \subseteq S'$, $\Ga \vdash_{S'} H'$ and
    $\Ga \vdash_{S'} M' : T'$ \` by i.h.\\
    $\Ga \vdash_{S'}\reccons{\ell }{M'}{N} : \treccons{L}{T'}{T''}$ \` by RecCons rule
  \end{tabbing}

\item[Case:]
  $
   \inferrule*[]
    {\stateconf{H}{M} \red \stateconf{H'}{M'} }
    {\stateconf{H}{\tmhdterm{M}} \red \stateconf{H'}{\tmhdterm{M'}}  }
  $

  \begin{tabbing}
    $\Ga \vdash_S T \equiv T'\{S'/t\} :: \smallkind\{S'/t\}$,
    $\Ga \vdash_S M : S'$ and\\
    $\Ga \vdash S' :: \kref{t{:}\krecord}{\hdtype{t} \equiv T' :: \smallkind}$ \` by inversion\\
    $\exists S_0$ such that $S \subseteq S_0$,
    $\Ga \vdash_{S_0} H'$ and $\Ga \vdash_{S_0} M' : S'$ \` by i.h.\\
    $\Ga \vdash_{S_0} \tmhdterm{M'} : T'\{S'/t\}$ \` by typing rule
  \end{tabbing}

\item[Case:]
  $
\stateconf{H}{\tmhdterm{\reccons{\ell}{v}{v'}}} \red \stateconf{H}{v} 
  $

  \begin{tabbing}
    $\Ga \vdash_S \tmhdterm{\reccons{\ell}{v}{v'}} : T'$ and
    $\Ga \vdash_s v : T'$ \` by inversion\\
    
  \end{tabbing}

\item[Case:]
  $\inferrule[]{ \cproprestrict{ \Gamma } \vDash \varphi }{ \stateconfH{\ite{ \varphi }{M}{N} }  \red \stateconfH{M}} $

  \begin{tabbing}
   $\Ga \models T \equiv \ite{\varphi}{T_1}{T_2} :: K$ with
   $\Ga , \varphi \vdash_S M : T_1$ and $\Ga , \neg\varphi \vdash_S N : T_2$ \` by inversion\\
   $\Ga \models \ite{\varphi}{T_1}{T_2} \equiv T_1 :: K$ \` by eq. rule\\
   $\Ga \models T \equiv T_1 :: K$  \` by transitivity\\
   $\Ga \vdash_S M : T_1$ \` by cut\\
   
  \end{tabbing}

\item[Case:]
  $ \inferrule[]{ \, }{ \stateconfH{\vrec{F}{T}{M}}
    \red \stateconf{H}{M\{\vrec{F}{T}{M}/{F}\}}}$

  \begin{tabbing}
    $\Ga , F : T \vdash M : T$ and $\m{structural}(F,M)$ \` by inversion\\
    $\Ga \vdash M\{\vrec{F}{T}{M}/{F}\} : T$ \` by substitution
    
  \end{tabbing}

\item[Case:]
  $ \inferrule[]{\Ga \vdash T :: K }{
      \stateconf{H}{ \tkindofP{T'}{K}{M}{N} }
        \red
        \stateconf{H}{M\{T'/t\}}
      }
      $

      \begin{tabbing}
        $\Ga \vdash T' :: K'$, $\Ga \vdash K$, $\Ga , t{:}K \vdash M : T''$ and
        $\Ga \vdash N : T''$ \` by inversion\\
        $\Ga \vdash T :: K$ \` assumption\\
        $\Ga \vdash M\{T'/t\} : T''$ \` by substitution\\
       
      \end{tabbing}

  \end{description}
\end{proof}

\typprog*
\begin{proof}
  Straightforward induction on kinding, relying on the decidability of
  logical entailment.
\end{proof}

\progress*

\begin{proof}
  By induction on typing. Progress relies on type progress and
  on the decidability of logical entailment due to the
term-level and type-level predicate test construct.
\end{proof}


\end{document}